\documentclass[sigconf]{acmart}
\usepackage[a-2b]{pdfx} 
\usepackage[capitalize,noabbrev]{cleveref}
\usepackage[colorinlistoftodos]{todonotes}
\usepackage[font=small]{caption}
\usepackage[font=small]{subcaption}
\usepackage[framemethod=TikZ]{mdframed}
\usepackage[linesnumbered,ruled,vlined]{algorithm2e}
\usepackage{adjustbox}
\usepackage{soul}
\usepackage[title]{appendix}
\usepackage{amsfonts}
\usepackage{amsmath}
\usepackage{array}
\usepackage{appendix}
\usepackage{booktabs}
\usepackage{colortbl} 
\usepackage{comment}
\usepackage{diagbox}
\usepackage{dsfont}
\usepackage{graphicx}
\usepackage{lscape}
\usepackage{listings}
\usepackage{mathtools}
\usepackage{multirow}
\usepackage{paralist}
\usepackage{pgf}
\usepackage{pgfplots}
\usepackage{pgfplotstable}
\pgfplotsset{compat=1.17} 
\usepgfplotslibrary{groupplots}
\usepackage{pifont}
\usepackage{relsize}
\usepackage{rotating}
\usepackage{scalerel}
\usepackage{setspace}
\usepackage{standalone}
\usepackage{tcolorbox}
\usepackage{tikz}
\usepackage{wrapfig}
\usepackage{xcolor}
\usepackage{xspace}
\usetikzlibrary{patterns}
\usepackage{enumitem}

\usepackage[ruled,vlined]{algorithm2e}

\SetKwInOut{Input}{input}
\SetKwInOut{Output}{output}

\setlength\fboxsep{0pt}


\AtBeginDocument{%
  }

\setcopyright{acmlicensed}
\copyrightyear{2026}
\acmYear{2026}
\acmDOI{XXXXXXX.XXXXXXX}

\acmConference[SIGMOD '26]{International Conference on Management of Data}{May 31--June 5, 2026}{Bengaluru, India}
\acmISBN{978-1-4503-XXXX-X/2018/06}

\settopmatter{printfolios=true} 

\Crefname{algocf}{Algorithm}{Algorithms}

\let\oldnl\nl
\newcommand{\nonl}{\renewcommand{\nl}{\let\nl\oldnl}}

\SetCommentSty{mycommfont}
\def\HiLi{\leavevmode\rlap{\hbox to \hsize{\color{red!20}\leaders\hrule height .8\baselineskip depth .5ex\hfill}}}

\DeclareMathOperator*{\argmax}{arg\,max}
\newtheorem{problem}{Problem}
\newtheorem{definition}{Definition}[section]

\newtheorem{proposition}{Proposition}[section]
\newtheorem{example}{Example}[section]

\newcommand{\brit}[1]{\textcolor{teal}{\bf (Brit)~[#1]}{\typeout{#1}}}

\newcommand{\newtext}[1]{\textcolor{black}{#1}{\typeout{#1}}}

\newcommand{\ignore}[1]{}

\newcommand{\cut}[1]{}

\newcommand{\drawbar}[5]{
#1#3 
\resizebox{.08\textwidth}{!}{
\raisebox{-0.2\height}{
\begin{tikzpicture}
\draw[fill=#4!70] (0,0) rectangle (#1, -20); 
\draw[fill=#4!10] (#1, 0) rectangle (100,-20);
\end{tikzpicture}}}
& #2#3 #5
}

\newcommand{\probName}{\textsc{Causal Repair Explanation}\xspace}

\newcommand{\sysName}{\textsc{ExDis}\xspace}
\newcommand{\attrset}{\ensuremath{\mathbb{A}}\xspace}

\newcommand{\dom}{{\tt dom}\xspace}

\newcommand{\db}{\ensuremath{D}\xspace}

\newcommand{\pattern}{\ensuremath{\mathcal{\psi}}\xspace}
\newcommand{\model}{\ensuremath{\mathcal{G}_\db}\xspace}

\newcommand{\mutable}{{\ensuremath{\mathbf M}}\xspace}
\newcommand{\immutable}{{\ensuremath{\mathbf I}\xspace}}
\newcommand{\fact}{disparity explanation\xspace}
\newcommand{\facts}{disparity explanations\xspace}
\newcommand{\disp}{{\ensuremath{\Delta}\xspace}}

\newcommand{\expSim}{\textsc{sim}\xspace}
\newcommand{\score}{disparity score\xspace}

\definecolor{darkgreen}{rgb}{0.0, 0.5, 0.0}
\newcommand{\increase}[1]{
  \ensuremath{
    \tikz[baseline=(n.base)]{
    \node[inner sep=0pt] (n) {\text{\small #1}};
    \draw[->, thick, ForestGreen] 
      ([yshift=-0.6ex]n.south west) -- ([yshift=-0.6ex]n.south east);
  }
  }
}

\newcommand{\decrease}[1]{
  \ensuremath{
    \tikz[baseline=(n.base)]{
    \node[inner sep=0pt] (n) {\text{\small #1}};
    \draw[<-, thick, Red] 
      ([yshift=-0.6ex]n.south west) -- ([yshift=-0.6ex]n.south east);
  }
  }
}
\newcommand{\uparrowbold}{\textcolor{ForestGreen}{$\boldsymbol{\uparrow}$}}
\newcommand{\downarrowbold}{\textcolor{red}{$\boldsymbol{\downarrow}$}}

\tcbuselibrary{breakable}

\definecolor{darkgreen}{RGB}{0, 200, 0}
\definecolor{experimentblue}{RGB}{127, 159, 186}
\definecolor{experimentgreen}{RGB}{127, 186, 130}
\definecolor{experimentred}{RGB}{186, 138, 127}
\definecolor{experimentgray}{RGB}{211,211,211}

\newtcolorbox[auto counter,number within=subsection]{ruleset}[1]{
  breakable, boxrule=0pt, boxsep=3pt,left=0pt,right=0pt,top=0pt,bottom=0pt, colbacktitle=experimentblue, title=#1:,
}

\definecolor{moonstoneblue}{rgb}{0.45, 0.66, 0.76}
\definecolor{oldlace}{rgb}{0.99, 0.96, 0.9}
\definecolor{mintcream}{rgb}{0.96, 1.0, 0.98}
\definecolor{mintgreen}{rgb}{0.6, 1.0, 0.6}
\definecolor{mistyrose}{rgb}{1.0, 0.89, 0.88}
\definecolor{palegold}{rgb}{0.9, 0.75, 0.54}
\definecolor{palechestnut}{rgb}{0.87, 0.68, 0.69}

\newcommand{\reva}[1]{{\leavevmode\color{blue!90!black}{#1}}}
\newcommand{\revb}[1]{{\leavevmode\color{green!70!black}{#1}}}
\newcommand{\revc}[1]{{\leavevmode\color{red!80!black}{#1}}}
\newcommand{\common}[1]{{\leavevmode\color{magenta!70!black}{#1}}}

\renewcommand{\reva}[1]{{\leavevmode\color{black}{#1}}}
\renewcommand{\revb}[1]{{\leavevmode\color{black}{#1}}}
\renewcommand{\revc}[1]{{\leavevmode\color{black}{#1}}}
\renewcommand{\common}[1]{{\leavevmode\color{black}{#1}}}

\def\HiLiG{\leavevmode\rlap{\hbox to \hsize{\color{green!30}\leaders\hrule height .8\baselineskip depth .5ex\hfill}}}
\def\HiLiY{\leavevmode\rlap{\hbox to \hsize{\color{yellow!50}\leaders\hrule height .8\baselineskip depth .5ex\hfill}}}
\usepackage{wrapfig}
\definecolor{light-gray}{gray}{0.95}
\usepackage{tcolorbox}

\begin{document}

\setcounter{page}{1}

\title{Causal Explanations for Disparate Trends: Where and Why?}

\author{Tal Blau}
\affiliation{%
  \institution{Ben-Gurion University of the Negev}
  \city{}
  \state{}
  \country{}}
\email{tbl@post.bgu.ac.il}

\author{Brit Youngmann}
\affiliation{%
  \institution{Technion}
  \city{}
  \state{}
  \country{}}
\email{brity@technion.ac.il}

\author{Anna Fariha}
\affiliation{%
  \institution{University of Utah}
  \city{}
  \state{}
  \country{}}
\email{afariha@cs.utah.edu}

\author{Yuval Moskovitch}
\affiliation{%
  \institution{Ben-Gurion University of the Negev}
  \city{}
  \state{}
  \country{}}
\email{yuvalmos@bgu.ac.il}

\begin{abstract}

\looseness-1 During data analysis, we are often perplexed by certain
\textit{disparities} observed between two groups of interest within a dataset.
To better understand an observed disparity, we need \textit{explanations} that
can pinpoint the data regions where the disparity is most pronounced, along with
its causes, i.e., factors that alleviate or exacerbate the disparity. This task
is complex and tedious, particularly for large and
high-dimensional datasets, demanding an automatic system for discovering
\textit{explanations} (data regions and causes) of an observed disparity. It is critical that explanations for disparities
are not only interpretable but also actionable---enabling users to make
informed, data-driven decisions. This requires explanations to go beyond
surface-level correlations and instead capture \emph{causal} relationships. We
introduce \sysName, a framework for discovering causal
\underline{Ex}plana\-tions for \underline{Dis}parities between two groups of
interest. \sysName identifies data regions (subpopulations) where disparities
are most pronounced (or reversed), and associates specific factors that causally
contribute to the disparity within each identified data region. We formally
define the \sysName framework and the associated optimization problem, analyze
its complexity, and develop an efficient algorithm to solve the problem. Through
extensive experiments over three real-world datasets, we demonstrate that
\sysName generates meaningful causal explanations, outperforms prior methods,
and scales effectively to handle large, high-dimensional datasets.

\end{abstract}

\maketitle

\vspace{-2mm}
\section{Introduction}
\label{sec:introduction}

Data is the main building block of modern, data-driven decision-making. People rely on the trends observed in the data to gain insights, and in turn, use those insights to draw conclusions or even make important decisions. For large, high-dimensional datasets, certain observed trends often require further drilling down or \emph{explanations}. For example, after observing a \textit{disparate} trend that females are more likely to experience nervous breakdowns and anxiety attacks than non-females, one might wonder: ``For which subpopulation is this disparity more pronounced?'', ``Which factors contribute to this disparity?'', ``Are there particular countries/races where this trend is reversed?'', and so on. Manually searching for these answers is like finding a needle in a haystack, which demands automated ways to pinpoint the subpopulation or data region where a trend is amplified or substantially reversed from the global trend, and identify factors that are connected to these disparate trends. 


Understanding {\em causal reasons} behind disparities in outcomes between two
groups is essential for making informed, data-driven decisions to address
inequities. For instance, if a policymaker identifies the factors that causally
lead to lower average salaries in a specific subpopulation compared to the rest
of the population, they can design targeted interventions to mitigate the gap.

\looseness-1 In this paper, we propose \sysName for automatically
\underline{Ex}plaining an observed \underline{Dis}parate trend. We proceed to
provide three examples, highlighting unique use cases, to motivate the need to
discover \textit{explanations} (data regions and associated causes) of an
observed disparate trend between two groups of interest within a dataset.

\begin{table}[t]
\footnotesize
    \centering
	\vspace{3mm}

    \caption{\small A sample (toy) dataset over a partial schema of the Stack
    Overflow Annual Developer Survey dataset~\cite{stackoverflowreport}.
}
	
	\vspace{-4mm}
    \resizebox{\columnwidth}{!}{
    \begin{tabular}[b]{@{}llll@{}rr@{}}
        \toprule
        \textbf{Gender} &\textbf{Ethnicity}& \textbf{Education} & \textbf{Role} & \textbf{YrsProfCoding} & \textbf{TC}
        \\ 
        \midrule
        Non-binary  &White          &BS & Business analyst   & 6-8  &   83K     \\
        Male        &South Asian    &PhD & Data analyst       & 4-6  &   124K    \\
        Female      &South Asian    &MS & Back-end developer & 2-4  &   75K     \\
        Male        &East Asian     &BS & Back-end developer & 6-8   &   59K     \\
        \bottomrule
    \end{tabular}}
	\vspace{-6mm}
    \label{tab:data}
\end{table}

\newcommand{\groupOneAvg}{\$106K\xspace}
\newcommand{\groupTwoAvg}{\$96K\xspace}
\newcommand{\howMuchHigher}{10\%\xspace}
\newcommand{\groupOne}{White individuals aged between 25–34\xspace}
\newcommand{\treatment}{having 6--8 years of professional coding experience\xspace}
\newcommand{\support}{35\%\xspace}
\newcommand{\groupOneAvgSR}{\$115K\xspace}
\newcommand{\groupTwoAvgSR}{\$105K\xspace}
\newcommand{\howMuchHigherSR}{9.5\%\xspace}
\newcommand{\treatmentChangeOne}{\$44K\xspace}
\newcommand{\treatmentChangeTwo}{\$10K\xspace}

\begin{example}[Investigating a disparate trend]\label{ex:one}

An analyst Miro is examining tech workers' total-compensation data (Table~\ref{tab:data}), which contains
information about individuals' demographics (gender, ethnicity, age, etc.),
role, professional coding experience, their own and parents' education,
total compensation (TC), etc. Contrary to the common knowledge, Miro observes
that the average TC of data or business analysts (\groupOneAvg) is about
\howMuchHigher higher than the average TC of back-end developers
(\groupTwoAvg)---a surprising trend!
Miro wants to identify large subpopulations that contribute
significantly to this trend, and uncover causes behind it. He discovers that one of such subpopulation is \groupOne, which
constitutes \support of the population, within which, on
average, analysts (\groupOneAvgSR) earn \howMuchHigherSR more than developers
(\groupTwoAvgSR). Miro further discovers that having worked as a professional
coder is a major contributing factor to this disparity, particularly for this
subpopulation. Specifically, among \colorbox{Orange!30}{\strut{\groupOne}},
\colorbox{Cyan!30}{\strut{\treatment}} causes a TC increase of
\increase{\treatmentChangeOne} for {\colorbox{Yellow!70}{\strut
\texttt{\small analysts}}}, and only \increase{\treatmentChangeTwo} for $\;$
\colorbox{Lavender!70}{\strut  \texttt{\small back-end developers}}, further
exacerbating the TC-gap.



Miro wonders if there are other significant data regions
with similar properties. What are the major causes of the disparity in those
regions? Unfortunately, the dataset contains a number of multi-valued
attributes, making manual exploration of all possible data regions
infeasible. Furthermore, exploring all possible factors that can cause a
significant disparity in the target attribute (TC) requires a more involved
search, making it impossible for Miro to do it manually.
\end{example}

\renewcommand{\groupOneAvg}{78\%\xspace}
\renewcommand{\groupTwoAvg}{91\%\xspace}
\newcommand{\howMuchLower}{13\%\xspace}
\renewcommand{\support}{16\%\xspace}
\renewcommand{\groupOneAvgSR}{65\%\xspace}
\renewcommand{\groupTwoAvgSR}{84\%\xspace}
\newcommand{\howMuchLowerSR}{19\%\xspace}
\renewcommand{\treatmentChangeOne}{2\%\xspace} 
\renewcommand{\treatmentChangeTwo}{1\%\xspace}

\begin{example}[Debugging bias]\label{ex:bias}

\looseness-1 While analyzing a health insurance coverage
dataset~\cite{ACS_Data}, Soha observes a disparate trend that individuals with
an occupation that involves manual labor have a \howMuchLower lower chance
(\groupOneAvg) of being covered by health insurance than the overall population
across all occupations (\groupTwoAvg). This indicates a ``blue-collar
bias''~\cite{godbolt2011black} and Soha wishes to discover data regions where
this bias is significant and uncover why. Turns out that among non-natives,
which makes up \support of the population, the disparity is even more severe.
Within this subpopulation, manual-labor workers have a \groupOneAvgSR chance of
being insured, which is \howMuchLowerSR lower than the \groupTwoAvgSR coverage
rate for all occupations. Furthermore, among \colorbox{Orange!30}{\strut{not-natives}},
\colorbox{Cyan!30}{\strut{earning between 25K--55K}} \textit{boosts} the chance of having
health insurance for \colorbox{Yellow!70}{\strut \texttt{\small manual-labor workers}}
by \increase{\treatmentChangeOne}, where it \textit{hurts} the
chance for \colorbox{Lavender!70}{\strut \texttt{\small people with any
occupation}} by \decrease{\treatmentChangeTwo}.

\end{example}

\renewcommand{\groupOneAvg}{37\%\xspace}
\renewcommand{\groupTwoAvg}{45\%\xspace}
\renewcommand{\support}{2\%\xspace}
\renewcommand{\groupOneAvgSR}{47\%\xspace}
\renewcommand{\groupTwoAvgSR}{43\%\xspace}
\renewcommand{\treatmentChangeOne}{21\%\xspace} 
\renewcommand{\treatmentChangeTwo}{14\%\xspace}

\begin{example}[Discovering reverse trends]\label{ex:three}

Generally, males have a lower likelihood (\groupOneAvg) of feeling nervous
frequently than non-males (\groupTwoAvg). Madison is investigating a Medical
Expenditure Panel Survey dataset~\cite{MEPS_Data_Overview} and they want to find subpopulations where a reverse trend exists, i.e., males have a
higher likelihood of feeling nervous than non-males (this phenomenon is known as
Simpson's Paradox~\cite{wagner1982simpson}). One such subpopulation is
\colorbox{Orange!30}{\strut{divorced people  aged between}} \colorbox{Orange!30}{\strut{51--63 who have a
doctor's recommendation to exercise}}, where males have a higher
likelihood (\groupOneAvgSR) of feeling nervous than non-males (\groupTwoAvgSR).
Interestingly, within this subpopulation, \colorbox{Cyan!30}{\strut{not currently
smoking}} exacerbates the situation for \colorbox{Yellow!70}{\strut
\texttt{\small males}} (increases the likelihood of feeling nervous by
\increase{\treatmentChangeOne}) but improves the situation for
\colorbox{Lavender!70}{\strut \texttt{\small non-males}} (decreases the likelihood of
feeling nervous by \decrease{\treatmentChangeTwo}). However, to discover
such a reverse trend, they must manually examine all possible subpopulations and try
out all possible treatments within each subpopulation!


\end{example}


The above examples motivate investigating disparities in an aggregated
\textit{outcome variable} (e.g., TC) between \textit{two (\newtext{possibly
overlapping}) groups} of interest (e.g., analysts vs back-end developers),
aiming to identify (1)~\emph{where} the disparities are most pronounced (or
reversed), such as a specific data region or subpopulation and (2)~\emph{why},
i.e., what factors further alleviate/exacerbate the disparity. \revc{An example of a real-world use case is the famous UC Berkeley gender bias case~\cite{doi:10.1126/science.187.4175.398}, where aggregate admissions data suggested bias against women, but deeper analysis revealed Simpson's paradox driven by department-level differences. Another real-world example is the kidney stone treatment study~\cite{Julious1480}, where an inferior treatment appeared to be more effective overall, even though the superior treatment performs better for both small and large kidney stones. Our goal is to detect and explain such disparities.}

\subsubsection*{Desiderata}\label{realworld} There are three key desiderata for the
aforementioned problem. \textbf{First,} a single causal explanation rarely
accounts for the disparity observed across different subpopulations, which may exhibit disparity for different underlying reasons. Thus, the
first goal is to identify \emph{high-utility} subpopulations---groups where a
strong and meaningful causal explanation for the disparity exists.
\textbf{Second,} while small subpopulations may exhibit strong causal
explanations, insights derived from such groups are often not generalizable. To
ensure broader applicability and avoid misleading insights, chosen
subpopulations must have sufficient \emph{support}, i.e., they should cover
a reasonable portion of the overall dataset. \textbf{Third,} selecting the
top-$k$ subpopulations purely based on utility and support may lead to redundancy. E.g., ``principal engineers'' and ``people
aged between 35--45'' may comprise the same individuals, as most principal
engineers are 35--45 years old. Thus, beyond finding high-utility and
high-support subpopulations, we must minimize the overlap among the reported $k$
subpopulations, ensuring \emph{diversity}~\cite{yu2009takes, DBLP:conf/icdt/Moumoulidou0M21,
DBLP:journals/pvldb/WangMM18}.


\subsubsection*{Problem} \looseness-1 Given a database $D$, an outcome variable
$O$, causal background knowledge in the form of a causal DAG \model by Pearl's
graphical causal model~\cite{pearl2009causal}, two groups of interest $g_1$ and
$g_2$, and a parameter $k$, our goal is to generate a set of $k$ \emph{disparity
explanations} that, collectively, best explain the disparities between $g_1$ and
$g_2$ w.r.t an aggregation over $O$, according to \model. In this
work, we consider \texttt{AVERAGE} as the aggregation function since it
satisfies the requirements for causal analysis (details are in
Section~\ref{sec:3dot1}).
Each explanation consists of two components: (1)~a \textit{subpopulation} where
the disparity is pronounced (or reversed), and (2)~a \emph{treatment pattern}
that causally affects $g_1$ and $g_2$ disparately, within that subpopulation.
The quality of a set of disparity explanations is primarily determined by the
causal strength of the treatment patterns they reveal, measured by the
\emph{Average Treatment Effect (ATE)}~\cite{pearl2009causal}. This forms our
main optimization objective: to identify a set of subpopulations for which the
associated treatments strongly explain the observed disparity. In addition to
maximizing causal explainability, two important constraints guide the selection
process:
(1)~Each explanation must have sufficient \emph{support}---i.e., the
subpopulation it describes should cover a sizable fraction of the data, ensuring
representativeness and generalizability.
(2)~The selected subpopulations must exhibit low \emph{overlap}, encouraging
\emph{diversity} in the explanation set. 




\subsubsection*{Challenges and limitations of prior work}

\looseness-1 The key challenge lies in identifying subpopulations that are associated with strong causal explanations for observed disparities, while simultaneously satisfying constraints on support and diversity. Prior work mostly focused on selecting the top-$k$ subpopulations based on observed disparities or coverage~\cite{pastor2021looking,agmon2024finding}, often neglecting causal explainability or redundancy.
Moreover, unlike approaches that rely solely on observed outcome disparities~\cite{DBLP:journals/corr/abs-2402-05007}, we emphasize the importance of discovering subpopulations where the disparity is causally explained by specific treatments.
We extend prior work in two key directions:
(1)~We address a more difficult variant of the problem by optimizing over a set of $k$ subpopulations under support and diversity constraints, and
(2)~We focus on identifying high-quality \emph{causal} explanations, not only data regions with high disparities. 
Recent works \cite{youngmann2024summarized,abs-2207-12718} have used
causal inference to explain aggregate query results and disparities between two
groups within it. However, they do not support overlapping groups, which is
essential for bias debugging as demonstrated in Example~\ref{ex:bias}. 
\revc{While our form of causal explanation (the same
treatment resulting in different outcomes for different groups) is
already established in prior works, such as XInsight~\cite{abs-2207-12718},
they focus only on a global explanation. In contrast, we
expand the granularity of explanation to subpopulation level, since trends in different subpopulations can diverge significantly from global trends.}
As such,
we aim to identify causal explanations within different subpopulations rather
than providing a single explanation for the entire data or query outcome.
Finally, operating on the entire data, rather than the aggregate view, poses
another challenge as the search space becomes significantly larger (we explain
this in Section~\ref{subsec:step_2}).

\subsubsection*{Contributions} \label{contribution}
We make the following contributions:

\begin{enumerate}[leftmargin=*,topsep=0pt]

\item 
We \textbf{formalize the problem} of generating a set of causal explanations to account for disparities in outcomes between two (possibly overlapping) groups of interest. Specifically, we define an optimization problem that seeks to \emph{maximize the causal utility} of the selected explanations—i.e., their ability to causally explain the observed disparity, subject to three constraints: (1)~a bound on the number of explanations (size $k$), (2)~minimum support for each explanation to ensure representativeness, and (3) limited overlap between subpopulations to reduce redundancy. We show that this problem is NP-hard (Section~\ref{sec:framework}).


\item We \textbf{develop the \sysName framework}, which operates in three steps. First, \sysName finds candidate
subpopulations. Then it identifies local explanations for each
candidate subpopulation by adapting a previous work on finding treatments with substantial causal effect~\cite{youngmann2024summarized}. Finally, it uses an
effective greedy strategy to find a $k$-sized explanation set (Section~\ref{sec:algo}).

\item We present a thorough \textbf{empirical analysis} over 3 real-world
datasets and present \textbf{3 case studies} that include 5 baselines, and 4
variants. We show that
\sysName generates higher quality explanations than the existing approaches and
can find alternative explanations that existing approaches miss. We also find
\sysName scalable and efficient in practice, with its runtime being linear w.r.t
the number of data tuples (Section~\ref{sec:exp}).

\end{enumerate}

\section{Related work}
\label{sec:related}

\noindent
\textbf{Identifying interesting subgroups in high-dimensional data.}
Prior work identifies the most intriguing data subsets for exploration~\cite{sathe2001intelligent,DBLP:journals/pvldb/YoungmannAP22,DBLP:journals/tkde/JoglekarGP19,geerts2004tiling,sarawagi2000user,sarawagi2001user,sarawagi1998discovery, MoskovitchLJ23, LiMJ23, CabreraEHKMC19, asudeh2019assessing, pastor2021looking}. Other studies focus on uncovering compelling data visualizations~\cite{vartak2015seedb,zhang2021viewseeker}, or pinpointing data subsets where models underperform~\cite{chung2019automated,pastor2021looking,DBLP:journals/corr/abs-2402-05007}. One of our key objectives is to identify subpopulations with a significant disparity in the average outcomes between two groups of interest. 

\textsc{DivExplorer}~\cite{pastor2021looking} is designed to analyze the behavior of classification models, with the primary goal of identifying data regions where a performance metric (e.g., false positive rates) deviates significantly compared to the entire dataset. 
\textsc{FairDebugger}~\cite{DBLP:journals/corr/abs-2402-05007} aims to identify data subsets responsible for fairness violations in the outcomes of a random forest classifier. It pinpoints the most impactful data samples that significantly influence the model’s predictions.
However, unlike \sysName, both of these systems focus solely on detecting data regions with unexpected behavior, without uncovering the underlying \emph{causal} factors driving the observed trends. We empirically compare \sysName against these baselines in Section~\ref{sec:exp}.


\smallskip
\noindent
\textbf{Query results explanation.}
Extensive research has been devoted to explain the results of aggregate SQL queries. 
Multiple works leverage \emph{data provenance} to generate explanations for query results~\cite{bidoit2014query,chapman2009not,meliou2010complexity, meliou2009so,DeutchFG20,lee13approximate,ten2015high,li2021putting}, while some rely on non-causal interventions~\cite{wu2013scorpion,roy2014formal,roy2015explaining,tao2022dpxplain,DBLP:journals/pvldb/DeutchGMMS22,DeutchGM15, BourhisDM20}, entropy-based techniques~\cite{el2014interpretable}, and counterbalancing patterns~\cite{miao2019going}.
This line of work differs from ours, as our goal is to explain the disparity among two, possibly overlapping, groups of interest via a small set of causal explanations.

\looseness-1 Recent works \cite{salimi2018bias, youngmann2022explaining, youngmann2024summarized,abs-2207-12718} use causal inference to explain aggregate query results. Prior work~\cite{salimi2018bias,youngmann2022explaining} proposed methods to find confounding variables that explain the correlation between the grouping attribute and the outcome in group-by-average queries. Both provide the same explanation for all groups in the query results. 
\revb{CauSumX's~\cite{youngmann2024summarized} goal is to provide explanations for an entire aggregate view by combining similar explanations for brevity. To adapt it to our setting, it could be used to explain the aggregate results of two groups. However, our objective differs: rather than identifying treatments that influence each group \textit{individually}, we focus on finding what \textit{differentiates} the groups.
Specifically, we search for treatments that benefit one group but have the opposite effect on the other. In contrast, CauSumX simply aims to identify what influences the outcome within each group. Consequently, the treatments identified by CauSumX cannot serve as explanations for the observed disparity between the two groups.
Furthermore, CauSumX operates only on aggregate views with a relatively small number of grouping patterns, making it unlikely to scale in our setting. }


XInsight~\cite{abs-2207-12718} identifies both causal and non-causal patterns to explain disparities between two groups in aggregate queries. In contrast, we support overlapping groups and identify specific causal explanations within different subpopulations, rather than seeking a single treatment for the entire dataset. As our experiments confirm (Section~\ref{subsec:exp_quality}), in many cases, no universal explanation suffices, and disparities are better understood through localized causal insights.

\smallskip
\noindent
\textbf{Rule mining.} \newtext{Association rule mining is a widely studied problem~\cite{hipp2000algorithms,kumbhare2014overview}, which aims to identify frequent relationships in datasets. Rule-based interpretable models utilize these techniques to derive predictive rules, aiming to balance accuracy and interpretability~\cite{sagi2021approximating,schielzeth2010simple,lou2013accurate,kim2014bayesian}. Recent works have explored rule generation from causal relationships~\cite{plecko2022causal,plecko2023causal,sun2021treatment} and heterogeneous treatment effect estimation~\cite{xie2012estimating,wang2022causal}. However, they do not address our problem of identifying where disparities are significant and their causes.}

\revb{\looseness-1 CURLS~\cite{10.1145/3637528.3671951} aims at finding a set of causal rules that delineate subgroups with a significant treatment effect and small outcome variances. Like how we limit \#explanations and \#predicates within each explanation, CURLS limits \#rules and their size to ensure interpretability. While both works aim to minimize the overlap between explanations (rules in CURLS), our goal is to expose and explain disparities between groups, while CURLS focuses on identifying subpopulations with significant effects. Notably, subpopulations with strong treatment effects do not necessarily show high disparities across groups. In fact, subpopulations with weaker overall effects can reveal stronger disparities, as opposing effects between the groups within a subpopulation may cancel each other out.}

\vspace{-2mm}
\section{Background on Causal Inference}
\label{sec:prelim} 

We use Pearl's model for {\em observational causal analysis}
\cite{pearl2009causal} and present below a few concepts according to it. The
broad goal of {\em causal inference} is to estimate the effect of a {\em
treatment variable} $T$ on an {\em outcome variable} $O$ (e.g., the effect of
\verb|YrsProfCoding| on \verb|TC|).

The Average Treatment Effect (ATE) quantifying the difference in expected outcomes between treated and untreated groups~\cite{rubin2005causal, pearl2009causal}. ATE conceptually assumes a scenario where treatment is assigned randomly. However, in observational data, treatment is not assigned randomly, and \emph{confounding variables} that influence both treatment and outcome must be accounted for.  
To estimate ATE for a binary treatment $T$ on an outcome $O$, the following definition is used:  
\begin{equation*}
    {\small ATE(T,O) =\mathbb{E}_Z \left[ \mathbb{E}[O \mid T=1,\mathbf{Z}] -  
    \mathbb{E}[O \mid T=0, \mathbf{Z}] \right]}
\label{eq:ate}
\end{equation*}
Here, $\mathbf{Z}$ represents the set of confounding variables, ensuring that the causal effect is isolated from confounding influences.


\begin{example}
Suppose that we want to estimate the causal effect of $\;$\verb|YrsProfCoding| on
\verb|TC| from the Stack Overflow (SO) dataset. Since the values of
$\;$\verb|YrsProfCoding| were not assigned at random, and having more or fewer years
of professional coding experience and obtaining a high total compensation may
depend on other attributes like \verb|Ethnicity|, \verb|Education|, and
\verb|Role|, we must control for these confounding variables when estimating  $ATE(\verb|YrsProfCoding|, \verb|TC|)$.
%
\end{example}


Pearl's model provides ways to account for these confounders $\mathbf{Z}$ to get an unbiased causal estimate under additional assumptions: (1)~The unconfoundedness assumption states that if we condition on $\mathbf{Z}$, then $T$ is independent of $O$, given $\mathbf{Z}$. Intuitively, it means that after conditioning on $\mathbf{Z}$, $T$ is as good as randomly assigned.
%
(2)~The Overlap assumption ensures that for every combination of confounders, there is a nonzero probability of receiving the treatment, allowing for valid comparisons across groups.


Pearl's model gives a systematic way (e.g., the backdoor criterion~\cite{pearl2009causal}) to find a sufficient set of confounding variables $\mathbf{Z}$ when a \emph{causal DAG} is available. 
A causal DAG is a specific type of Bayesian network, where nodes represent random variables (i.e., data attributes) and
edges signify potential direct causal influence, that provides a simple way to represent causal relationships among variables. Causal DAGs can be constructed by a domain expert or
using {\em causal discovery} algorithms~\cite{glymour2019review}. In line with prior
work~\cite{galhotra2022hyper,youngmann2024summarized,li2025fair}, we assume the causal DAG is given as part of the input.


\begin{example}
  Figure \ref{fig:causal_DAG} depicts a causal DAG for the SO dataset
  over a subset of attributes in Table \ref{tab:data}, which indicates that \verb|YrsProfCoding| depends on an
  individual's \verb|Role|, \verb|Ethnicity|, and \verb|Education|. 
\end{example}


For this work, we focus on estimating the causal effect of a treatment $T$ on an outcome $O$ within a specific subpopulation, characterized by a \emph{pattern} $\pattern$. (We define the set of patterns in Section \ref{sec:framework}.) Consequently, our goal is to compute the Conditional Average Treatment Effect (CATE)~\cite{rubin1971use,holland1986statistics} rather than the ATE, as CATE captures the treatment effect within a targeted subpopulation. To estimate the CATE for a binary treatment $T$ on an outcome $O$ for a subpopulation $\pattern$, we use the following definition:
\begin{equation*}
    {\small CATE(T,O| \pattern) =\mathbb{E}_Z \left[ \mathbb{E}[O \mid T=1,\mathbf{Z}, \pattern] -  
    \mathbb{E}[O \mid T=0, \mathbf{Z},\pattern] \right]}
\end{equation*}
where $\mathbf{Z}$ represents a sufficient set of confounding variables.



\section{Discovering Disparity Explanations}
\label{sec:framework}
We consider a single-relation database instance \db over a schema $\attrset {=} (A_1, {\ldots}, A_m)$, where each attribute $A_i$ is associated with a domain $\dom(A_i)$. The outcome attribute $O {\in} \attrset$ can be either categorical or continuous.
We also assume all other attributes in $\attrset$ are categorical (if not, we can discretize them to make them categorical).
A tuple in \db is denoted as $t {=} (a_1, {\ldots}, a_m)$ where $a_i {\in} \dom(A_i)$. We use bold letters to represent a subset of attributes $\mathbf{A} {\subseteq} \attrset$.  
To specify a \emph{subpopulation} (subset of tuples) from \db, we use
{\em patterns}~\cite{roy2014formal,wu2013scorpion,el2014interpretable,lin2021detecting, li2021putting,youngmann2024summarized} that comprise conjunctive \emph{predicates} on attribute values.

\begin{definition}[Pattern]\label{def:pattern}
Given a database instance \db over schema \attrset,
a simple predicate $\varphi$ is an expression of the form $A_i \; \mathtt{op} \; a_i$, 
where $A_i {\in} \attrset$, 
$a_i {\in} \dom(A_i)$, and $\; \mathtt{op} \in \{=,\neq\}$.  
A {\em pattern} $\pattern$ is a conjunction of simple  predicates, i.e., $\pattern = \varphi_1 \land \ldots \land \varphi_k$. We use $\pattern(\db) \subseteq \db$ to denote the subpopulation within $\db$ defined by $\pattern$.
\end{definition}

\begin{example} Examples of two simple predicates in the SO dataset (Table~\ref{tab:data}) are {\tt{Ethnicity}} {=} \verb|Asian| and {\tt{Role}} $=$ \verb|Data analyst|. An example of a pattern is {\tt{Ethnicity}} {=} \verb|Asian| $\wedge$ {\tt{Role}} $=$ \verb|Data analyst|.
\end{example}

The two groups of interest, $g_1$ and $g_2$, are defined by the patterns $\pattern_{g_1}$ and $\pattern_{g_2}$ respectively. In this work, we only consider equality or inequality predicates, in line with previous work on explanations that deem such predicates intuitive and understandable~\cite{el2014interpretable, roy2015explaining,agmon2024finding}. 



\subsection{Disparity Explanations}\label{sec:3dot1}
For a database $\db$, we aim to discover explanations for an observed disparity in average outcome \texttt{AVG}($O$) between $g_1$ and $g_2$.
Our building blocks are \emph{disparity explanations} that identify \textit{where} the average outcomes for $g_1$ and $g_2$ differ significantly and \textit{why}.

\begin{example}
An analyst over the SO dataset (Table~\ref{tab:data}) is interested in finding
explanations of a surprising disparate observation: the average \verb|TC| of
data or business analysts ($\$106k$) is $\$10k$ higher than the average
\verb|TC| of back-end developers ($\$96k$). 

%
\end{example}


We focus on comparing the average outcomes of two groups. Our framework is designed to uncover {\em causal explanations} for differences in these averages, leveraging the concept of CATE as discussed in Section~\ref{sec:prelim}. Since CATE inherently relies on {\em expectations} (weighted averages), it is particularly suited for analyzing aggregate averages of outcomes. 
Aggregate functions such as {\tt SUM} or {\tt COUNT}, on the other hand, depend on the {\em number of tuples} in the data, which does not directly align with causal effect estimates. While non-causal approaches to explanations \cite{wu2013scorpion, roy2014formal, miao2019going, li2021putting, lakshmanan2002quotient, wen2018interactive} can support a variety of aggregate functions, methods that are based on causal estimates typically focus on averages~\cite{salimi2018bias,youngmann2022explaining,abs-2207-12718, youngmann2024summarized,li2025fair}.

\begin{figure}[t]
    \centering
    \footnotesize
    \resizebox{0.18\textwidth}{!}{
    \begin{tikzpicture}[node distance=1cm and 1cm, every node/.style={minimum size=0.9cm}]
        \tikzset{vertex/.style = {draw, rectangle, align=center}}

        \node[vertex] (Gender) {\small \bf{Ethnicity}};
        \node[vertex, right=0.3cm of Gender] (Education) {\small \bf{Education}};
        \node[vertex, below right=0.3cm of Education] (Role) {\small \bf{{Role}}};
        \node[vertex, below=0.3cm of Gender] (YrsCoding) {\small \bf{YrsProfCoding}};        
        \node[vertex, below=1.5cm of Gender] (TC) {\small \bf{TC}};

        
        \draw[->] (Gender) -- (YrsCoding);
        \draw[->] (Education) -- (YrsCoding);
        \draw[->] (Role) -- (YrsCoding);
        \draw[->] (Role) -- (TC);
          \draw[->, bend right=15] (Gender) to (Role);
        \draw[->] (Education) -- (Role);
        \draw[->] (YrsCoding) -- (TC);
        \draw[->] (Gender) to[bend right=85] (TC);
          \draw[->, bend left=25] (Education) to (TC);
    \end{tikzpicture}}
	\vspace{-3mm}
    \caption{\small Partial causal DAG for the Stack Overflow dataset. }
	\vspace{-3mm}
    \label{fig:causal_DAG}
\end{figure}
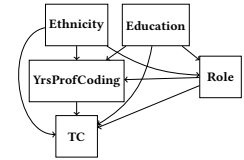

\noindent\subsubsection{\revc{Mutable and immutable attributes}} \label{sec:mutable}
We assume the attributes set $\attrset \setminus \{O\}$ is partitioned into two
disjoint sets: \textit{mutable} attributes that can be used to define what
affects the outcome (e.g., years of coding professionally, education)
and \textit{immutable} attributes, which are inherent and cannot be changed
(e.g., ethnicity, gender). We use immutable attributes to define the
subpopulations. \revc{Formally, $\immutable {=} \{I_1, I_2, \dots\} \subset \attrset$ denotes the set of
immutable attributes and $\mutable {=} \{M_1, M_2, \dots\} \subset \attrset$ denotes the set of mutable
attributes, where $\mutable \cap \immutable = \emptyset$} and the outcome $O
\notin \mutable \cup \immutable$. This makes sure that
explanations consist solely of mutable attributes that can imply corrective
measures to reduce the disparity. We assume that a domain
expert provides this categorization of attributes. This is similar to prior work
on counterfactual explanations, where certain attributes are excluded in the
explanation as they are
non-actionable~\cite{GalhotraPS21,KarimiSV21,KarimiBSV23}.


\begin{definition}[\fact]\label{def:4dot2}
Given a database instance $\db$ over a schema $\attrset$, \revc{two disjoint sets of mutable and immutable attributes $\mutable \subset \attrset$ and $\immutable  \subset \attrset$}, an outcome variable $O \in \attrset$, and two groups of interest $g_1$ and $g_2$, a \fact\ $\phi$ is defined as a pair of patterns $(\pattern_g,\pattern_e)$ where:

\indent (1)~$\pattern_g$ is defined by attributes in $\immutable {\subset} \attrset$, highlighting a subpopulation with significant disparity between $g_1$ and $g_2$ in terms of \verb|AVG|($O$). 

\indent (2)~$\pattern_e$ is defined by  attributes in $\mutable{\subset} \attrset$, indicating a treatment that can explain the disparity between $g_1$ and $g_2$ within  $\pattern_g(\db)$. 

\end{definition}

To assess the impact of the treatment $\pattern_e$ on the outcome $O$ within the subpopulation $\pattern_g(\db)$, we compare the causal effect of $\pattern_e$ on $O$ within the two subpopulations: $(\pattern_g \wedge \pattern_{g_1})(\db)$ and $(\pattern_g \wedge \pattern_{g_2})(\db)$.

\begin{example}
\label{ex:disparity_explanation}
\begin{sloppypar}
Continuing with our running example, where $g_1$ is \colorbox{Yellow!70}{\strut
\texttt{\small analysts}} and $g_2$ is \colorbox{Lavender!70}{\strut \texttt{\small back-end
developers}}, an example disparity explanation is: Among
\colorbox{Orange!30}{\strut{white individuals aged between 25--34}},
\colorbox{Cyan!30}{\strut {having 6-8 years of professional
coding experience}} boosts the \verb|TC| for
\colorbox{Yellow!70}{\strut \texttt{\small analysts}} more than
\colorbox{Lavender!70}{\strut \texttt{\small back-end developers}}. Here, the
subpopulation pattern $\pattern_g$ is defined by
\colorbox{Orange!30}{\strut{$\mathtt{Ethnicity} = \mathtt{White} \wedge
\mathtt{Age} = \mathtt{25 - 34}$}} and the treatment
pattern $\pattern_e$ is \colorbox{Cyan!30}{\strut{\texttt{YrsProfCoding} = 6--8}}. Within this subpopulation, the average \verb|TC| for
\colorbox{Yellow!70}{\strut \texttt{\small analysts}} and \colorbox{Lavender!70}{\strut
\texttt{\small back-end developers}} are \$115K and \$105K, respectively (gap is
\$10K). 
%
\end{sloppypar}
\end{example}

\subsection{Problem Formulation}
\label{subsec:problem_for}
We formally define the problem of finding \facts. We assume that we are
given a database instance \db\ with schema \attrset, a causal model \model\ on
\attrset, outcome $O \in \attrset$, and two groups $g_1$ and $g_2$ defined by
the patterns $\pattern_{g_1}$ and $\pattern_{g_2}$. Let $\{\phi_1, \phi_2,
\ldots, \phi_l\}$ be a set of possible \facts. Our goal is to find
a bounded-sized set of disparity explanations $\Phi$ to identify subsets of the
data that (1)~provide insights into the disparity between $g_1$ and $g_2$, and
(2)~avoid redundancy across subsets to cover different data regions.



\smallskip\noindent\emph{Usefulness of an explanation.}
A \fact\ $\phi = (\pattern_g,\pattern_e)$ is considered useful if  
($i$)~it reveals a significantly different effect of the treatment  $\pattern_e$ on the outcome $O$ for the subpopulation $\pattern_g(D)$ between $g_1$ and $g_2$, and ($ii$) it constitutes a significant portion of the data. 
To this end, we define the \emph{\score}, which measures the magnitude of the disparity, and the \emph{support} of a \fact that allows us to eliminate \facts\ that constitute only minor portions of the data.




The \score\ $\disp$ of a \fact\ $\phi = (\pattern_g, \pattern_e)$ measures the
absolute difference between the two CATE values: one computed over the
subpopulation $(\pattern_g \land \pattern_{g_1})(\db)$ and the other over
$(\pattern_g \land \pattern_{g_2})(\db)$. The difference is normalized by the
maximal outcome value. Formally,
$$\disp(\phi) = \frac{\left|CATE_{\model}(\pattern_e,O| \pattern_g \wedge \pattern_{g_1}) - CATE_{\model}(\pattern_e,O| \pattern_g \wedge \pattern_{g_2}) \right|}{\max\{|o|\mid o\in O\}}$$

In order to prioritize \facts\ that cover a large portion of the given database, we use the notion of \emph{support}. 
The support of a \fact\ $\phi = (\pattern_g, \pattern_e)$ is defined by the
fraction of tuples $\in \db$ that take part in the explanation, namely, tuples
that satisfy the patterns in the \fact. Formally, $$support(\phi) =
\frac{|\pattern_{g\wedge g_1}(\db)\cup \pattern_{g\wedge g_2}(\db)|}{|\db|}$$
Intuitively, the higher the support of a \fact, the more interesting it is, as
it applies to a larger portion of the population. We prefer disparity
explanations with high support.

\begin{example}
Continuing from Example \ref{ex:disparity_explanation}, the support for the
disparity explanation, over the subpopulation \colorbox{Orange!30}{\strut{White
individuals aged}} \colorbox{Orange!30}{\strut{between 25 to 34}}---working as either analysts or back-end
developers---is $\frac{16,508}{47,702} = 34.6\%$. The \score\ for the
corresponding explanation---with the identified cause of \colorbox{Cyan!30}{\strut{having
6--8 years of coding experience}}---is $\frac{\left|44,058 -
10,552 \right|}{2,000,000} = 0.016$. Note that this is a real
explanation found during our empirical analysis (Table~\ref{tab:results_so}, row~5).

\end{example}

\noindent\emph{Diversity among the disparity explanations.} We are interested in
a diverse set of disparity explanations to reveal and explain the difference in
outcome for the two groups of interest. Given two groups of interest $g_1$ and
$g_2$, we use $D_{g_1\cup g_2}$ to denote the subset of $D$ containing tuples
that belong to at least one of the groups. More formally, $D_{g_1\cup g_2} =
\pattern_{g_1}(D)\cup\pattern_{g_2}(D)$. Given two disparity explanations $\phi
= (\pattern_{g}, \pattern_{e})$ and $\phi' = (\pattern_{g'}, \pattern_{e'})$,
defined over subpopulations $g$ and $g'$, respectively, and the same outcome
variable $O$, we use the Jaccard similarity between $\pattern_{g}(D_{g_1\cup
g_2})$ and $\pattern_{g'}(D_{g_1\cup g_2})$ to measure the similarity between
$\phi$ and $\phi'$. Formally:
%
%
$$\expSim(\phi, \phi') = 
\frac{|\pattern_{g}(D_{g_1\cup g_2})\cap \pattern_{g'}(D_{g_1\cup g_2})|}{|\pattern_{g}(D_{g_1\cup g_2})\cup
\pattern_{g'}(D_{g_1\cup g_2})|}$$

We are now ready to formally define the problem of selecting disparity
explanations. At a high level, our goal is to select a bounded-sized diverse set
of disparity explanations with support above a given threshold, such that their
combined disparity score is maximized, with bounded pairwise similarity to
reduce redundancy.


\begin{problem}[Disparity Explanation Selection]\label{prblem}
Given a database instance \db\ with schema \attrset, a causal model \model\ on \attrset,  outcome $O {\in} \attrset$, 
two groups of interest $g_1$ and $g_2$, a set of possible \facts\ $\Phi_c$, 
a budget $k \in \mathbb{N}^+$, a support threshold $\sigma$, and a similarity threshold $\tau$, 
select a disparity explanation set  $\Phi \subseteq \Phi_c$,
such that:
 \begin{enumerate}[leftmargin=1em,labelwidth=*,align=left]
     \item (\textbf{size constraint}) $|\Phi| \leq k$,
     \item (\textbf{support constraint}) $\forall \phi_i\in \Phi,~ support(\phi) \geq \sigma$,
     \item (\textbf{diversity constraints}) $\forall \phi_i,\phi_j\in \Phi,~\expSim(\phi_i,\phi_j) \leq \tau$, and
     \item (\textbf{objective}) $\Delta(\Phi) = \sum_{\phi\in \Phi}\Delta(\phi)$ is maximized.
 \end{enumerate}
 \end{problem}

\noindent
\revb{\emph{Complexity Analysis.} A na\"ive solution to Problem~\ref{prblem} requires (1)~materializing all subsets $\Phi {\subseteq} \Phi_c$ s.t $|\Phi| \le k$, (2)~validating each subset w.r.t the support and diversity constraints, and (3)~finding the valid subset that maximizes the objective function. Note that the number of possible patterns grows exponentially with the number of attributes and their domain sizes, leading to an exponential explosion of the number of candidate \facts $\Phi_c$. Concretely, $|\Phi_c| = \Pi_{A_i\in \mutable}(|dom(A_i)| + 1)\cdot\Pi_{A_j\in \immutable}(|dom(A_j)| + 1)$, 
rendering enumeration of all possible explanations  infeasible. Next, we show that even if the full search space could be materialized, finding the optimal solution remains intractable.}

\begin{proposition}\label{prop:hard}
Given a set of candidate \facts\ $\Phi_c$, a budget $k$, a support threshold
$\sigma$, a similarity threshold $\tau$, and a bound $B$, determining whether
$\exists\Phi \subseteq \Phi_c$ s.t $|\Phi| \leq k$, $\forall \phi_i\in \Phi,~
support(\phi) \geq \sigma$, $\forall \phi_i,\phi_j\in \Phi,~\expSim(\phi_i,\phi_j)
\leq \tau$ and $\sum_{\phi\in \Phi}\Delta(\phi) \geq B$ is NP-hard.
\end{proposition}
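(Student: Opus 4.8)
The plan is to prove NP-hardness by a polynomial-time reduction from the \textsc{Independent Set} decision problem: given an undirected graph $G = (V,E)$ and an integer $k$, decide whether $G$ contains an independent set of size $k$. The guiding observation is that the diversity constraint $\expSim(\phi_i,\phi_j)\le\tau$ is exactly an \emph{independence} requirement on a conflict graph whose nodes are the candidate explanations and whose edges join pairs with too-high Jaccard overlap, while the additive objective $\Delta(\Phi)=\sum_{\phi\in\Phi}\Delta(\phi)$ together with the size bound $k$ turns ``maximize utility'' into ``select $k$ pairwise non-conflicting explanations''. Hence the decision version of Problem~\ref{prblem} is, at its combinatorial core, (cardinality-bounded) independent set, which is NP-complete already in the unit-weight case.

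Concretely, from $(G,k)$ I would build a database \db, a causal DAG \model, an outcome $O$, two groups $g_1,g_2$, and a candidate set $\Phi_c=\{\phi_v : v\in V\}$ with one explanation per vertex, as follows. Introduce one immutable binary attribute $A_v$ per vertex $v\in V$, one immutable attribute $\mathit{Grp}$ with $\dom(\mathit{Grp})=\{1,2\}$ defining the groups ($\pattern_{g_1}=(\mathit{Grp}{=}1)$, $\pattern_{g_2}=(\mathit{Grp}{=}2)$), one mutable binary attribute $T$ used as the common treatment ($\pattern_e=(T{=}1)$), and the outcome $O$; let \model\ contain only the edge $T\to O$, so the backdoor set is empty and CATE collapses to a difference of conditional means. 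For each edge $\{u,v\}\in E$ add a block of four tuples realizing all $(\mathit{Grp},T)\in\{1,2\}\times\{0,1\}$ combinations with $A_u{=}A_v{=}1$ and $A_w{=}0$ for $w\notin\{u,v\}$; additionally, for each vertex $v$ add one analogous ``private'' block with only $A_v{=}1$, so that every subpopulation is non-empty even for isolated vertices. Populate $O$ so that in every block the treated/untreated outcome gap equals a fixed constant $c>0$ for the $\mathit{Grp}{=}1$ tuples and $0$ for the $\mathit{Grp}{=}2$ tuples. Finally set $\pattern_{g_v}=(A_v{=}1)$, $\sigma=0$, $\tau=0$, budget $k$, and bound $B=kc$.

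For correctness, note first that $\pattern_{g_v}(\db)$ is exactly the private block of $v$ together with the blocks of edges incident to $v$, so $support(\phi_v)>0=\sigma$ (the support constraint is vacuous) and, by the uniform choice of $O$, $\Delta(\phi_v)=c$ for every $v$. Next, for distinct $u,v$ the subpopulations $\pattern_{g_u}(D_{g_1\cup g_2})$ and $\pattern_{g_v}(D_{g_1\cup g_2})$ intersect iff some edge block sets both $A_u{=}1$ and $A_v{=}1$, i.e.\ iff $\{u,v\}\in E$; hence $\expSim(\phi_u,\phi_v)=0$ exactly when $\{u,v\}\notin E$ and is strictly positive otherwise (reading the diversity constraint over distinct $\phi_i,\phi_j$, as $\tau<1$ forces). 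Consequently a set $\Phi=\{\phi_v:v\in S\}$ with $|\Phi|\le k$ satisfies all three constraints and has $\Delta(\Phi)\ge B=kc$ if and only if $|S|=k$ and $S$ is an independent set of $G$: the support constraint is free, the diversity constraint is equivalent to independence of $S$, and with unit-valued $\Delta$'s the objective forces $|S|=k$. Since the construction is polynomial in $|V|+|E|$, NP-hardness of \textsc{Independent Set} transfers to the decision version of Problem~\ref{prblem}.

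I expect the main obstacle to be the ``realizability'' bookkeeping rather than the combinatorics: verifying that the hand-crafted \db\ genuinely induces the claimed supports and Jaccard similarities, and, crucially, that all disparity scores come out equal and strictly positive while the causal-inference preconditions (well-defined CATE, overlap) hold. The deliberately degenerate design---empty backdoor set, trivial $T\to O$ mechanism, identically structured blocks---is exactly what keeps this verification short. If one instead reads the proposition as taking $\Phi_c$ together with its induced $support$, $\expSim$, and $\Delta$ values directly as the abstract input, the database construction becomes unnecessary and the statement is an immediate corollary of the hardness of cardinality-bounded independent set.
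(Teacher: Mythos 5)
Your reduction is correct and is essentially the same as the paper's: both reduce from \textsc{Independent Set} with one candidate explanation per vertex, tuples (or tuple blocks) per edge so that two explanations' subpopulations intersect iff the corresponding vertices are adjacent, and $\tau=0$ turning the diversity constraint into an independence requirement. The only difference is bookkeeping --- the paper leaves the data values arbitrary and sets $B=0$ so the objective is vacuous (relying on $\Delta\ge 0$), whereas you engineer uniform positive scores and set $B=kc$, which forces $|\Phi|=k$ and thus handles the backward direction slightly more carefully.
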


The proof (given in the Appendix) is based on a reduction from the Independent Set problem, \revb{indicating that the problem is hard w.r.t the number of \facts, which itself is exponential in the number of attributes, as explained above.}

\subsection{\common{System Parameters}}\label{sysparams}

\noindent
\common{\emph{Data-Specific and Scenario-Specific Parameters.} We draw distinction between two types of users: an ``admin'' or domain expert (equivalent to DB administrator in traditional RDBMS), who will set up \sysName for the ``end-users'' (equivalent to SQL programmers). The admin is responsible for setting up \textit{data-specific parameters}---such as mutable and immutable attributes \mutable and \immutable, the causal DAG \model, and the support threshold $\sigma$. These parameters are scenario-agnostic and are set according to the data properties such as whether it is realistic to treat certain data attributes and the known causal relationships among data attributes. The distinction between immutable and mutable attributes ensures explanations are actionable for policymakers (e.g., race cannot be changed) but typically requires external knowledge to identify. However, the recent rise of powerful LLMs with general-domain knowledge can help bypass the need for domain expertise in identifying immutable attributes.

In contrast, the \textit{scenario-specific parameters} are specified by the end-users. These include (1)~the outcome attribute $O$ and two groups of interest $g_1$ and $g_2$ based on the desired scenario, (2) the desired number of explanations $k$, and (3) the diversity threshold $\tau$.}

\smallskip

\noindent\common{\emph{Parameter Tuning.}\label{paramtune}
Several established ways exist for setting the support threshold $\sigma$. E.g., 
domain experts often set $\sigma$ based on what counts as ``frequent enough'' to be meaningful for their applications. For instance, in retail data mining, $1\%$ of transactions are often considered actionable, while in medical data, even rare patterns can be important. Alternatively, in an exploratory tuning, a common practice is to start with a relatively high $\sigma$ (to keep the search space manageable), then gradually lower it until the number of candidate explanation patterns becomes too high (resulting in performance regression). Finally, an automated method is the elbow method~\cite{bholowalia2014ebk}, where one can plot the number of candidate patterns vs. $\sigma$ to observe a sharp increase to choose $\sigma$ near that point. 
Tuning the diversity threshold ($\tau$) can be guided by the budget parameter $k$: a high value for $\tau$ may result in fewer than $k$ explanations, which indicates infeasibility ($k$ explanations do not exist that satisfy the diversity requirement). Thus, in an exploratory setting, a practical way to tune $\tau$ is to set it such that $k$ explanations are retrieved.}

\ignore{\section{Framework for Causal Data Repairs}
\label{sec:problem}
We consider a single-relation database (\brit{let's assume for now a single relation database and discuss adjustments for supporting multi-dimensional (normalized) datasets later on.}) over a schema $\mathbb{A}$. The schema is a vector of attribute names, i.e., $\mathbb{A} {=} (A_1, \ldots, A_s)$, where each $A_i$ is associated with a domain $Dom(A_i)$, which can be categorical or continuous. 
A database instance $\mathcal{D}$, populates the schema with a set of tuples $t {=} (a_1, \ldots, a_s)$ where $a_i \in Dom(A_i)$.

The database $\mathcal{D}$ is associated with a causal DAG $\mathcal{G}$ that depicts causal relationships among its attributes. This causal DAG can be constructed based on user domain knowledge or through the application of existing causal discovery algorithms (e.g., \cite{glymour2019review,shimizu2006linear,spirtes1991algorithm}). Recent work has suggested leveraging large language models such as GPT-3~\cite{brown2020language} or GPT-4~\cite{openai2023gpt4} to aid in causal the discovery processes \cite{kiciman2023causal,DBLP:journals/pvldb/YoungmannCSZ23}. 
In Section \ref{?}, we will remove the assumption of a predefined causal DAG and showcase that the proposed framework can generate meaningful explanations even in the absence of an input causal DAG, utilizing a default causal model (as was done in \cite{galhotra2022hyper}).

\vspace{1mm}
We assume that the user has a causal hypothesis they wish to validate with their data. For instance, an analyst examining the Stack Overflow dataset believes that 
 holding the role of a data scientist positively influences salary. Another scenario could involve an analyst examining a dataset on car accidents who asserts that poor visibility ranks among the primary (i.e., in the top-k) causes of severe car accidents. In both scenarios, the hypothesis can be modeled as a claim regarding the causal effect of a treatment variable $T$ on an outcome $O$.
 
We assume that the user's hypothesis is not supported by the dataset, and the user's objective is to comprehend the reasons behind this discrepancy. The proposed framework offers a prioritized list of data transformations. Applying a proposed transformation to the data results in a scenario where the hypothesis holds true. A proposed transformation serves as an explanation for the failure of the hypothesis, indicating specific aspects of the data that need adjustment to validate the hypothesis.

The suggested transformations are ranked based on their influence on the input data \anna{I am unclear here. Shouldn't they be ranked based on their \textit{impact} on the hypothesis?}. In essence, every proposed transformation guarantees that the hypothesis holds true \anna{Are we giving a Threshold for ATE, i.e., we observed ATE = 0, We want ATE > 0.5. OR should we rank the transformations based on their impact on increasing the ATE? In this case, a transformation that scores higher ATE will rank smaller.} after its application. The ranking assigns higher priority to a transformation if it affects a lower number of tuples. \anna{So we are taking two things into consideration: one is it's impact on the event we care about (measurement of causal effect), and the second is the side-effect that we want to minimize.}
 
Before formally defining the problem, we first define the notion of a transformation,


\subsection{Data Transformations}
\brit{Anna, can you please define here the: Profile-Violation-Transformations?}
\anna{I think it would be better if we not define the notion of ``violation'' explicitly, but rather focus on ``trend'' (e.g., coarse granularity), and corresponding transformation (e.g., fine granularity)}

\brit{for each transformation, we should assign a score that measures its impact on the data. This score will later on serve to rank the transformations based on their impact. }

\anna{Are we measuring this impact based on profile alteration or based on ``side effect'', aka how many data tuples needs to be altered (and to what extent)?}

\subsubsection{Data Profile}
Data profiles encode data characteristics. They can represent property over a single data attribute (e.g., mean of an attribute) or interaction among multiple attributes (e.g.,
correlation between a pair of attributes).

\begin{definition}[Data Profile]
Given a dataset $D$, a data profile $P$
denotes properties or constraints that tuples in $D$ (collectively) satisfy.
\end{definition}

\subsubsection{Profile Trend}
Data profile, by itself expresses absolute characteristic. We use \emph{profile trend} to capture the relative comparison between multiple instances of a profile, often, with respect to a ``baseline profile''. For example, the profile ``correlation = 0.4'' is an absolute measure of correlation between a pair of attributes. Trend of this profile could be ``correlation between a pair of attributes is high'' (with respect to a baseline profile ``correlation = 0.0'').

\begin{definition}[Baseline Profile]
A baseline profile defines a ``neutral'' profile which can be used as a reference to determine trends of other profiles. Baseline profile is paired with a comparison operator $\prec$. We use $P^* = \langle P, \prec \rangle$ to denote a baseline profile, along with a comparison operator.
\end{definition}

\begin{definition}[Profile Trend]
 Given a dataset $D$, a data profile $P$, and a baseline profile $P^*$, a \emph{profile trend} $T(D, P, P^*) \mapsto [+, -]$ returns a trend direction that indicates in which direction $P$ is situated in wrt $D$, and $P^*$.
\end{definition}
\anna{I think we can still leave this trend definition as is. The reason is that there can be MULTIPLE transformation functions that can alter a profile, but our initial signal will be some profile and most profiles can either increase or decrease (e.g., fraction of missing values). Now ALL imputation techniques will reduce the fraction of missing values, hence, pushing the trend to the same direction. However, different transformation functions will have different impact on the outcome (ATE).}
\anna{For example, imputing missing value can increase the representation of certain groups, i.e., change data distribution in a certain way. If we look through the lens of only one sub-population (e.g., females), a certain imputation technique can have impact on it in either direction. In fact, maybe the profile we should care in the example is NOT imputing missing values, because the issue there wasn't because data was missing, rather, it was about equal representation. In other words, the profile was that ``Non-binary individuals had x\% representation''. This profile was violated when rows were dropped, the percentage went BELOW x\%, and after certain types of imputation, the value can go above x\%. Therefore, I think a profile trend makes sense, and having a binary notion of trend will help us formulating our greedy approach using some monotonic properties that helps in search (e.g., binary search).}

 When clear from the context, we will use simply use $T(D, P)$ to denote the profile trend and omit the baseline profile $P^*$. Additionally, we use $\neg T(D, P)$ to denote the reverse trend of $T(D, P)$. Often, for a particular case in hand, the baseline profiles will be pre-defined and remain fixed throughout the process.

\subsubsection{Transformation Function}
We use transformation functions to alter the trend of a profile. For example, adding noise to a certain attribute will decrease the correlation between that attribute and other attributes. There can be multiple transformation function available to alter trend of a profile

\begin{definition}[Transformation Function]
Given a dataset $D$, a data profile $P$, and a trend $T(D, P)$, a
transformation function $F (D, P, T, \mathbb{S})$ alters tuples in $\sigma_{\mathbb{S}}(D)$ to
produce $D'$ such that $T(D', P) = \neg T(D, P)$.
\end{definition}

In the above definition, the selection predicate $\S$ characterizes the subset of tuples in $D$ that should be altered during the transformation.
This helps us limit the impact of transformation
to impact only a certain subset of the data. An empty $\S$ indicates that all tuples can be impacted by the transformation function. We will often omit the selection predicate $\mathbb{S}$ to indicate that it is empty.

\subsubsection{Composition of Transformations}
Often, we will need to apply a series of transformation functions over a dataset. In that case, we need to define composition of transformations using the composition operator ($\circ$).

\begin{definition}[Composition of Transformations]
\begin{sloppypar}
Given a dataset $D$, a pair of data profiles $X$ and $Y$, their trends $T_X$ and $T_Y$, and a pair of transformation functions $F_X$ and $F_Y$, 
$(F_X \circ F_Y)(D, \{X, Y\}, \{T_X, T_Y\}) = 
F_X (F_Y(D, Y, T_Y), X, T_X)$. 
Further, if $D'' = (F_X \circ F_Y)(D, \{X, Y\}, \{T_X, T_Y\})$, 
then $T(D'', X) = \neg T(D, X) \wedge T(D'', Y) = \neg T(D, Y)$. 
\end{sloppypar}
\end{definition}

\anna{As a starter, I think we should report only a UNIT transformation as root cause, and not complicate with multiple transformations as that will result in lack of explainability. It will be unclear how multiple transformations interact with one another and may even cancel each other out. Simply put, a brute-force solution would be to TRY OUT all transformation, and observe it's impact on ATE. However, the problem in this is that the solution is then simply O(n), where n is the number of transformations available. This is of linear complexity and the solution becomes straightforward. The way it can be interesting is when we can show the number of transformation is finite, but the parameter space is infinite and the contribution could be to optimally choose parameters of the transformation functions using some monotonic property (e.g., if we keep increasing fraction of non-binary individuals, then the ATE keeps increasing).}

\anna{I am not sure about the above. If composition of multiple transformations guarantees that their impact sustains, then they are monotonic. But I think we found counter examples for this. Therefore, the above definition may not hold.}

\begin{definition}[Transformation Impact]
    
\end{definition}

\brit{Say that there is nonmonotonicity in the composition of transformations, as they may cancel out each other (we have an example for that). This means that we can't prune away one transformation if it does not yield a good dataset, as maybe its composition with another transformation is useful}

\anna{Yes, but then there is no property that can be guaranteed after composition. So I can remove the last line from the definition of the composition of transformations.}

The set of PVTs we support:
\begin{itemize}
    \item Missing values.  
\textbf{Repair}: impute using different techniques
Opposing trends in different subpopulation

\item Skewness (imbalanced data)
If the treatment and control groups are not comparable (in size? Propensity score?), it can result in biased estimates of causal effects.
\textbf{Repair}: Resample, downsample overrepresented, upsample underrepresented

\item Measurement Error/Format issues/Unit mismatch
Sometimes data has measurement errors/issues, such as differences in formats and units. One example is “123 s” vs “2 days”. Often, during data preprocessing, only the numeric part is kept (123 and 2 in the above case), which will result in measurement errors, leading to incorrect causal conclusions.
Possibly, this can happen after a data integration step where data from several sources has been combined into a single view
\textbf{Repair}: Unsure how to repair this (maybe use LLM for homogenizing data? Might use some program synthesis/data cleaning stuff)

\item * Data size
the data is too small to get statistically significant results
\textbf{Repair}: Replicate data

\item Outliers/out of the domain.
\textbf{Repair}: Remove outliers (but again, MNAR things)

\item Spurious Correlations: \textbf{Repair}: Add noise to remove correlations


\item * Wrong Granularity. \textbf{Repair}: change the granularity of the data (e.g., from age to age decade) to increase/decrease correlations

\item * Irrelevant variables. \textbf{Repair}: Drop/Remove column

\end{itemize}

\subsection{Problem Definition}
\label{subsec:problem}

\textbf{Causal Hypothesis}:
Given an input database $\db$ containing treatment and outcome variables (denoted as $T$ and $O$, respectively), the user lacks information about the strength of the causal effect of $T$ on $O$. Nevertheless, the user asserts that among all factors affecting $O$, $T$ should be counted among the top $k$ causes with the highest (whether positive or negative) causal effect on $O$. Here, users specify the value of $k$. To test this hypothesis, we need to calculate the causal effect of all potential factors affecting $O$ and rank them based on their causal effects. If $T$ ranks among the top $k$ causes, the hypothesis is considered validated.

\begin{sloppypar}More formally, given a database instance $\db$ with attributes $T, A_1, \ldots, A_s, O$ and a number $k{>}0$, we say that $T$ is a \emph{top-$k$ cause} of $O$ if there are no more than $k-1$ attributes $A_i \in \mathbb{A}$ such that: 
$$ATE_{\mathcal{G}}^{\db}(A_i,O) > ATE_{\mathcal{G}}^{\db}(T, O)$$ 
\end{sloppypar}

We assume that in the input database $\db$, $T$ is not a top-$k$ cause of $O$ (where $k$ is a user-defined parameter). We formalize the \probName\ problem as the problem of finding a ranked list of possible transformations to the dataset, so that $T$ will become a top-$k$ cause of $O$. 



\begin{definition}[\probName]
   We are given a database instance $\db$ with attributes $\mathcal{A}$, where $T,O {\in} \mathcal{A}$, along with a corresponding causal DAG $\mathcal{G}$. Given a number $k$, and a set of candidate transformations $\mathcal{T}$, find the top-$l$ transformations with the lowest impact on $\db$, such that:
   \begin{itemize}
       \item when applying each transformation $t$ on $\db$, $T$ becomes a top-$k$ cause of $O$. \anna{That means our final result is ONE single transformation?}
       \item the selected transformations are ordered based on their impact on $\db$. \anna{Also, I was thinking that we can have an alternative problem formulation where we rank the transformations based on their capability to alter the ATE. Which transformation can give the max (min) possible ATE?}
   \end{itemize}

\end{definition}

\anna{Playing devil's advocate... why would one care about top K? What if all the current reported causes are merely low-impact causes and being top-K means nothing, what one might really care about is HIGH ATE value. They might specify a threshold saying that they expect the ATE value to be greater/less than the threshold. Why is this problem less interesting than the top-k one we are proposing. To me, the top-k formulation seems a bit arbitrary. If I was a reviewer, I would comment asking for incrase/decrease/thresholding of ATE.}

In Section \ref{??}, we will demonstrate that our proposed framework is capable of explaining a different category of causal hypotheses. In this scenario, the user has a predefined range of values for the causal effect of $T$ on $O$ that is not substantiated by the data (e.g., the user assumes a positive, negative, or greater-than-threshold causal effect). In such cases, the framework generates a ranked list of potential data transformations. The application of each transformation ensures that the hypothesis holds true. \anna{Can't we also solve the dual problem? i.e., find the BEST transformation to increase/decrease ATE as much as possible while putting a constraint on side effect over the data (at most k tuples can be changed)?}

\brit{This should come as an extension in a stand-alone section later on}

\vspace{2mm}
\noindent
\textbf{Problem 4}: The inverse version of Problem 2: find the largest subset dataset $\db' \subseteq \db$ such that over $\db'$, $T$ becomes a top-$k$ cause of $O$.  \anna{Even here I would argue that a more interesting problem would be the threshold one. Find a subset such that ATE $\ge \tau$.}

}


\section{The \sysName Algorithm}
\label{sec:algo}

Since the number of possible \facts\ can grow exponentially with the number of attributes and their domain values, enumerating all possible explanations is infeasible. Moreover, as shown in Proposition~\ref{prop:hard}, even if the full search space could be materialized, finding the optimal solution remains NP-hard. We therefore propose a highly scalable heuristic approach.
\reva{Our method builds on prior work in subpopulation mining~\cite{agrawal1994fast} (Section~\ref{subsec:step_1}) and explanation generation~\cite{youngmann2024summarized} (Section~\ref{subsec:step_2}), but our problem formulation is novel and introduces new challenges. A key contribution of our work is the integration of pattern mining with causal analysis in a practically efficient way. Specifically, our framework must explore a vast search space to identify subpopulations that exhibit both significant disparities and high-quality causal explanations. In addition, it must satisfy a diversity constraint that requires computing similarity between \emph{every} pair of explanations. To improve efficiency, we employ clustering. As we demonstrate in Section~\ref{subsec:exp_quality}, no existing method can be directly applied to solve this problem.}
Although \sysName\ does not provide theoretical guarantees, our experiments show that it achieves performance comparable to that of an exhaustive search that computes the optimal solution.

The \sysName framework comprises three steps:
(1)~the \textit{subpopulation miner}, which identifies subpopulations with sufficient support;
(2)~the \textit{explanation miner}, which uncovers causal explanations for each candidate subpopulation; and
(3) the \emph{greedy search}, which efficiently selects $k$ explanations adhering the diversity constraint.



\vspace{-4mm}
\subsection{Subpopulation Miner}
\label{subsec:step_1}

\revc{Our first objective is to identify candidate subpopulations (data regions) where the disparity between groups $g_1$ and $g_2$ is significant and supported by sufficient data. To this end, we adapt the classical \emph{Apriori} algorithm~\cite{agrawal1994fast} to efficiently discover all subpopulations whose support exceeds a given threshold $\sigma$. This procedure is restricted to the set of immutable attributes $\immutable$, ensuring that the resulting candidate subpopulations are defined exclusively by immutable attributes. The full pseudocode is given in Algorithm~\ref{alg:apriori}, which modifies the standard Apriori algorithm to restrict candidate generation and frequency counting to the immutable attributes~$\immutable$. }

\setlength{\textfloatsep}{0pt}
\begin{algorithm}[t]
\small
\caption{Apriori-Based Subpopulation Miner}
\label{alg:apriori}
\KwIn{Dataset $\db$, immutable attributes $\immutable$, min support threshold $\sigma$}
\KwOut{Set of candidate subpopulations $\mathcal{C}$}
$\mathcal{C}_1 \leftarrow \{ \{(A=v)\} \mid A \in \immutable, v \in \text{dom}(A), \text{support}(A=v) \ge \sigma \}$\\
$\mathcal{C} \leftarrow \mathcal{C}_1$\\
$k \leftarrow 2$\\

\While{$\mathcal{C}_{k-1} \neq \emptyset$}{
    $\mathcal{L}_k \leftarrow$ \revc{\textsc{GenerateCandidates}($\mathcal{C}_{k-1}$)}\\
    $\mathcal{C}_k \leftarrow \emptyset$\\
    \ForEach{$c \in \mathcal{L}_k$}{
        $\text{support}(c) \leftarrow \frac{|\{ r \in \db \mid r \text{ satisfies } c \}|}{|\db|}$\\
        \If{$\text{support}(c) \ge \sigma$}{
            $\mathcal{C}_k \leftarrow \mathcal{C}_k \cup \{c\}$\\
        }
    }
    $\mathcal{C} \leftarrow \mathcal{C} \cup \mathcal{C}_k$\\
    $k \leftarrow k + 1$\\
}
\Return{$\mathcal{C}$}\\
\BlankLine
\SetKwFunction{GenCand}{GenerateCandidates}
\SetKwProg{Fn}{Function}{:}{}
\Fn{\GenCand{$\mathcal{C}_{k-1}$}}{
    \Return\ all $k$-itemsets formed by joining pairs of $(k-1)$-itemsets in $\mathcal{C}_{k-1}$ that share the first $k-2$ items, and whose all $(k-1)$ subsets are in $\mathcal{C}_{k-1}$\\
}
\end{algorithm}

\sysName supports various use cases, such as investigating surprising
observations, debugging fairness issues, or identifying reverse trends. Each
scenario may require search for specific subpopulations where the average outcome
for $g_1$ is either higher or lower than that for $g_2$. To accommodate this, we
introduce a filtering step, which retains subpopulations that meet the
relevant condition.


\subsection{Explanation Miner}
\label{subsec:step_2}
For each subpopulation pattern $\pattern_g$ found in the previous step, given an outcome variable $O$, and two groups $g_1$ and $g_2$, the next step is to explain the disparity within the subpopulation, i.e., to identify the treatment pattern $\pattern_e$ with the highest disparity score.
\newtext{To this end, we adapt the treatment mining step of CauSumX~\cite{youngmann2024summarized} to our setting.} CauSumX provides causal explanations for the results of aggregate queries. An explanation pattern consists of a set of tuples from the aggregate view (i.e., the query results), and a treatment pattern is used to quantify the causal effect (in terms of CATE) of the treatment on the outcome within the relevant subview. \newtext{CauSumX employs a heuristic lattice traversal approach to identify promising treatment patterns with high CATE values. However, unlike CauSumX, which estimates the CATE value, we estimate the disparity score of the treatment pattern under consideration.} 
Furthermore, we adjust the search to the context, e.g., when exploring parts of the data where the average outcome for $g_1$ exceeds that of $g_2$, the explanation should elucidate this phenomenon---specifically, we aim to identify treatments that favor $g_1$ over $g_2$, and filter out the others.


\newtext{We note that CauSumX operates only on aggregate views (i.e., the results of aggregate queries), where considering a relatively small number of grouping patterns is sufficient (the average was 24~\cite{youngmann2024summarized}). In contrast, identifying a treatment for each subpopulation requires searching for subpopulations with significant disparity across the entire dataset. This requires consideration of a much larger number of potential grouping patterns (in our experiments, the average was 184).  
As a result, this module is the primary bottleneck of \sysName. To improve runtime, we introduce the following optimizations:}

\smallskip

\noindent\newtext{\textbf{Limiting patterns.} To ensure conciseness (and thus
interpretability) of explanations while reducing runtime, we restrict the
search for treatment patterns to at most two predicates (similar
to~\cite{agmon2024finding}).} \revc{However, this is \emph{not} a fundamental limitation of our technique. If the user wishes to get longer explanations, \sysName\ can explore patterns with any number of predicates. This, of course, comes with a cost of a longer runtime. In our evaluation, even when allowing 3 predicates, most explanations had at most two predicates. Thus, to optimize the runtime, we limit the number of predicates to 2.}


\smallskip

\noindent\textbf{Parallelization.} \reva{The independence of subpopulations allows treatment pattern discovery to be performed in parallel. We exploit this property to parallelize the computation, thereby improving scalability and reducing runtime.}

\smallskip

\noindent\textbf{Caching.} Computing the disparity score for a given treatment
pattern requires adding it as a node to the underlying causal
DAG~\cite{youngmann2024summarized}. \newtext{Often, this may lead to a DAG that
has been previously encountered. To avoid redundant computations, we cache
results related to previously encountered causal DAGs.}

\smallskip

\noindent\textbf{Sampling.} As was done in ~\cite{youngmann2024summarized},
instead of focusing on obtaining precise CATE values, we estimate CATE from a
random sample of the data. We use a fixed sample size of $50{,}000$ tuples,
guided by our empirical findings, which indicates that this sampling size
achieves highly accurate CATE estimations while maintaining a relatively low
runtime. However, the sampling ratio is a customizable system parameter and the
user is free to tune it for more accurate results.





\subsection{Greedy Search}
\label{subsec:step_3}


Given the set of candidate disparity explanations $\{\phi_i\}_{i = 1}^l$ obtained in the previous two steps, our goal is to identify a set of $k$  explanations with the highest disparity scores, adhering to the constraints of Problem~\ref{prblem}. 
\reva{A key challenge is balancing scalability and diversity in selecting candidate disparity explanations. To address this, we introduce a clustering step: rather than evaluating all candidate explanations individually, we group similar subpopulations and assign a representative explanation to each cluster, thereby reducing redundancy and improving efficiency while maintaining coverage of distinct subpopulations.
The rationale is twofold: (1)~it enables us to handle a large number of candidate subpopulations without exhaustive enumeration, and (2)~selecting a representative from each cluster ensures that the final set of explanations is diverse, avoiding multiple explanations that represent the same subpopulation.}

Specifically, we proceed as follows. We first cluster the candidate explanations using a hierarchical clustering algorithm (using Scipy~\cite{virtanen2020scipy} implementation) based on the symmetric difference among the subpopulations. Then, from each cluster, we select a random representative explanation and assign its disparity score to the entire cluster. We then iteratively select $k$ disparity explanations.
At the first iteration, we pick a random explanation from the cluster with the highest disparity score. 
At the $j{-}th$ iteration (for $1<j\leq k$), we select the explanation $\phi^*$ such that:
$$
\phi^* = \argmax_{\phi \in \Phi \land \expSim(\phi, \phi') < \tau } \Delta(\phi), \quad \text{for } \phi' \in \Phi_{j-1}.
$$
where $\Phi$ is the set of candidate explanations that consist of a random explanation from each cluster, and $\Phi_{j-1}$ is the set of explanations selected up to iteration $j$.

\subsubsection*{Complexity analysis}
The maximum number of disparity explanations in a database $\db$ with attributes
$\attrset$ is bounded by $|\db|^{|\attrset|}$ (considering both subpopulation
and treatment patterns), which is polynomial in terms of data complexity,
assuming a fixed schema~\cite{Vardi82}. Our greedy search is also polynomial in
the number of explanations considered. Additional operations, such as
calculating CATE values, are polynomial in $\db$, leading to worst-case
polynomial data complexity. As we demonstrate in \cref{subsec:scalability},
\sysName is capable of efficiently handling large, high-dimensional datasets in
practice.

\begin{table*}[t]
\begin{center}
\caption{\small Details of the datasets for experiments and case studies.}
\vspace{-3mm}	
\resizebox{1\textwidth}{!}{
\begin{tabular}{@{}lrrrllrrlrr@{}}
	
\toprule
\textbf{Dataset} &
\multicolumn{1}{c}{\#Tuples} & 
$|\mathbf{I}|$ & 
$|\mathbf{M}|$ &  
\multicolumn{1}{c}{$g_1$} & 
\multicolumn{1}{c}{$g_2$} & 
$|g_1|$ & 
$|g_2|$ & 
\multicolumn{1}{c}{$O$} & 
\texttt{AVG}$_{g_1}$ & 
\texttt{AVG}$_{g_2}$ \\ 

\midrule

Stack Overflow (SO)~\cite{stackoverflowreport} 
& 47,702 & 4 & 6 
& Data/business analysts & Back-end developers 
& 4,088 & 28,987 
& Total Compensation (TC) & \$106K & \$96K \\
American Community (ACS)~\cite{ACS_Data} 
& 1,420,652 & 15 & 10 
& Manual labor & Overall data 
& 99,790 & 1,420,652 
& Likelihood of having a health insurance & 78.6\% & 91.5\% \\
Medical Expenditure Panel (MEPS)~\cite{MEPS_Data_Overview} 
& 13,528 & 7 & 5 
& Males & Non-males 
& 5,731 & 7,797 
& Likelihood of feeling nervous frequently & 41.6\% & 46.9\% \\

\bottomrule
\end{tabular}}
\vspace{-3mm}
\label{tab:datasets}
\end{center}
\end{table*}

\section{Experimental Evaluation}
\label{sec:exp}
We present an experimental evaluation of the effectiveness of
\sysName in practical settings. We aim to address the following questions:

\begin{itemize}[leftmargin=*,topsep=0pt]

\item \textbf{Q1}: How does the quality of \sysName-generated disparity
explanations compare to that of existing methods?
(Section~\ref{subsec:exp_quality})

\item \textbf{Q2}: How is the quality of the explanations
affected by various parameters, and how to tune the system parameters?
(Section~\ref{subsec:parameters})

\item \textbf{Q3}: How efficient and scalable is \sysName?
(Section~\ref{subsec:scalability})

\item \textbf{Q4}: How do our proposed optimizations affect \sysName' runtime
performance? (Section~\ref{subsec:ablation})

\end{itemize}

\subsection{Experimental Setup} \label{subsec:exp_setup}
All experiments were performed on a Windows computer, Intel CPU, with 16 GB
memory. We implemented \sysName in Python 3 and
used DoWhy library~\cite{dowhypaper} to compute the CATE values. CATE values are estimated using a linear regression model. For simplicity of visualization, we omit $p$-values, but note that all reported causal effects are statistically significant (i.e., $p < 0.05$). Our source code is publicly available~\cite{repolink}.

\subsubsection{Datasets, causal DAGs, and preprocessing} We used three popular
datasets (\cref{tab:datasets}) and obtained the corresponding causal DAGs given
in prior work~\cite{youngmann2023causal}. To process continuous
numerical attributes, we applied equal-width binning across 10 bins.

\smallskip

\noindent\textbf{SO:} The Stack Overflow Developer
Survey~\cite{stackoverflowreport} dataset contains responses from developers
worldwide, covering topics such as professional experience, education,
technologies used, and employment-related information, such as annual total
compensation (TC).

\smallskip

\noindent\textbf{ACS:} The American Community Survey (ACS)~\cite{ACS_Data} is a
nationwide survey conducted by the U.S. Census Bureau, with demographic, social,
economic, and housing data. We focused on 7 states: California, Texas, Florida,
New York, Pennsylvania, Illinois, and Ohio.

\smallskip

\noindent\textbf{MEPS:} The Medical Expenditure Panel Survey
(MEPS)~\cite{MEPS_Data_Overview} dataset provides information on healthcare
utilization, expenditures, insurance coverage, and demographics of individuals
in the U.S.

\subsubsection{System parameters} Unless otherwise specified, we used the following default parameters: the explanation set size $k = 5$; minimum support threshold $\sigma = 0.05$ (considering only groups covering at least 5\% of the data); maximum similarity threshold for the diversity constraint $\tau = 0.55$ based on Jaccard similarity; and $10$ clusters in the greedy search phase.

    

	

    

\subsubsection{Use cases} Throughout our experimental evaluation, we examine
three scenarios that represent different use cases of \sysName (as mentioned in
Section~\ref{sec:introduction}). The description of the groups, the outcome
variables, and relevant statistics are given in Table \ref{tab:datasets}.

\smallskip \noindent\textbf{(1)~Investigating a surprising fact.} \looseness-1
In tech, developers generally earn more than analysts. However, an analysis of
the SO dataset revealed a surprising trend: data analysts ($g_1$)
earn more on average than backend developers ($g_2$). To analyze this, we fix
the outcome $O$ as total compensation (TC). We aim to identify which
subpopulations significantly contribute to this disparity and the
underlying explanations in terms of treatments that favor $g_1$ over $g_2$.

\smallskip\noindent\textbf{(2)~Fairness debugging.}
In the ACS dataset, we observe people in certain type of occupation to have
lower than average rate of health-insurance coverage. Specifically, we fix the
outcome $O$ to indicate whether an individual holds a health insurance; $g_1$
represents individuals employed in manual-labor occupations (cleaning,
maintenance, farming, fishing, construction, etc.), where the health insurance
coverage rate is only 78.6\%; and $g_2$ represents the entire dataset across all
occupations, with a coverage rate of 91.5\%. Our objective is to investigate the
underlying factors contributing to this discrepancy by identifying
subpopulations for which the average insurance coverage rate of $g_1$ is
significantly lower than that of $g_2$, along with a causal explanation specific
to each subpopulation.

\smallskip\noindent\textbf{(3)~Finding reverse trends.}
We investigate an intriguing observation in the MEPS dataset: while typically
males ($g_1$) have a lower likelihood of feeling nervous frequently than
non-males ($g_2$), our analysis reveals subpopulations where a reverse trend
holds. To dig deeper, we fix the outcome $O$ to indicate whether an individual
feels nervous frequently. Our goal is to pinpoint subpopulations where the
relative trend involving $g_1$ and $g_2$ is reverse compared to the global trend
and explore the underlying causes.

\subsubsection{Baselines} \label{sec:six:one} We consider the following baselines:

\noindent\textbf{Brute Force.} We employ an exhaustive Brute Force
algorithm, which considers all possible $k$-sized explanation sets. Note that in the
absence of an absolute ground-truth, the results obtained from this
technique can be treated as the optimal solution for
Problem~\ref{prblem}.

\noindent\textbf{Top-$\mathbf{k}$.} This baseline ignores the
diversity constraint and simply returns the top-k explanations ranked by their
disparity scores.

\noindent\textbf{XInsight.}
XInsight~\cite{abs-2207-12718} is designed to identify both causal and
non-causal patterns to explain disparities between two groups in aggregate SQL
queries. Unlike \sysName, which provides local (possibly different) explanations
for each subpopulation, XInsight provides a single global explanation for the
entire data.
Since XInsight includes a causal discovery phase, we ensure a fair comparison as
follows: for each treatment pattern (i.e., explanation) identified by \sysName,
we report its causal effect over the entire dataset (in the ``Global'' column in
Tables~\ref{tab:results_so}--\ref{tab:results_meps}). We aim to empirically
demonstrate that subpopulation-specific explanations may not be valid globally.


\begin{table*}[t] 
	
	\caption{\small Overall disparity scores ($\Delta$), runtimes, and diversity
visualizations for explanations generated by various baselines and \sysName
across three use cases. We report the disparity scores w.r.t the Brute Force
baseline since it gives the ground truths. In the heat-map, the diagonal is
black, denoting 0 self distance from an explanation to itself. Lighter colors
denote less similar pair of explanations, offering diversity.}
\vspace{-2mm}
\centering
\resizebox{\textwidth}{!}{
    \begin{tabular}{l
				r@{\hskip 3mm}r@{\hskip 3mm}cc
                r@{\hskip 3mm}r@{\hskip 3mm}cc
                r@{\hskip 3mm}r@{\hskip 3mm}cc
		}
        \toprule
    	\multirow{2}{*}{\diagbox{\textbf{Approach}}{\textbf{Dataset}}}  & 
    	\multicolumn{4}{c}{\cellcolor{blue!10}\textbf{SO}} &
    	\multicolumn{4}{c}{\cellcolor{red!20}\textbf{ACS}} &
    	\multicolumn{4}{c}{\cellcolor{green!5}\textbf{MEPS}} \\
        
        & 
	\cellcolor{blue!10}{\textbf{$\Delta$ (\%)}} &
        \cellcolor{blue!10}{\textbf{Runtime (s)}} &
        \cellcolor{blue!10}{\textbf{\#Explanations}} &
        \cellcolor{blue!10}{\textbf{Diversity}} &
        
        \cellcolor{red!20}{\textbf{$\Delta$ (\%)}} &
        \cellcolor{red!20}{\textbf{Runtime (s)}} &
        \cellcolor{red!20}{\textbf{\#Explanations}} &
        \cellcolor{red!20}{\textbf{Diversity}} &
        
        \cellcolor{green!5}{\textbf{$\Delta$ (\%)}} &
        \cellcolor{green!5}{\textbf{Runtime (s)}} &
        \cellcolor{green!5}{\textbf{\#Explanations}} &
        \cellcolor{green!5}{\textbf{Diversity}} \\
        
        \midrule
        
        \textbf{Brute Force}     
        & \cellcolor{blue!10}{100} & \cellcolor{blue!10}{181} & \cellcolor{blue!10}{5} & 	\cellcolor{blue!10}\raisebox{-3mm}
        {\includegraphics[width=0.06\textwidth]{figs/heatmaps/so_bruteforce.jpg}}
        & \cellcolor{red!20}{100} & \cellcolor{red!20}{3130} & \cellcolor{red!20}{5} & 		\cellcolor{red!20}\raisebox{-3mm}{\includegraphics[width=0.06\textwidth]{figs/heatmaps/acs_bruteforce.jpg}}
        & \cellcolor{green!5}{100} & \cellcolor{green!5}{19} & \cellcolor{green!5}{5} & 	\cellcolor{green!5}\raisebox{-3mm}{\includegraphics[width=0.06\textwidth]{figs/heatmaps/meps_bruteforce.jpg}} \\[3mm]
                
        \textbf{Top-K}           
        & \cellcolor{blue!10}{118} & \cellcolor{blue!10}{180} & \cellcolor{blue!10}{5} & 		\cellcolor{blue!10}\raisebox{-3mm}{\includegraphics[width=0.06\textwidth]{figs/heatmaps/so_topk.jpg}}
        & \cellcolor{red!20}{121} & \cellcolor{red!20}{1514} & \cellcolor{red!20}{5} & 		\cellcolor{red!20}\raisebox{-3mm}{\includegraphics[width=0.06\textwidth]{figs/heatmaps/acs_topk.jpg}}
        & \cellcolor{green!5}{166} & \cellcolor{green!5}{14} & \cellcolor{green!5}{5} & 	\cellcolor{green!5}\raisebox{-3mm}{\includegraphics[width=0.06\textwidth]{figs/heatmaps/meps_topk.jpg}} \\[3mm]
        
        \textbf{DivExplorer}     
        & \cellcolor{blue!10}{N/A} & \cellcolor{blue!10}{33} & \cellcolor{blue!10}{0} & 		\cellcolor{blue!10}{N/A}
        & \cellcolor{red!20}{86} & \cellcolor{red!20}{1055} & \cellcolor{red!20}{5} & 		\cellcolor{red!20}\raisebox{-3mm}{\includegraphics[width=0.06\textwidth]{figs/heatmaps/acs_de.jpg}}
        & \cellcolor{green!5}{N/A} & \cellcolor{green!5}{4} & \cellcolor{green!5}{0} &	  	\cellcolor{green!5}{N/A}\\[3mm]
        
        \textbf{FairDebugger}    
        & \cellcolor{blue!10}{N/A} & \cellcolor{blue!10}{2258} & \cellcolor{blue!10}{0} & 	\cellcolor{blue!10}{N/A}
        & \cellcolor{red!20}{26} & \cellcolor{red!20}{717} & \cellcolor{red!20}{2} & 			\cellcolor{red!20}\raisebox{-3mm}{\includegraphics[width=0.06\textwidth]{figs/heatmaps/acs_rf.jpg}}
        & \cellcolor{green!5}{N/A} & \cellcolor{green!5}{103} & \cellcolor{green!5}{0} & 		\cellcolor{green!5}{N/A} \\[3mm]
		
        \textbf{\sysName (this paper)}        
        & \cellcolor{blue!10}{55} & \cellcolor{blue!10}{62} & \cellcolor{blue!10}{5} & 		\cellcolor{blue!10}\raisebox{-3mm}{\includegraphics[width=0.06\textwidth]{figs/heatmaps/so_exdis.jpg}}
        & \cellcolor{red!20}{84} & \cellcolor{red!20}{1170} & \cellcolor{red!20}{5} & 		\cellcolor{red!20}
        \raisebox{-3mm}{\includegraphics[width=0.06\textwidth]{figs/heatmaps/acs_exdis.jpg}}
        & \cellcolor{green!5}{100} & \cellcolor{green!5}{17} & \cellcolor{green!5}{5} & 	\cellcolor{green!5}\raisebox{-3mm}{\includegraphics[width=0.06\textwidth]{figs/heatmaps/meps_exdis.jpg}} \\
		        
        \bottomrule
    \end{tabular}
}
\label{tab:quality_metrics_new}
\end{table*}

\noindent\textbf{DivExplorer.}
DivExplorer~\cite{pastor2021looking} analyzes the behavior of classification models to identify data regions where a performance metric (e.g., false positive rate) deviates significantly from its value over the entire dataset. Given a divergence metric, it identifies data subsets, defined by patterns, where the metric shows a significant disparity compared to the overall population. We use
DivExplorer as a baseline, setting the divergence function to the disparity
score. To adapt it to our setting, we use a fixed treatment. This treatment is selected as the one that yields
the highest disparity score between the two groups of interest in the
overall data. We then used the last step of ExDis to find a $k$-size solution.


\noindent\textbf{FairDebugger.} \looseness-1
FairDebugger~\cite{DBLP:journals/corr/abs-2402-05007} identifies training data
subsets that contribute to fairness violations in random forest models by
evaluating how their removal affects model outcomes. To adapt FairDebugger to
our setting, we fix the treatment pattern to the one exhibiting most 
disparity across the entire dataset. We then assess the influence of each
subpopulation by measuring the change in disparity after removing it. The
difference in disparity scores quantifies the subpopulation's contribution to
the overall disparity.  We then used the last step of ExDis to find a $k$-size solution.




\begin{table*}[t]
\centering
\caption{\small Disparity explanations discovered by \sysName for the SO
dataset. Subpopulation patterns  are highlighted in \colorbox{Orange!30}{\strut{Orange}},
while treatment patterns are highlighted in \colorbox{Cyan!30}{\strut{Cyan}}. We highlight the two groups
of interest using \colorbox{Yellow!70}{\strut{Yellow}} and \colorbox{Lavender!70}{\strut{Pink}}.
The first explanation highlights that for
\colorbox{Orange!30}{\strut{White heterosexual individuals whose parents attended
secondary school}}, the average TC for \colorbox{Yellow!70}{\strut
\texttt{analysts}} observes an increase of \$154,024 when the treatment
\colorbox{Cyan!30}{\strut{hoping to become a manager in the next 5 years}} is applied. In
contrast, the same treatment yields only an increase of \$31,354 for
\colorbox{Lavender!70}{\strut \texttt{back-end developers}}. We observe that
this treatment is not a good explanation globally due to not being statistically
significant.}
%
\vspace{-3mm}
\resizebox{1\textwidth}{!}{
\renewcommand{\arraystretch}{1.1}
\begin{tabular}[t]{|p{80mm}|r|>{\centering}m{25mm}c|@{}>{\centering}m{25mm}c|r|}
\hline
\multirow{3}{*}{\textbf{Disparity Explanation}} & 
\multirow{3}{*}{\textbf{Support}} & 
\multicolumn{4}{c|}{\textbf{Total Compensation (TC)}} &
\multirow{3}{*}{\textbf{$\Delta$}\phantom{D}} \\
\cline{3-6}
& &
\multicolumn{2}{c|}{\textbf{Subpopulation (\sysName)}} &
\multicolumn{2}{c|}{\textbf{Global (XInsight)}}& \\
\cline{3-6}
&& 
\textbf{Average} & \textbf{CATE} & 
\textbf{Average} & \textbf{CATE} &\\
\hline
\hline
\begin{tabular}{@{}p{80mm}}
For \colorbox{Orange!30}{\strut{White heterosexual individuals whose parents attended}} \colorbox{Orange!30}{\strut{ secondary school}}, TC growth is more influenced by \colorbox{Cyan!30}{\strut{hoping to become a manager in the next 5 years}} for \colorbox{Yellow!70}{\strut{analysts}} compared to \colorbox{Lavender!70}{\strut{back-end developers}}.
\end{tabular}
&
11.14\%&
\multicolumn{2}{l|}{
\begin{tabular}{rr}
\$129,749 \hspace{2mm}
\resizebox{.08\textwidth}{!}{
\raisebox{-0.3\height}{
\begin{tikzpicture}
\draw[fill=Yellow!70] (0,-25) rectangle (72, -45);
\draw[fill=Yellow!10] (72, -25) rectangle (100, -45);
\end{tikzpicture}}}
& \$154,024 \uparrowbold
\\
\$110,026 \hspace{2mm}
\resizebox{.08\textwidth}{!}{
\raisebox{-0.3\height}{
\begin{tikzpicture}
\draw[fill=Lavender!70] (0,-25) rectangle (61, -45);
\draw[fill=Lavender!10] (61, -25) rectangle (100, -45);
\end{tikzpicture}}}
& \$31,354 \uparrowbold\\
\end{tabular}
}
&
\multicolumn{2}{c|}{not statistically significant}
& 0.061\\
\hline
\begin{tabular}{@{}p{80mm}}
For \colorbox{Orange!30}{\strut{White males aged between 18-24 years old}}, TC growth is more influenced by \colorbox{Cyan!30}{\strut{have 3-5 years in professional coding}} for \colorbox{Yellow!70}{\strut{analysts}} compared to \colorbox{Lavender!70}{\strut{back-end developers}}.
\end{tabular}
&
10.64\%&
\multicolumn{2}{l|}{
\begin{tabular}{rr}
\phantom{5}\$90,909 \hspace{2mm}
\resizebox{.08\textwidth}{!}{
\raisebox{-0.3\height}{
\begin{tikzpicture}
\draw[fill=Yellow!70] (0,-25) rectangle (50, -45);
\draw[fill=Yellow!10] (50, -25) rectangle (100, -45);
\end{tikzpicture}}}
& \phantom{5}\$75,692 \uparrowbold
\\
\$67,754 \hspace{2mm}
\resizebox{.08\textwidth}{!}{
\raisebox{-0.3\height}{
\begin{tikzpicture}
\draw[fill=Lavender!70] (0,-25) rectangle (37, -45);
\draw[fill=Lavender!10] (37, -25) rectangle (100, -45);
\end{tikzpicture}}}
& \$24,164 \uparrowbold\\
\end{tabular}
}
&
\multicolumn{2}{c|}{not statistically significant}
& 0.025\\
\hline
\begin{tabular}{@{}p{80mm}}
For \colorbox{Orange!30}{\strut{males aged between 25-34 years old whose parents hold a}} \colorbox{Orange!30}{\strut{ bachelor degree}}, TC growth is more influenced by \colorbox{Cyan!30}{\strut{working in a company size between 100 - 499 workers}} for \colorbox{Yellow!70}{\strut{analysts}} compared to \colorbox{Lavender!70}{\strut{back-end developers}}.
\end{tabular}
&
13.21\%&
\multicolumn{2}{l|}{
\begin{tabular}{rr}
\$105,694 \hspace{2mm}
\resizebox{.08\textwidth}{!}{
\raisebox{-0.3\height}{
\begin{tikzpicture}
\draw[fill=Yellow!70] (0,-25) rectangle (58, -45);
\draw[fill=Yellow!10] (58, -25) rectangle (100, -45);
\end{tikzpicture}}}
& \phantom{5}\$70,069 \uparrowbold
\\
\$96,085 \hspace{2mm}
\resizebox{.08\textwidth}{!}{
\raisebox{-0.3\height}{
\begin{tikzpicture}
\draw[fill=Lavender!70] (0,-25) rectangle (53, -45);
\draw[fill=Lavender!10] (53, -25) rectangle (100, -45);
\end{tikzpicture}}}
& \$19,807 \uparrowbold\\
\end{tabular}
}
&
\multicolumn{2}{c|}{not statistically significant}
& 0.025\\
\hline
\begin{tabular}{@{}p{80mm}}
For \colorbox{Orange!30}{\strut{White heterosexual males aged between 25-34 years old}} \colorbox{Orange!30}{\strut{ whose parents hold a Master degree}}, TC growth is more influenced by \colorbox{Cyan!30}{\strut{not having coding as a hobby}} for \colorbox{Yellow!70}{\strut{analysts}} compared to \colorbox{Lavender!70}{\strut{back-end developers}}.
\end{tabular}
&
7.56\%&
\multicolumn{2}{l|}{
\begin{tabular}{rr}
\$117,879 \hspace{2mm}
\resizebox{.08\textwidth}{!}{
\raisebox{-0.3\height}{
\begin{tikzpicture}
\draw[fill=Yellow!70] (0,-25) rectangle (65, -45);
\draw[fill=Yellow!10] (65, -25) rectangle (100, -45);
\end{tikzpicture}}}
& \phantom{5}\$84,648 \uparrowbold
\\
\$103,957 \hspace{2mm}
\resizebox{.08\textwidth}{!}{
\raisebox{-0.3\height}{
\begin{tikzpicture}
\draw[fill=Lavender!70] (0,-25) rectangle (57, -45);
\draw[fill=Lavender!10] (57, -25) rectangle (100, -45);
\end{tikzpicture}}}
& \$43,292 \uparrowbold\\
\end{tabular}
}
&
\multicolumn{2}{c|}{not statistically significant}
& 0.020\\
\hline
\begin{tabular}{@{}p{80mm}}
For \colorbox{Orange!30}{\strut{White individuals aged between 25–34}}, TC growth is more influenced by \colorbox{Cyan!30}{\strut{working 6-8 years in professional coding}} for \colorbox{Yellow!70}{\strut{analysts}} compared to \colorbox{Lavender!70}{\strut{back-end developers}}.
\end{tabular}
&
34.69\%&
\multicolumn{2}{l|}{
\begin{tabular}{rr}
\$115,777 \hspace{2mm}
\resizebox{.08\textwidth}{!}{
\raisebox{-0.3\height}{
\begin{tikzpicture}
\draw[fill=Yellow!70] (0,-25) rectangle (63, -45);
\draw[fill=Yellow!10] (63, -25) rectangle (100, -45);
\end{tikzpicture}}}
& \phantom{5}\$44,058 \uparrowbold
\\
\$105,988 \hspace{2mm}
\resizebox{.08\textwidth}{!}{
\raisebox{-0.3\height}{
\begin{tikzpicture}
\draw[fill=Lavender!70] (0,-25) rectangle (58, -45);
\draw[fill=Lavender!10] (58, -25) rectangle (100, -45);
\end{tikzpicture}}}
& \$10,552 \uparrowbold\\
\end{tabular}
}
&
\multicolumn{2}{c|}{not statistically significant}
& 0.016\\
\hline
\end{tabular}}
\label{tab:results_so}
\end{table*}

\begin{table*}[t]
\caption{\small Disparity explanations discovered by \sysName for the ACS dataset.}
\vspace{-3mm}
\centering
\resizebox{0.97\textwidth}{!}{
\renewcommand{\arraystretch}{1.1}
\begin{tabular}[t]{|p{87mm}|r|>{\centering}m{25mm}c|@{}>{\centering}m{25mm}c|r|}
\hline
\multirow{3}{*}{\textbf{Disparity Explanation}} & 
\multirow{3}{*}{\textbf{Support}} & 
\multicolumn{4}{c|}{\textbf{Likelihood of having a health insurance}} &
\multirow{3}{*}{\textbf{$\Delta$}\phantom{D}} \\
\cline{3-6}
& &
\multicolumn{2}{c|}{\textbf{Subpopulation (\sysName)}} &
\multicolumn{2}{c|}{\textbf{Global (XInsight)}}& \\
\cline{3-6}
&& 
\textbf{Average} & \textbf{CATE} & 
\textbf{Average} & \textbf{CATE} &\\
\hline
\hline
\begin{tabular}{@{}p{87mm}}
For \colorbox{Orange!30}{\strut{White individuals from the Southern region who speak Spanish}}, the likelihood of having a health insurance decreases when they \colorbox{Cyan!30}{\strut{have no personal earnings}} for \colorbox{Yellow!70}{\strut{manual labor occupations}}, whereas it increases for \colorbox{Lavender!70}{\strut{all occupations}}.
\end{tabular}
&
8.18\%&
\multicolumn{2}{l|}{
\begin{tabular}{p{25mm}r}
\drawbar{52.42}{7.74}{\%}{Yellow}{\downarrowbold}\\
\drawbar{75.17}{4.87}{\%}{Lavender}{\uparrowbold}\\
\end{tabular}
}
&
\multicolumn{2}{l|}{
\begin{tabular}{p{25mm}r}
\drawbar{78.61}{8.02}{\%}{Yellow}{\downarrowbold}\\
\drawbar{91.58}{3.63}{\%}{Lavender}{\downarrowbold}\\
\end{tabular}
}
& 0.126\\
\hline
\begin{tabular}{@{}p{87mm}}
For \colorbox{Orange!30}{\strut{individuals from the Southern region who were born in USA}}, the likelihood of having a health insurance decreases when they \colorbox{Cyan!30}{\strut{have no personal earnings}} for \colorbox{Yellow!70}{\strut{manual labor occupations}}, whereas it increases for \colorbox{Lavender!70}{\strut{all occupations}}.
\end{tabular} &
25.09\% &
\multicolumn{2}{l|}{
\begin{tabular}{p{25mm}r}
\drawbar{72.77}{6.38}{\%}{Yellow}{\downarrowbold}\\
\drawbar{81.24}{1.83}{\%}{Lavender}{\uparrowbold}\\
\end{tabular}
}
&
\multicolumn{2}{l|}{
\begin{tabular}{p{25mm}r}
\drawbar{78.61}{8.02}{\%}{Yellow}{\downarrowbold}\\
\drawbar{91.58}{3.63}{\%}{Lavender}{\downarrowbold}\\
\end{tabular}
} & 0.082 \\
\hline
\begin{tabular}{@{}p{87mm}}
For \colorbox{Orange!30}{\strut{White natives individuals from Texas who were born in USA}}, the likelihood of having a health insurance decreases when they \colorbox{Cyan!30}{\strut{have no personal earnings}} for \colorbox{Yellow!70}{\strut{manual labor occupations}}, whereas it increases for \colorbox{Lavender!70}{\strut{all occupations}}.
\end{tabular} &
12.21\% &
\multicolumn{2}{l|}{
\begin{tabular}{p{25mm}r}
\drawbar{71.86}{5.20}{\%}{Yellow}{\downarrowbold}\\
\drawbar{80.52}{2.83}{\%}{Lavender}{\uparrowbold}\\
\end{tabular}
}
&
\multicolumn{2}{l|}{
\begin{tabular}{p{25mm}r}
\drawbar{78.61}{8.02}{\%}{Yellow}{\downarrowbold}\\
\drawbar{91.58}{3.63}{\%}{Lavender}{\downarrowbold}\\
\end{tabular}
} & 0.080 \\
\hline
\begin{tabular}{@{}p{87mm}}
For \colorbox{Orange!30}{\strut{White males from the Southern region}}, the likelihood of having a health insurance decreases when they \colorbox{Cyan!30}{\strut{have no personal earnings}} for \colorbox{Yellow!70}{\strut{manual labor occupations}}, whereas it increases for \colorbox{Lavender!70}{\strut{all occupations}}.
\end{tabular} &
18.00\% &
\multicolumn{2}{l|}{
\begin{tabular}{p{25mm}r}
\drawbar{67.56}{4.15}{\%}{Yellow}{\downarrowbold}\\
\drawbar{74.20}{2.69}{\%}{Lavender}{\uparrowbold}\\
\end{tabular}
}
&
\multicolumn{2}{l|}{
\begin{tabular}{p{25mm}r}
\drawbar{78.61}{8.02}{\%}{Yellow}{\downarrowbold}\\
\drawbar{91.58}{3.63}{\%}{Lavender}{\downarrowbold}\\
\end{tabular}
}
& 0.068 \\
\hline
\begin{tabular}{@{}p{87mm}}
For \colorbox{Orange!30}{\strut{individuals who were born in USA}}, the likelihood of having a health insurance decreases when they \colorbox{Cyan!30}{\strut{have no personal earnings}} for \colorbox{Yellow!70}{\strut{manual labor occupations}}, whereas it increases for \colorbox{Lavender!70}{\strut{all occupations}}.
\end{tabular} &
75.67\% &
\multicolumn{2}{l|}{
\begin{tabular}{p{25mm}r}
\drawbar{84.11}{3.46}{\%}{Yellow}{\downarrowbold}\\
\drawbar{88.97}{1.36}{\%}{Lavender}{\uparrowbold}\\
\end{tabular}
}
&
\multicolumn{2}{l|}{
\begin{tabular}{p{25mm}r}
\drawbar{78.61}{8.02}{\%}{Yellow}{\downarrowbold}\\
\drawbar{91.58}{3.63}{\%}{Lavender}{\downarrowbold}\\
\end{tabular}
}
 & 0.048 \\
\hline
\end{tabular}}
\label{tab:results_ACS}
\end{table*}

\subsection{Explanation Quality}
\label{subsec:exp_quality}
Table~\ref{tab:quality_metrics_new} shows a quantitative comparison contrasting
the explanations generated by \sysName with those of the baselines over three use
cases. Details of the explanations generated by each of these approaches are
in the Appendix.







\subsubsection{Investigating a surprising fact (SO)} The explanations generated
by \sysName\ are shown in Table \ref{tab:results_so}. Notably, \sysName\
identifies subpopulations where the average salary of analysts is higher than
that of back-end developers, with the salary gap often exceeding that observed in the overall population (analysts: \$106,542; back-end developers: \$96,609). For almost all cases,
\sysName provides a different causal explanation to highlight the factor
contributing to the disparity within the subpopulations. In all five explanations, the causal effect of the chosen treatment on the entire
population is not statistically significant \newtext{as shown in the ``Global''
column}, emphasizing the importance of providing ``local'' explanations for each
subpopulation.
\emph{This
observation highlights the distinction between our approach and XInsight
(\newtext{that provides a global explanation, as shown in the Global column}),
demonstrating that subpopulation-level explanations differ from those provided
at the entire population level}.


In this scenario, \sysName\ shares only 1 out of 5 explanations with the optimal
solution by Brute Force. However, \sysName\ selects a highly diverse set of
explanations, achieving 55\% of the optimal disparity score produced by Brute
Force. In contrast, although the Top-k baseline achieves high disparity scores,
the selected explanations are highly similar to one another (as indicated by the
heatmap in Table \ref{tab:quality_metrics_new}), highlighting the importance of
incorporating similarity-awareness when selecting explanations.


In this scenario, both DivExplorer and FairDebugger failed to identify any valid
solution. This is because they relied on a fixed treatment (as discussed in
Section~\ref{sec:six:one}). In every subpopulation they identified, either the
average income of backend developers exceeded that of analysts, or the treatment
disproportionately benefited backend developers. According to our problem
definition, such cases do not qualify as valid solutions.

\subsubsection{Fairness debugging (ACS)} \looseness-1 The disparity explanations
generated by \sysName\ for the second use case are presented in
Table~\ref{tab:results_ACS}. \sysName\ identified subpopulations in which the
percentage of manual-labor workers with health insurance was lower than in the
overall population. Notably, all explanations (i.e., treatments) involved
personal earnings, underscoring the strong causal relationship between income
level and the likelihood of having health insurance in the U.S. Interestingly,
in this case, a single explanation was sufficient to account for the disparity
across all identified subpopulations. However, this explanation behaves
differently when applied to the entire dataset. Specifically, for the overall
population, not having personal earnings decreases the likelihood of having
health insurance across all occupations. In contrast, for each of the reported
subpopulations, not having personal earnings actually increases the probability
of having health insurance when considering all individuals within those
subpopulations. We note that even in the optimal solution found by Brute Force
(as well as in the Top-K solution), only two distinct treatments were selected.
This suggests that, in this case, the space of relevant treatments is limited,
indicating few plausible explanations for the observed disparity.

\looseness-1 Compared to the Brute-Force baseline, \sysName\ recovered 2 out of
the 5 disparity explanations while exploring a significantly smaller search
space. Top-K overlapped with Brute-Force in just one explanation and exhibited
higher similarity among its selected explanations compared to \sysName. Both
DivExplorer and FairDebugger identified different subpopulations than \sysName;
FairDebugger detected only two (highly overlapping) subpopulations, while
DivExplorer found five, though with high similarity among them.

\subsubsection{Finding reverse trends (MEPS)} Table~\ref{tab:results_meps} shows
the disparity explanations for MEPS. \sysName identifies subpopulations where
the average likelihood of experiencing nervous attacks very frequently for males
($g_1$) is higher than that of non-males ($g_2$), in contrast to the opposite
trend observed in the overall population. \sysName\ generated a solution
identical to that of the Brute-Force approach, while Top-$k$ identified
explanations with substantial overlap among themselves. In this scenario, both
DivExplorer and FairDebugger failed to produce meaningful results. The
subpopulations they identified either did not align with the use case.
Specifically, they did not exhibit a trend that reversed the one observed in the
overall dataset, or, in the few relevant subpopulations they did find, the fixed
treatment yielded a disparity score of zero, rendering the explanation
ineffective for explaining the observed disparity.


\begin{tcolorbox}[colback=gray!10, colframe=gray, boxsep=1mm, left=1mm, right=1mm, top=1mm, bottom=1mm, title=Results Summary]
    $\bullet$ The top-$k$ baseline results in overlapping disparity explanations, demonstrating the need to consider the similarity among selected explanations.\\
    $\bullet$ The output of \sysName\ closely matches that of Brute Force, demonstrating that our solution prioritizes efficiency without compromising quality.\\
    $\bullet$ Explanations for disparity at the entire population level (as generated by XInsight) do not necessarily account for disparities within subpopulations, highlighting the need to find a specific local explanation for each subpopulation.\\ 
    $\bullet$ DivExplorer and FairDebugger were less successful in identifying subpopulations with high disparity between $g_1$ and $g_2$, resulting in low-score explanations.
\end{tcolorbox}

\smallskip\noindent\emph{\underline{Remark.}} Note that DivExplorer and
FairDebugger serve as baselines only for identifying subpopulations that exhibit
significant disparities between the two groups of interest. However, unlike
\sysName, they do not offer any causal explanations. To enable comparison, we
use a fixed treatment while using these two baselines, which is not necessarily
the optimal treatment that maximizes subpopulation-level disparity scores.
Therefore, these approaches result in a low total disparity score (which is
expected) when the subpopulation-level disparity scores are added to obtain the
score for the objective function (Problem~\ref{prblem}). Nonetheless, the
comparison allows for a qualitative contrast between the subpopulations
identified by the baselines and those discovered by \sysName.


\begin{table*}[t]
\centering
\caption{\small Disparity explanations discovered by \sysName for the MEPS dataset.}
\vspace{-3mm}
\resizebox{1\textwidth}{!}{
\renewcommand{\arraystretch}{1.1}
\begin{tabular}[t]{|p{85mm}|r|>{\centering}m{30mm}c|@{}>{\centering}m{30mm}c|c|}
\hline
\multirow{3}{*}{\textbf{Disparity Explanation}} & 
\multirow{3}{*}{\textbf{Support}} & 
\multicolumn{4}{c|}{\textbf{Likelihood of feeling nervous frequently}} &
\multirow{3}{*}{\textbf{$\Delta$}} \\
\cline{3-6}
& &
\multicolumn{2}{c|}{\textbf{Subpopulation (\sysName)}} &
\multicolumn{2}{c|}{\textbf{Global (XInsight)}}& \\
\cline{3-6}
&& 
\textbf{Average} & \textbf{CATE} & 
\textbf{Average} & \textbf{CATE} &\\
\hline
\hline
\begin{tabular}{@{}p{85mm}}
For individuals who \colorbox{Orange!30}{\strut{were never married, are from the Southern}} \colorbox{Orange!30}{\strut{
region, don't have a doctor's recommendation to exercise, and}} \colorbox{Orange!30}{\strut{ aren't diagnosed
with Diabetes}}, the likelihood of feeling nervous frequently decreases less for
\colorbox{Yellow!70}{\strut{males}} who \colorbox{Cyan!30}{\strut{do not currently smoke}}
compared to \colorbox{Lavender!70}{\strut{non-males}}.
\end{tabular} &
5.78\% &
\multicolumn{2}{l|}{
\begin{tabular}{p{30mm}r}
\drawbar{46.08}{13.06}{\%}{Yellow}{\downarrowbold}\\
\drawbar{41.71}{20.73}{\%}{Lavender}{\downarrowbold}\\
\end{tabular}
}
&
\multicolumn{2}{l|}{
\begin{tabular}{p{30mm}r}
\drawbar{37.58}{3.39}{\%}{Yellow}{\downarrowbold}\\
\drawbar{45.10}{3.94}{\%}{Lavender}{\downarrowbold}\\
\end{tabular}
} & 0.076 \\
\hline
\begin{tabular}{@{}p{85mm}}
For individuals who \colorbox{Orange!30}{\strut{are White, were never married, are under
29,}} \colorbox{Orange!30}{\strut{ and don't have Asthma or Diabetes}}, the likelihood of feeling nervous
frequently decreases less for \colorbox{Yellow!70}{\strut{males}} than
\colorbox{Lavender!70}{\strut{non-males}} when they are \colorbox{Cyan!30}{\strut{uninsured for the
whole year}}.
\end{tabular} &
8.41\% &
\multicolumn{2}{l|}{
\begin{tabular}{p{30mm}r}
\drawbar{49.20}{14.16}{\%}{Yellow}{\downarrowbold}\\
\drawbar{48.95}{18.48}{\%}{Lavender}{\downarrowbold}\\
\end{tabular}
}
&
\multicolumn{2}{l|}{
\begin{tabular}{p{30mm}r}
\drawbar{37.58}{7.02}{\%}{Yellow}{\downarrowbold}\\
\drawbar{45.10}{4.86}{\%}{Lavender}{\downarrowbold}\\
\end{tabular}
} & 0.043 \\
\hline
\begin{tabular}{@{}p{85mm}}
For individuals who \colorbox{Orange!30}{\strut{were never married, don't have a
doctor's}} \colorbox{Orange!30}{\strut{recommendation to exercise, and were born in USA}}, the
likelihood of feeling nervous frequently increases more for
\colorbox{Yellow!70}{\strut{males}} who \colorbox{Cyan!30}{\strut{have private insurance}}
compared to \colorbox{Lavender!70}{\strut{non-males}}.
\end{tabular}
&
14.25\%&
\multicolumn{2}{l|}{
\begin{tabular}{p{30mm}r}
\drawbar{45.90}{10.84}{\%}{Yellow}{\uparrowbold}\\
\drawbar{45.74}{7.51}{\%}{Lavender}{\uparrowbold}\\
\end{tabular}
}
&
\multicolumn{2}{l|}{
\begin{tabular}{p{30mm}r}
\drawbar{37.58}{4.63}{\%}{Yellow}{\uparrowbold}\\
\drawbar{45.10}{5.76}{\%}{Lavender}{\uparrowbold}\\
\end{tabular}
}
& 0.033\\
\hline
\begin{tabular}{@{}p{85mm}}
For individuals who are \colorbox{Orange!30}{\strut{White, were never married, don't
have}} \colorbox{Orange!30}{\strut{ a doctor's recommendation to exercise, and don't have Asthma}}, the
likelihood of feeling nervous frequently decreases less for
\colorbox{Yellow!70}{\strut{males}} who \colorbox{Cyan!30}{\strut{were uninsured the whole year}}
compared to \colorbox{Lavender!70}{\strut{non-males}}.
\end{tabular} &
9.74\% &
\multicolumn{2}{l|}{
\begin{tabular}{p{30mm}r}
\drawbar{47.25}{12.52}{\%}{Yellow}{\downarrowbold}\\
\drawbar{46.64}{14.09}{\%}{Lavender}{\downarrowbold}\\
\end{tabular}
}
&
\multicolumn{2}{l|}{
\begin{tabular}{p{30mm}r}
\drawbar{37.58}{7.02}{\%}{Yellow}{\downarrowbold}\\
\drawbar{45.10}{4.86}{\%}{Lavender}{\downarrowbold}\\
\end{tabular}
} & 0.015 \\
\hline
\begin{tabular}{@{}p{85mm}}
For individuals \colorbox{Orange!30}{\strut{aged between 30-42 years who don't have
Diabetes}}, the likelihood of feeling nervous frequently increases more for
\colorbox{Yellow!70}{\strut{males}} who \colorbox{Cyan!30}{\strut{have health insurance}}
compared to \colorbox{Lavender!70}{\strut{non-males}}.
\end{tabular} &
19.64\% &
\multicolumn{2}{l|}{
\begin{tabular}{p{30mm}r}
\drawbar{43.93}{\phantom{7}7.47}{\%}{Yellow}{\uparrowbold}\\
\drawbar{43.57}{\phantom{7}6.47}{\%}{Lavender}{\uparrowbold}\\
\end{tabular}
}
&
\multicolumn{2}{r|}{
\begin{tabular}{p{30mm}r}
\drawbar{37.58}{2.34}{\%}{Yellow}{\uparrowbold}\\
\drawbar{45.10}{4.54}{\%}{Lavender}{\uparrowbold}\\
\end{tabular}
} & 0.010 \\
\hline
\end{tabular}}
\label{tab:results_meps}
\end{table*}

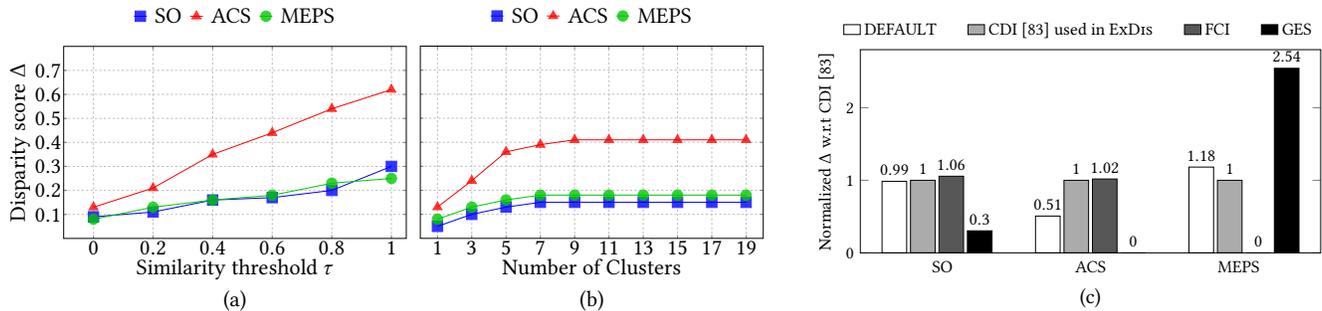
\begin{figure*}[t]
    \centering

\resizebox{0.3\textwidth}{!}{
\begin{tikzpicture}
    \begin{axis}[
        width=\textwidth,
        height=0.6\textwidth,
        xlabel={Similarity threshold $\tau$},
        ylabel={Disparity score $\Delta$},
        xmin=-0.1, xmax=1.05,
        ymin=0, ymax=0.8,
        xtick={0.0,0.2,0.4,0.6,0.8,1.0},
        ytick={0.1,0.2,0.3,0.4,0.5,0.6,0.7},
        ymajorgrids=true,
        grid=major,
        grid style=dashed,
        tick label style={font=\fontsize{30}{36}\selectfont},
        label style={font=\fontsize{30}{36}\selectfont},
        legend style={
            at={(0.5,1.1)},
            anchor=south,
            font=\fontsize{30}{36}\selectfont,
            legend columns=3,
            column sep=1em,
            draw=none
        },
		after end axis/.code={
		            \node[below=7.5em of current axis.south, font=\fontsize{30}{36}\selectfont, align=center] 
		            {(a)};
		},
    ]

    \addplot[
        color=blue,
        mark=square*,
        thick,
        mark size=7,
        mark options={opacity=0.8},
    ] coordinates {
        (0.0,0.09) (0.2,0.11) (0.4,0.16) (0.6,0.17) (0.8,0.20) (1.0,0.30)
    };
    \addlegendentry{SO}

    \addplot[
        color=red,
        mark=triangle*,
        thick,
        mark size=7,
        mark options={opacity=0.8},
    ] coordinates {
        (0.0,0.13) (0.2,0.21) (0.4,0.35) (0.6,0.44) (0.8,0.54) (1.0,0.62)
    };
    \addlegendentry{ACS}

    \addplot[
        color=darkgreen,
        mark=*,
        thick,
        mark size=7,
        mark options={opacity=0.8},
    ] coordinates {
        (0.0,0.08) (0.2,0.13) (0.4,0.16) (0.6,0.18) (0.8,0.23) (1.0,0.25)
    };
    \addlegendentry{MEPS}
    \end{axis}	
\end{tikzpicture}
}
\raisebox{0mm}{\resizebox{0.261\textwidth}{!}{
\begin{tikzpicture}
    \begin{axis}[
        width=\textwidth,
        height=0.6\textwidth,
        xlabel={Number of Clusters},
        xmin=0, xmax=20,
        ymin=0, ymax=0.7,
        xtick={1,3,5,7,9,11,13,15,17,19},
        ymin=0, ymax=0.8,
        ytick={0.1,0.2,0.3,0.4,0.5,0.6,0.7},
		yticklabels={},
        ymajorgrids=true,
        grid=major,
        grid style=dashed,
        tick label style={font=\fontsize{30}{36}\selectfont},
        label style={font=\fontsize{30}{36}\selectfont},
        legend style={
            at={(0.5,1.1)},
            anchor=south,
            font=\fontsize{30}{36}\selectfont,
            legend columns=3,
            column sep=1em,
            draw=none
        },
		after end axis/.code={
		            \node[below=7.5em of current axis.south, font=\fontsize{30}{36}\selectfont, align=center] 
		            {(b)};
		},
    ]

    \addplot[
        color=blue,
        mark=square*,
        thick,
        mark size=7,
        mark options={opacity=0.8},
    ] coordinates {
        (1,0.05) (3,0.10) (5,0.13) (7,0.15) (9,0.15) (11,0.15)
        (13,0.15) (15,0.15) (17,0.15) (19,0.15)
    };
    \addlegendentry{SO}

    \addplot[
        color=red,
        mark=triangle*,
        thick,
        mark size=7,
        mark options={opacity=0.8},
    ] coordinates {
        (1,0.13) (3,0.24) (5,0.36) (7,0.39) (9,0.41) (11,0.41)
        (13,0.41) (15,0.41) (17,0.41) (19,0.41)
    };
    \addlegendentry{ACS}

    \addplot[
        color=darkgreen,
        mark=*,
        thick,
        mark size=7,
        mark options={opacity=0.8},
    ] coordinates {
        (1,0.08) (3,0.13) (5,0.16) (7,0.18) (9,0.18) (11,0.18)
        (13,0.18) (15,0.18) (17,0.18) (19,0.18)
    };
    \addlegendentry{MEPS}

    \end{axis}
\end{tikzpicture}}}
\hspace{5mm}
\resizebox{0.38\textwidth}{!}{%
\begin{tikzpicture}
\begin{axis}[
    ybar,
	axis on top,
    bar width=0.45cm,
    width=10cm,
    height=5.3cm,
    ymin=0,
    ymax=2.8,
    enlarge x limits=false, 
    xmin=0.5,               
    xmax=3.5,               
    ylabel={Normalized $\Delta$ w.r.t CDI~\cite{youngmann2023causal}},
    xtick={1,2,3},
    xtick style={draw=none},
    xticklabels={SO, ACS, MEPS},
    x tick label style={rotate=0, anchor=east, yshift=-3pt, xshift=4mm},
    legend style={
        at={(-0.05,1.02)},       
        anchor=south west,  
        legend columns=4,
        draw=none,
        /tikz/every even column/.append style={column sep=0.5cm} 
    },
	legend image code/.code={
	  	\draw[draw=black,fill opacity=1] (0cm,-0.08cm) rectangle (0.3cm,0.1cm);
	},
	every node near coord/.append style={black},
    nodes near coords,
    nodes near coords align={vertical},
	after end axis/.code={
	            \node[below=1.7em of current axis.south, font=\fontsize{11}{19}\selectfont, align=center] 
	            {(c)};
	},
]

\addplot+[
    ybar,
    fill=black!00,
    draw=black,    
] coordinates { (1, 0.986) (2, 0.507) (3, 1.182)};
\addlegendentry{DEFAULT}

\addplot+[
    ybar,
    fill=black!33,
    draw=black,    
] coordinates { (1, 1.000) (2, 1.000) (3, 1.000)};
\addlegendentry{CDI~\cite{youngmann2023causal} used in \sysName~}

\addplot+[
    ybar,
    fill=black!66,
    draw=black,    
] coordinates { (1, 1.056) (2, 1.018) (3, 0.000)};
\addlegendentry{FCI}

\addplot+[
    ybar,
    fill=black!100,
    draw=black,    
] coordinates { (1, 0.304) (2, 0.000) (3, 2.543)};
\addlegendentry{GES}

\end{axis}
\end{tikzpicture}}	
	\vspace{-4mm}
    \caption{\small Effect of various system parameters on the disparity score.
    (a) \& (b)~The absolute disparity scores are reported here to show direct
    impact of the similarity threshold $\tau$ and the number of clusters on the
    disparity score $\Delta$. (c)~Effect of casual DAG modification. Disparity
    scores here are shown as a relative value w.r.t the disparity score of
    CDI~\cite{youngmann2023causal} (which \sysName uses).}
	\vspace{-1mm}
    \label{fig:paramquality}
\end{figure*}

\subsection{Parameters Sensitivity}
\label{subsec:parameters}


%
In this section, we investigate the impact of various parameters on our
objective function, namely disparity score $\Delta$. Our goal is to gain
insights into effective default parameter settings.

\subsubsection{Robustness to similarity threshold $\tau$}
Figure~\ref{fig:paramquality}~(a) shows how our objective function, i.e., the
disparity score $\Delta$ changes with varying similarity threshold $\tau$ across
three datasets: SO, ACS, and MEPS. As $\tau$ increases, all datasets exhibit a
rising trend in disparity. This is expected because low similarity threshold
significantly restricts the feasible solution space. For a fixed budget $k$ (we
used $k = 11$ for this experiment, to observe the impact of $\tau$ without any
other restriction), relaxing \(\tau\) allows for the inclusion of a larger
number of disparity explanations, which can cumulatively increase the overall
disparity score. This is because more similar subpopulations are permitted to
coexist, potentially capturing less diverse and more overlapping causes of
disparity. However, setting \(\tau\) too high undermines the goal of maintaining
diversity among the explanations, as excessive overlap can dilute the quality
and reduce the distinctiveness of the explanation set. Thus, \(\tau\) must be
carefully chosen to balance comprehensiveness and diversity.

\subsubsection{Robustness to number of clusters}
Figure~\ref{fig:paramquality}~(b) shows how the disparity score $\Delta$ varies
with the number of clusters across three datasets: SO, ACS, and MEPS. As the
number of clusters increases, all datasets exhibit an initial rise in disparity,
which eventually plateaus. Notably, the ACS dataset shows the most significant
increase, with the disparity growing rapidly up to 5 clusters before stabilizing
around 0.4. In contrast, SO and MEPS show more modest increases, leveling off at
lower disparity scores. These results suggest that finer-grained clustering help
improve the disparity scores, but has diminishing return.

\subsubsection{Robustness to the Causal DAG}
\label{subsec:causal_DAG_robustness} The quality of the solution may depend on
the quality of the underlying causal DAG. To evaluate this, we assess the impact
of using different causal DAGs, generated by commonly used causal discovery methods. We consider the following DAGs: (1)~\textbf{DEFAULT}, a default two-layer causal DAG where immutable attributes affect both mutable attributes and the outcome, and mutable attributes also affect the outcome
(2)~Causal Data Integration (\textbf{CDI})~\cite{youngmann2023causal}, which
\sysName uses, (3)~\textbf{FCI}~\cite{spirtes2000causation}, and
(4)~\textbf{GES}~\cite{chickering2002optimal}.
The results are depicted in Figure \ref{fig:paramquality}~(c). We report the
relative disparity scores ($\Delta$) computed using the different DAGs, w.r.t. the disparity score of CDI. Observe that for the default DAG, the results closely resemble those obtained using the CDI DAG employed by \sysName. In contrast, the FCI and GES DAGs produce more varied results across different use cases. This variability is due to the challenges that causal discovery algorithms face when applied to real-world data~\cite{glymour2019review,o2006incorporating}, often leading to noisier and less reliable DAGs.
Nonetheless, the disparity explanations generated using FCI and GES DAGs were largely consistent with those derived from CDI, with similar explanations selected. This suggests that \sysName\ remains robust and capable of producing meaningful explanations even when the underlying causal DAG is noisy or imperfect.


\begin{tcolorbox}[colback=gray!10, colframe=gray, boxsep=1mm, left=1mm, right=1mm, top=1mm, bottom=1mm, title=Results Summary]
    $\bullet$ A higher \(\tau\) can increase the disparity score by allowing
    overlapping explanations at the cost of compromising diversity. \\
    $\bullet$ Finer-grained clustering helps improve the disparity scores, but
    has diminishing return.\\
    $\bullet$ Even with imperfect causal DAGs, \sysName is able to produce meaningful and
robust disparity explanations.
\end{tcolorbox}

\subsubsection{\revb{Robustness to ATE estimator}} \label{sec:robustate}
 \revb{By default, we use DoWhy~\cite{dowhypaper} to estimate ATE values via a linear regression model. Next, we evaluate the robustness of the generated explanations under alternative ATE estimators. Specifically, we examine three widely used methods: propensity score stratification, propensity score weighting, and propensity score matching,  all implemented in DoWhy~\cite{dowhypaper}, following standard causal inference practices~\cite{rosenbaum1983central,austin2011introduction}.}
 \revb{The results are shown in Table \ref{tab:exdis_estimators}. 
Our experiments reveal three main findings:}
\revb{\begin{enumerate}[leftmargin=1em,labelwidth=*,align=left]
    \item As expected, the linear regression estimator is the fastest and is therefore used by default in \sysName.
    \item Explanations generated using different estimators are largely consistent, typically selecting the same treatments with only minor differences in the estimated CATE values, demonstrating that \sysName\ is robust to the choice of ATE estimator. 
    \item The lack of results for MEPS when using the prosperity-score stratification estimator stems from the high variance of stratified ATE estimates which causes the CATE values to be not statistically significant. This is caused by sample fragmentation and limited overlap across prosperity-score bins. This behavior is consistent with standard findings in causal inference: while stratification enhances interpretability, it can reduce statistical power when strata are small or poorly balanced~\cite{miratrix2013adjusting}.
\end{enumerate}}

\subsubsection{\revb{Robustness to seed selection}}\label{subsec:seed}
\revb{As discussed in Section \ref{subsec:step_3}, we select a random representative from each cluster and assign its disparity score to the entire cluster, which introduces some randomization. By default, we used a fixed seed. Next, we examine how different seeds affect the results. 
In particular, we run each experiment on the SO and ACS datasets using seven different random seeds, and report the average and variance of the objective value, diversity, and runtime. The results are shown in Table \ref{tab:exdis_randomness}. Our results show that the results were largely consistent across different seeds. Our results suggest that \sysName\ is highly robust to randomization: across all seeds, the variance in both the objective value and diversity is very small (e.g., $\leq$ 0.09 for SO and $\leq$ 0.04 for MEPS), indicating that the clustering-based optimization introduces minimal noise. Similarly, runtime remains stable, with negligible variation across runs (< 0.1 s on average). These findings confirm that the random seed used for representative selection has a minor impact on both the quality and efficiency of the results.}

\begin{table}[t]
\caption{\small \revb{Results using different ATE estimators across SO and MEPS. LR denotes Linear Regression and PS denotes Propensity Score.}}
\label{tab:exdis_estimators}
\vspace{-4mm}
\centering
\resizebox{0.47\textwidth}{!}{
    \begin{tabular}{l
				r@{\hskip 3mm}r@{\hskip 3mm}c
                r@{\hskip 3mm}r@{\hskip 3mm}c
		}
        \toprule
    	\multirow{2}{*}{\revb{\diagbox{\textbf{Estimator}}{\textbf{Dataset}}}}  & 
    	\multicolumn{3}{c}{\cellcolor{blue!10}\textbf{SO}} &
    	\multicolumn{3}{c}{\cellcolor{green!5}\textbf{MEPS}} \\
        
        & 
	\cellcolor{blue!10}{\textbf{$\Delta$ (\%)}} &
        \cellcolor{blue!10}{\textbf{Runtime (s)}} &
        \cellcolor{blue!10}{\textbf{\#Exp}} &
        
        \cellcolor{green!5}{\textbf{$\Delta$ (\%)}} &
        \cellcolor{green!5}{\textbf{Runtime (s)}} &
        \cellcolor{green!5}{\textbf{\#Exp}} \\
        
        \midrule

        \textbf{\revb{\sysName\ + LR}}
        & \cellcolor{blue!10}{55} & \cellcolor{blue!10}{62} & \cellcolor{blue!10}{5}
        & \cellcolor{green!5}{100} & \cellcolor{green!5}{17} & \cellcolor{green!5}{5} \\[2mm]

        \textbf{\revb{\sysName + PS Stratification}}        
        & \cellcolor{blue!10}{100} & \cellcolor{blue!10}{4870} & \cellcolor{blue!10}{5}
        & \cellcolor{green!5}{0} & \cellcolor{green!5}{331} & \cellcolor{green!5}{0} \\[2mm]

        \textbf{\revb{\sysName + PS Weighting}}        
        & \cellcolor{blue!10}{94} & \cellcolor{blue!10}{4655} & \cellcolor{blue!10}{5}
        & \cellcolor{green!5}{100} & \cellcolor{green!5}{1070} & \cellcolor{green!5}{5} \\[2mm]

        \textbf{\revb{\sysName + PS Matching}}        
        & \cellcolor{blue!10}{92} & \cellcolor{blue!10}{67379} & \cellcolor{blue!10}{5}
        & \cellcolor{green!5}{100} & \cellcolor{green!5}{7265} & \cellcolor{green!5}{2} \\[2mm]
		        
        \bottomrule
    \end{tabular}
}
\vspace{4mm}
\end{table}

\subsection{Efficiency \& Scalability} 
\label{subsec:scalability}
In this section, we present results demonstrating the efficiency of various
components of \sysName, analyze the impact of different parameters on its
runtime, and evaluate its scalability as the dataset grows both vertically (in
tuples) and horizontally (in attributes).

\begin{table}[t]
\caption{\revb{Average and variance (over 7 random seeds) of objective value ($\Delta$), diversity, and runtime for \sysName.}}
\label{tab:exdis_randomness}
\vspace{-4mm}
\centering
\resizebox{0.49\textwidth}{!}{
\begin{tabular}{
    l
    r@{\hskip 3mm}r@{\hskip 3mm}r
}
\toprule
\textbf{Dataset} &
\textbf{$\Delta$ (\%) (avg $\pm$ var)} &
\textbf{Diversity (avg $\pm$ var)} &
\textbf{Runtime (avg $\pm$ var)} \\
\midrule

\cellcolor{blue!10}\textbf{SO} 
& \cellcolor{blue!10}{51.79 (7.86)} 
& \cellcolor{blue!10}{0.2996 (0.0136)} 
& \cellcolor{blue!10}{60.283 (0.038)} \\[1mm]

\cellcolor{green!5}\textbf{MEPS} 
& \cellcolor{green!5}{92.63 (3.25)} 
& \cellcolor{green!5}{0.2437 (0.0091)} 
& \cellcolor{green!5}{10.25 (0.065)} \\

\bottomrule
\end{tabular}
}
\vspace{4mm}
\end{table}


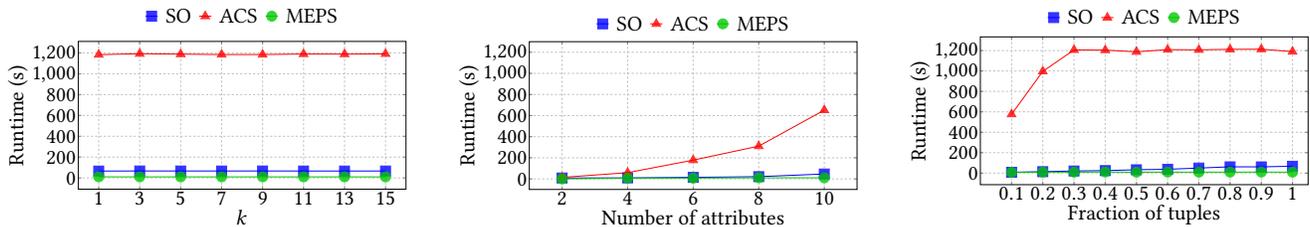
\begin{figure*}[t]
    \centering    
    \raisebox{\depth}{\resizebox{0.3\textwidth}{!}{%
		\begin{tikzpicture}
	    \begin{axis}[
	        width=\textwidth,
	        height=0.5\textwidth,
	        xlabel={$k$},
	        ylabel={Runtime (s)},
	        xmin=0, xmax=16,
	        grid=major,
	        xtick={1,3,5,7,9,11,13,15},
	        ymajorgrids=true,
	        grid style=dashed,
	        tick label style={font=\fontsize{30}{36}\selectfont},
	        label style={font=\fontsize{30}{36}\selectfont},
	        legend style={
	            at={(0.5,1.1)},
	            anchor=south,
	            font=\fontsize{30}{36}\selectfont,
	            legend columns=3,
	            column sep=1em,
	            draw=none
	        },
	    ]

	    \addplot[
	        color=blue,
	        mark=square*,
	        thick,
	        mark size=7,
	        mark options={opacity=0.8},
	    ] coordinates {
	        (1,66.01) (3,66.83) (5,66.6) (7,66.56) (9,66.79) (11,66.44) (13,66.35) (15,66.61)
	    };
	    \addlegendentry{SO}

	    \addplot[
	        color=red,
	        mark=triangle*,
	        thick,
	        mark size=7,
	        mark options={opacity=0.8},
	    ] coordinates {
	        (1,1183.25) (3,1192.12) (5,1188.15) (7,1184.32) (9,1184.28) (11,1189.35) (13,1188.28) (15,1190.82)
	    };
	    \addlegendentry{ACS}

	    \addplot[
	        color=darkgreen,
	        mark=*,
	        thick,
	        mark size=7,
	        mark options={opacity=0.8},
	    ] coordinates {
	        (1,10.68) (3,10.23) (5,10.19) (7,10.22) (9,10.12) (11,10.12) (13,10.15) (15,10.02)
	    };
	    \addlegendentry{MEPS}

	    \end{axis}
	\end{tikzpicture}
	}}
	\hspace{5mm}
	\raisebox{\depth}{\resizebox{0.3\textwidth}{!}{%
		\begin{tikzpicture}
	    \begin{axis}[
	        width=\textwidth,
	        height=0.5\textwidth,
	        xlabel={Number of attributes},
	        ylabel={Runtime (s)},
	        xmin=1, xmax=11,
			ymin=-100, ymax=1300,
	        grid=major,
	        xtick={2,4,6,8,10},
	        ymajorgrids=true,
	        grid style=dashed,
	        tick label style={font=\fontsize{30}{36}\selectfont},
	        label style={font=\fontsize{30}{36}\selectfont},
	        legend style={
	            at={(0.5,1.05)},
	            anchor=south,
	            font=\fontsize{30}{36}\selectfont,
	            legend columns=3,
	            column sep=1em,
	            draw=none
	        },
	    ]

	    \addplot[
	        color=blue,
	        mark=square*,
	        thick,
	        mark size=7,
	        mark options={opacity=0.8},
	    ] coordinates {
	        (2,8.64) (4,9.43) (6,14.69) (8,21.2) (10,46.43)
	    };
	    \addlegendentry{SO}

	    \addplot[
	        color=red,
	        mark=triangle*,
	        thick,
	        mark size=7,
	        mark options={opacity=0.8},
	    ] coordinates {
	        (2,14.2) (4,59.16) (6,177.73) (8,310.46) (10,651.44)
	    };
	    \addlegendentry{ACS}

	    \addplot[
	        color=darkgreen,
	        mark=*,
	        thick,
	        mark size=7,
	        mark options={opacity=0.8},
	    ] coordinates {
	        (2,0.11) (4,7.21) (6,7.37) (8,10.53) (10,10.17)
	    };
	    \addlegendentry{MEPS}

	    \end{axis}
	\end{tikzpicture}
	}}
	\hspace{5mm}
	\raisebox{0mm}{\resizebox{0.3\textwidth}{!}{%
		\begin{tikzpicture}
        \begin{axis}[
            width=\textwidth,
            height=0.5\textwidth,
            xlabel={Fraction of tuples},
            ylabel={Runtime (s)},
            xmin=0, xmax=1.05,
            grid=major,
            xtick={0.1,0.2,...,1.0},
            ymajorgrids=true,
            grid style=dashed,
            tick label style={font=\fontsize{30}{36}\selectfont},
            label style={font=\fontsize{30}{36}\selectfont},
            legend style={
                at={(0.5,1.06)},
                anchor=south,
                font=\fontsize{30}{36}\selectfont,
                legend columns=3,
                column sep=1em,
				draw=none
            },
        ]

        \addplot[
            color=blue,
            mark=square*,
            thick,
            mark size=7,
			mark options={opacity=0.8},
        ] coordinates {
            (0.1,8.47) (0.2,14.26) (0.3,21.1) (0.4,24.87) (0.5,32.92)
            (0.6,37.77) (0.7,49.72) (0.8,61.74) (0.9,61.44) (1.0,67.43)
        };
        \addlegendentry{SO}

        \addplot[
            color=red,
            mark=triangle*,
            thick,
            mark size=7,
			mark options={opacity=0.8},
        ] coordinates {
            (0.1,576.39) (0.2,996.2) (0.3,1205.9) (0.4,1203.92) (0.5,1187.62)
            (0.6,1208.79) (0.7,1206.81) (0.8,1211.99) (0.9,1213.03) (1.0,1189.48)
        };
        \addlegendentry{ACS}

        \addplot[
            color=darkgreen,
            mark=*,
            thick,
            mark size=7,
			mark options={opacity=0.8},
        ] coordinates {
            (0.1,7.6) (0.2,9.43) (0.3,9.36) (0.4,10.86) (0.5,9.19)
            (0.6,10.06) (0.7,9.77) (0.8,10.19) (0.9,10.39) (1.0,9.66)
        };
        \addlegendentry{MEPS}

        \end{axis}
    \end{tikzpicture}	
	}}
\vspace{-3mm}
    \caption{Effects of various parameters on runtime: (left)~the budget
    parameter $k$, (center)~number of attributes, and (right)~fraction of
    data.}	
\vspace{-2mm}
    \label{fig:parameters_runtime}
\end{figure*}

\smallskip\noindent\textbf{Step-by-step breakdown of runtime}: We present a
step-wise breakdown of runtime in Table~\ref{fig:runtime_per_step}. Not
surprisingly, The step ``explanation miner'', which focuses on identifying the
causal explanation for each subpopulation, is the most computationally expensive
one, accounting for over 80\% of the total runtime in all examined scenarios.
Nevertheless, \sysName generates the solution within a reasonable time, even for
large, high-dimensional datasets like ACS.

\smallskip\noindent\textbf{Effect of various parameters on runtime}: Next, we
analyze how various parameters impact runtime. Since parameter variations
involve sampling, we repeat each experiment $5$ times and report the average
runtime across all runs.

\smallskip\noindent\emph{\underline{Solution size $k$.}} Figure~\ref{fig:parameters_runtime}
(left) shows the impact of the solution size $k$ on the runtime. Note that $k$
only affects the last step (fast greedy search), which selects the explanations
from the candidates mined in the previous steps. Recall that this step evaluates
the pairwise intersection between the subpopulations corresponding to the
disparity explanations to account for the diversity constraint. As $k$
increases, the number of pairwise comparisons grows and so does the runtime. The
effect of $k$ on runtime is negligible for the smaller datasets (MEPS and SO)
compared to the larger dataset ACS, where computing intersections among
subpopulations takes longer due to the dataset's size.

\smallskip\noindent\emph{\underline{Number of attributes.}}
\looseness-1 Figure~\ref{fig:parameters_runtime} (center) shows the impact of
the number of attributes on runtime. In this experiment, we randomly sampled
subsets of attributes to retain and removed the rest from the dataset. The
number of attributes influences the search-space size, as more attributes result
in a larger set of subpopulations and treatment patterns to consider.
Theoretically, runtime should grow exponentially with the number of attributes.
However, this worst-case behavior was not always observed in practice, as
several factors influence computation—such as the structure of the underlying
causal DAG, the choice of mutable and immutable attributes, and other system
parameters. Nevertheless, we observe that, as expected, the number of attributes
in the dataset significantly affects runtime. For the largest dataset, ACS, we
do in fact observe exponential growth in runtime as the number of attributes
increases.

\smallskip\noindent\emph{\underline{Dataset cardinality.}}
Figure~\ref{fig:parameters_runtime} (right) illustrates the impact of the
dataset cardinality (in number of tuples) on runtime. In this experiment, we
varied the dataset size using specific-sized horizontal dataset slices. We find
that the runtime is linearly influenced by the number of tuples. For the ACS
dataset, we applied the sampling optimization (during the Explanation Miner
phase, as explained in \cref{subsec:step_2}). We found that the growth in
runtime is more moderate up to around 30\% of the ACS
data (which is 500{,}000 tuples). Beyond this point, the Explanation Miner module operates on a random
fixed-sized sample of 500{,}000 tuples.

\begin{tcolorbox}[colback=gray!10, colframe=gray,
	boxsep=1mm,       
	  left=1mm,         
	  right=1mm,        
	  top=1mm,          
	  bottom=1mm,        
	  title=Results Summary]

$\bullet$ The explanation miner step of \sysName, which focuses on identifying
the causal explanation for each subpopulation, takes the longest time,
accounting for over 80\% of the total runtime. \\
$\bullet$ The runtime grows (almost linearly) with the solution size $k$ because
of the pair-wise similarity computation. \\
$\bullet$ The runtime is greatly influenced by the number of attributes in the
dataset, as it affects the size of the search space.\\
$\bullet$ The runtime grows linearly with the number of dataset tuples.
\end{tcolorbox}

\begin{table}
    \centering
    \small
         \caption{\small Breakdown of runtime by steps (seconds).} \label{fig:runtime_per_step}
         \vspace{-3mm}
        \resizebox{\columnwidth}{!}{
        \begin{tabular}{lrrr}
            \toprule
            \textbf{Dataset} & \textbf{Subpopulation Miner} &
            \textbf{Explanation Miner} &
            \textbf{Fast Greedy Search}\\
            \midrule
            SO  & 0.4  & 70.1 & 0.5 \\
            ACS  & 3.0 & 1294.6 & 13.0 \\
            MEPS  & 0.3 & 11.3 & 0.1 \\
            \bottomrule
        \end{tabular}}
        \label{fig:runtime_per_step}
        \vspace{4mm}
\end{table}

\subsection{Ablation Study}
\label{subsec:ablation}

To assess the impact of our proposed optimizations on the runtime of \sysName,
we compare 5 variants: 
(1)~None, implying no optimization was applied,
(2)~No Parallel, denoting the setting where parallelization was removed,
(3)~No Cache, denoting the setting where caching was removed,
(4)~No Clustering, denoting the setting where clustering was removed,
(5)~All, denoting the \sysName setting where all optimizations were applied.

Figure~\ref{fig:optimization_tests} illustrates how the removal of individual optimization techniques affects runtime performance across three datasets: SO, ACS, and MEPS. The y-axis is plotted on a logarithmic scale to clearly illustrate differences in runtimes. As expected, without any optimization, we observe the highest runtimes across all three datasets. We also observe that \sysName, with all optimizations, yield the best runtime performance. Removing clustering increases runtimes moderately, indicating clustering provides a noticeable but relatively modest speedup. Eliminating caching also causes a moderate performance degradation. While caching was introduced to avoid
redundant computations, it offers relatively smaller gains. This is because when causal DAGs are small, modifying them may require less overhead than reading and writing them from cache. However, we observe substantial reduction in runtime using caching for the larger datasets SO and ACS. The absence of parallelization significantly worsens performance, especially on SO and ACS datasets, indicating its significant contribution in boosting the runtime performance across the board.

Overall, while all optimizations contribute to performance gains, parallelization yields the most substantial runtime improvements. Clustering also helps, though to a lesser extent. This experiment underscores the critical value of each optimization, highlighting how removal of any of them can degrade runtime efficiency.



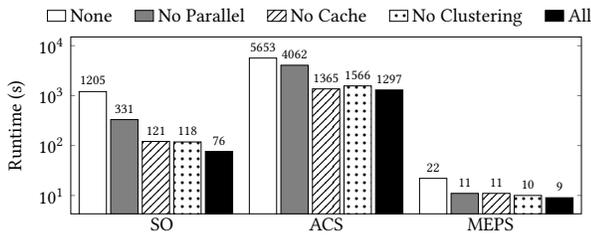
\begin{figure}[t]
\centering
\resizebox{0.95\columnwidth}{!}{
\begin{tikzpicture}
\begin{axis}[
    ybar,
    axis on top,
    bar width=0.45cm,
    width=10cm,
    height=4.5cm,
    ymin=0,
    ymax=15000,
    ymode=log,
    ytick={10,100,1000,10000},
    tick style={major tick length=2pt},
    minor tick style={draw=none},    
    enlarge x limits=false, 
    xmin=0.5,               
    xmax=3.5,               
    ylabel={Runtime (s)},
    xtick={1,2,3},
    xtick style={draw=none},
    xticklabels={SO, ACS, MEPS},
    x tick label style={rotate=0, anchor=east, yshift=-3pt, xshift=4mm},
    legend style={
        at={(-0.06,1.02)},       
        anchor=south west,  
        legend columns=5,
        draw=none,
        /tikz/every even column/.append style={column sep=0.3cm} 
    },
	legend image code/.code={
	  	\draw[draw=black,fill opacity=1] (0cm,-0.08cm) rectangle (0.3cm,0.1cm);
	},
	every node near coord/.append style={font=\scriptsize, black}
]

\addplot+[
    ybar,
    fill=black!0,
    draw=black, 
    point meta=explicit,
    nodes near coords*={
	\pgfmathprintnumber[
	    use comma=false,
	    1000 sep={}
	]{\pgfplotspointmeta}	
	},
] coordinates {
    (1, 1205) [1205]
    (2, 5653) [5653]
    (3, 22) [22]
};
\addlegendentry{None}

\addplot+[
    ybar,
    fill=black!50,
    draw=black, 
    point meta=explicit,
    nodes near coords*={
	\pgfmathprintnumber[
	    use comma=false,
	    1000 sep={}
	]{\pgfplotspointmeta}	
	},
] coordinates {
(1, 331.0) [331]
(2, 4062.0) [4062]
(3, 11.0) [11]
};
\addlegendentry{No Parallel}

\addplot+[
    ybar,
    pattern=north east lines, 
    draw=black,    
    point meta=explicit,
    nodes near coords*={
	\pgfmathprintnumber[
	    use comma=false,
	    1000 sep={}
	]{\pgfplotspointmeta}	
	},
] coordinates
{
(1, 121.0) [121]
(2, 1365.0) [1365]
(3, 11.0) [11]
};
\addlegendentry{No Cache}

\addplot+[
    ybar,
    pattern=dots,   
    pattern color=black,
    draw=black,    
    point meta=explicit,
    nodes near coords*={
	\pgfmathprintnumber[
	    use comma=false,
	    1000 sep={}
	]{\pgfplotspointmeta}	
	},
] coordinates
{
(1, 118.0) [118]
(2, 1566.0) [1566]
(3, 10.0) [10]
};
\addlegendentry{No Clustering}

\addplot+[
    ybar,
    fill=black!100,
    draw=black,    
    point meta=explicit,
    nodes near coords*={
	\pgfmathprintnumber[
	    use comma=false,
	    1000 sep={}
	]{\pgfplotspointmeta}	
	},
] coordinates {
(1, 76.0) [76]
(2, 1297.0) [1297]
(3, 9.0) [9]
};
\addlegendentry{All}
\end{axis}
\end{tikzpicture}}
\vspace{-3mm}
\caption{\small Effect of various settings of using optimization techniques on
runtime across three datasets. Note that the y-axis is in log scale.}
\vspace{3mm}
\label{fig:optimization_tests}
\end{figure}

\section{Conclusions and Future Work}
\label{sec:conc}

We have presented \sysName, a framework for discovering causal explanations for
disparities between two groups of interest. \sysName identifies data regions
where disparities are most pronounced (or reversed), and associates specific
factors that causally contribute to the disparity. We acknowledge that several
factors can influence the quality of the disparity explanations, including data
quality, the quality of the underlying causal model, and system parameters.
\newtext{In Section \ref{subsec:causal_DAG_robustness}, we showed that
meaningful results can still be obtained even with imperfect causal DAGs}. We
also provided insights on tuning system parameters to achieve satisfactory
results across different use cases and datasets.

\revb{\sysName\ currently operates on single-relation databases, assuming no dependencies among tuples. This design ensures compliance with SUTVA~\cite{rubin2005causal}, a standard assumption in causal inference~\cite{youngmann2024summarized,galhotra2022hyper,li2025fair,zeng2025causal}. While this simplifies analysis, it limits applicability to multi-relation databases where tuple dependencies naturally arise. As discussed in~\cite{youngmann2024summarized}, extending treatment and grouping patterns to multi-table settings introduces substantial complexity and remains an open research direction. Nevertheless, \sysName\ can operate on normalized data by performing joins at additional computational cost, assuming no tuple-dependencies.
We also assume that a causal DAG is provided by domain experts, a common practice in causal analysis~\cite{youngmann2024summarized,galhotra2022hyper,li2025fair,zeng2025causal}. In practice, this requirement can be alleviated by leveraging existing causal discovery algorithms~\cite{glymour2019review}. As shown in Section.~6.3 (Fig.~2c), \sysName\ remains robust even when the DAG is noisy or imperfect.
Finally, the current implementation focuses on two-group comparisons. While extending to multiple groups is conceptually straightforward, it may introduce cognitive complexity for users, reducing interpretability. Supporting multi-group and multi-relation analyses thus presents a promising direction for future work.}

\small
\bibliographystyle{ACM-Reference-Format}
\balance
\bibliography{references}

\begin{appendices}

\newpage

\section{NP-hardness Proof}\label{sec:proof}

In this part, we give the proof for showing the hardness of the decision problem defined in Section \ref{sec:framework} (proposition~\ref{prop:hard}).

\begin{figure}[h]
    \centering
    \begin{subfigure}[b]{0.45\textwidth}
        \centering
        \begin{tikzpicture}
          \foreach \i/\label/\x/\y in {1/v_1/0/0, 2/v_2/1/0, 3/v_3/2/0, 4/v_4/0/-1, 5/v_5/1/-1} {
            \coordinate (V\i) at (\x,\y);
            \node[draw,circle,inner sep=1pt] (N\i) at (V\i) {$\label$};
          }

          \foreach \i/\j/\label/\pos in {
            1/2/e_1/above, 
            2/3/e_2/above, 
            1/4/e_3/left, 
            2/5/e_4/right, 
            4/5/e_5/above} {
            \draw (N\i) -- (N\j) node[midway,\pos]{$\label$};
          }
        \end{tikzpicture}
        \caption{Input instance to the IS problem}
        \label{fig:graph}
    \end{subfigure}
    \hfill
    \begin{subfigure}[b]{0.45\textwidth}
        \centering
        \begin{tabular}{lccccc|cccc}
             ~&$v_1$& $v_2$&$v_3$&$v_4$&$v_5$&$g_1$&$g_2$&$T$&$O$\\ \cline{2-10}
             $e_1$&$ 1 $& $1$&$0$&$0$&$0$&$\star$&$\star$&$\star$&$\star$\\
             $e_2$&$ 0 $& $1$&$1$&$0$&$0$&$\star$&$\star$&$\star$&$\star$\\
             $e_3$&$ 1 $& $0$&$0$&$1$&$0$&$\star$&$\star$&$\star$&$\star$\\
             $e_4$&$ 0 $& $1$&$0$&$0$&$1$&$\star$&$\star$&$\star$&$\star$\\
             $e_5$&$ 0 $& $0$&$0$&$1$&$1$&$\star$&$\star$&$\star$&$\star$\\
        \end{tabular}
        \caption{The database $D$ of the reduction output}
        \label{fig:table}
    \end{subfigure}
    \caption{Reduction Example}
    \label{fig:example}
\end{figure}

\begin{proof}[Proof of Proposition \ref{prop:hard}]
We show a polynomial reduction from the Independent Set (IS) problem. Recall that  $\langle G= (V,E), k\rangle \in IS$ if there exists $I \subseteq V$ s.t. $|I| = k$ and $\forall v_i, v_j \in I$ there is no edge between $v_i$ and $v_j$ in $G$. Given an input instance $\langle G= (V, E), k\rangle$ to the IS problem, we create an input instance to our decision problem as follows. $D$ is a dataset with $|V|+4$ attributes and $|E|$ tuples, where there is an attribute $v_i$ for each $v_i\in V$, the attributes $g_1$, $g_2$ and an attribute $O$. There is a tuple $e_i$ for each $e_i\in E$ where $e_i[v_j] = 1 \iff v_j\in e_i$. The values in the $g_1$, $g_2$, $T$ and $O$ attributes can be assigned arbitrarily, as they do not affect the reduction correctness. Figure~\ref{fig:example} shows a simple example of the resulting database $D$. 
We set $\sigma$ to be $1$, $\tau$ to be $0$, $B$ to be $0$, and use the same $k$ value. 
We define the set of possible \facts\ to be $\{(v_i = 1, \star)\mid v_i\in V\}$. I.e. the set of possible \facts\ consist of $|V|$ \fact, for each one, the grouping pattern $\pattern_g$ is defined as $v_i = 1$ whereas the treatment pattern $\pattern_e$ can be assigned arbitrarily. Note that for $\phi_i = (v_i = 1, \star)$ and $\phi_j = (v_j = 1, \star)$ we have that $\expSim(\phi_i,\phi_j) =0$ 
if and only if there is no edge between $v_i$ and $v_j$.
Finally, the causal model \model\ can be defined arbitrarily. Clearly, this reduction is polynomial. We next show that $\langle G= (V,E), k\rangle \in IS \iff$ there exists a subset $\Phi\subseteq \{(v_i = 1, \star)\mid v_i\in V\}$, such that $|\Phi| \leq k$,  $\forall \phi_i\in \Phi,~support(\phi_i) \geq \sigma = 1$, $\forall \phi_i,\phi_j\in \Phi,~\expSim(\phi_i,\phi_j) \leq \tau = 0$ and $\sum_{\phi\in \Phi}\Delta(\phi) \geq B = 0$.

First, assume that $\langle G= (V,E), k\rangle \in IS$. Thus, there is a set $I\subseteq V$ such that $|I| = k$ and $\forall v_i, v_j\in I$ there is no edge between $v_i$ and $v_j$. We define $\Phi = \{ (v_i=1, \star) \mid v_i\in I \}$. Clearly, $\forall \phi_i\in \Phi,~support(\phi_i) \geq 1$. Note that since $I$ is an IS, $\forall \phi_i,\phi_j\in \Phi,~\expSim(\phi_i,\phi_j) = 0$, and since $\disp(\phi)\geq 0$ for every $\phi$ we have that $\sum_{\phi\in\Phi}\disp(\phi)\geq 0 = B$.


Now assume there exists a subset $\Phi\subseteq \{(v_i = 1, \star)\mid v_i\in V\}$, such that $|\Phi| \leq k$,  $\forall \phi_i\in \Phi,~support(\phi_i) \geq \sigma = 1$, $\forall \phi_i,\phi_j\in \Phi,~\expSim(\phi_i,\phi_j) = 0$ and $\sum_{\phi\in\Phi}\disp(\phi)\geq B = 0$. From the construction $\expSim(\phi_i,\phi_j) = 0$ where $\phi_i = (v_i = 1, \star)$ and $\phi_j = (v_j = 1, \star)$ only when here is no edge between $v_i$ and $v_j$. Therefore, the set $I = \{v_i\mid (v_i = 1,\star)\in \Phi\}$ is an IS of size $k$ in $G$.



\end{proof}

\section{Results from the Baselines}
\label{app:exp}

DivExplorer and FairDebugger only produced explanations for the ACS dataset, which are shown in Table~\ref{tab:example_acs_de} and  Table~\ref{tab:example_acs_rf}, respectively.

\paragraph{Top-k}
\begin{itemize}
    \item The results for the SO dataset are in Table \ref{tab:example_so_topk}.

    \item The results for the ACS dataset are in Table \ref{tab:example_acs_topk}.

    \item The results for the MEPS dataset are in Table \ref{tab:example_meps_topk}.
\end{itemize}

\paragraph{Brute-Force}
\begin{itemize}
    \item The results for the SO dataset are in Table \ref{tab:example_so_bf}.

    \item The results for the ACS dataset are in Table \ref{tab:example_acs_bf}.

    \item The results for the MEPS dataset are in Table \ref{tab:example_meps_bf}.
\end{itemize}

\begin{table*}[h!]
\centering
\resizebox{1\textwidth}{!}{
\renewcommand{\arraystretch}{1.1}
\begin{tabular}[t]{|p{80mm}|c|>{\centering}m{25mm}c|@{}>{\centering}m{25mm}c|c|}
\hline
\multirow{3}{*}{\textbf{Disparity Explanation}} & 
\multirow{3}{*}{\textbf{Support}} & 
\multicolumn{4}{c|}{\textbf{Likelihood of having a health insurance}} &
\multirow{3}{*}{\textbf{$\Delta$}} \\
\cline{3-6}
& &
\multicolumn{2}{c|}{\textbf{Subpopulation}} &
\multicolumn{2}{c|}{\textbf{Global (XInsight)}}& \\
\cline{3-6}
&& 
\textbf{Average} & \textbf{CATE} & 
\textbf{Average} & \textbf{CATE} &\\
\hline
\hline
\begin{tabular}{@{}p{80mm}}
For \colorbox{Orange!30}{\strut{White individuals from the Southern region who speak}} \colorbox{Orange!30}{\strut{Spanish}}, the likelihood of having health insurance decreases when they \colorbox{Cyan!30}{\strut{have no personal earnings}} for \colorbox{Yellow!70}{\strut{manual labor occupations}} whereas it increases for \colorbox{Lavender!70}{\strut{all occupations}}.
\end{tabular} &
8.18\% &
\multicolumn{2}{l|}{
\begin{tabular}{p{25mm}r}
\drawbar{52.12}{7.74}{\%}{Yellow}{\downarrowbold}\\
\drawbar{61.03}{4.80}{\%}{Lavender}{\uparrowbold}\\
\end{tabular}
}
&
\multicolumn{2}{l|}{
\begin{tabular}{p{25mm}r}
\drawbar{78.61}{2.65}{\%}{Yellow}{\downarrowbold}\\
\drawbar{91.58}{1.98}{\%}{Lavender}{\uparrowbold}\\
\end{tabular}
} & 0.126 \\
\hline
\begin{tabular}{@{}p{80mm}}
Among \colorbox{Orange!30}{\strut{White natives from Texas who born in USA}}, the likelihood of having health insurance decreases when they \colorbox{Cyan!30}{\strut{have no personal earnings}} for \colorbox{Yellow!70}{\strut{manual labor occupations}} whereas it increases for \colorbox{Lavender!70}{\strut{all occupations}}.
\end{tabular} &
12.21\% &
\multicolumn{2}{l|}{
\begin{tabular}{p{25mm}r}
\drawbar{71.86}{5.20}{\%}{Yellow}{\downarrowbold}\\
\drawbar{80.52}{2.83}{\%}{Lavender}{\uparrowbold}\\
\end{tabular}
}
&
\multicolumn{2}{l|}{
\begin{tabular}{p{25mm}r}
\drawbar{78.61}{2.65}{\%}{Yellow}{\downarrowbold}\\
\drawbar{91.58}{1.98}{\%}{Lavender}{\uparrowbold}\\
\end{tabular}
} & 0.08 \\
\hline
\begin{tabular}{@{}p{80mm}}
For \colorbox{Orange!30}{\strut{individuals from the Southern region}}, the likelihood of having health insurance decreases when they \colorbox{Cyan!30}{\strut{have no personal earnings}} for \colorbox{Yellow!70}{\strut{manual labor occupations}} whereas it increases for \colorbox{Lavender!70}{\strut{all occupations}}.
\end{tabular}
&
34.1\%&
\multicolumn{2}{l|}{
\begin{tabular}{p{25mm}r}
\drawbar{65.80}{5.25}{\%}{Yellow}{\downarrowbold}\\
\drawbar{75.78}{2.45}{\%}{Lavender}{\uparrowbold}\\
\end{tabular}
}
&
\multicolumn{2}{l|}{
\begin{tabular}{p{25mm}r}
\drawbar{78.61}{2.65}{\%}{Yellow}{\downarrowbold}\\
\drawbar{91.58}{1.98}{\%}{Lavender}{\uparrowbold}\\
\end{tabular}
}
& 0.077\\
\hline
\begin{tabular}{@{}p{80mm}}
For \colorbox{Orange!30}{\strut{White males from the Southern region}}, the likelihood of having health insurance decreases when they \colorbox{Cyan!30}{\strut{have no personal earnings}} for \colorbox{Yellow!70}{\strut{manual labor occupations}} whereas it increases for \colorbox{Lavender!70}{\strut{all occupations}}.
\end{tabular} &
18.00\% &
\multicolumn{2}{l|}{
\begin{tabular}{p{25mm}r}
\drawbar{67.56}{4.22}{\%}{Yellow}{\downarrowbold}\\
\drawbar{74.20}{2.68}{\%}{Lavender}{\uparrowbold}\\
\end{tabular}
}
&
\multicolumn{2}{l|}{
\begin{tabular}{p{25mm}r}
\drawbar{78.61}{2.65}{\%}{Yellow}{\downarrowbold}\\
\drawbar{91.58}{1.98}{\%}{Lavender}{\uparrowbold}\\
\end{tabular}
} & 0.069 \\
\hline
\begin{tabular}{@{}p{80mm}}
For \colorbox{Orange!30}{\strut{White natives from the South region without disabilities}}, the likelihood of having health insurance decreases when they \colorbox{Cyan!30}{\strut{have no personal earnings}} for \colorbox{Yellow!70}{\strut{manual labor occupations}} whereas it increases for \colorbox{Lavender!70}{\strut{all occupations}}.
\end{tabular} &
18.17\% &
\multicolumn{2}{l|}{
\begin{tabular}{p{25mm}r}
\drawbar{73.05}{4.17}{\%}{Yellow}{\downarrowbold}\\
\drawbar{81.50}{2.11}{\%}{Lavender}{\uparrowbold}\\
\end{tabular}
}
&
\multicolumn{2}{l|}{
\begin{tabular}{p{25mm}r}
\drawbar{78.61}{2.65}{\%}{Yellow}{\downarrowbold}\\
\drawbar{91.58}{1.98}{\%}{Lavender}{\uparrowbold}\\
\end{tabular}
} & 0.063 \\
\hline
\end{tabular}}
\caption{DivExplorer results for the ACS dataset.}
\label{tab:example_acs_de}
\end{table*}

\begin{table*}[t]
\centering
\resizebox{1\textwidth}{!}{
\renewcommand{\arraystretch}{1.1}
\begin{tabular}[t]{|p{80mm}|c|>{\centering}m{25mm}c|@{}>{\centering}m{25mm}c|c|}
\hline
\multirow{3}{*}{\textbf{Disparity Explanation}} & 
\multirow{3}{*}{\textbf{Support}} & 
\multicolumn{4}{c|}{\textbf{Likelihood of having a health insurance}} &
\multirow{3}{*}{\textbf{$\Delta$}} \\
\cline{3-6}
& &
\multicolumn{2}{c|}{\textbf{Subpopulation}} &
\multicolumn{2}{c|}{\textbf{Global (XInsight)}}& \\
\cline{3-6}
&& 
\textbf{Average} & \textbf{CATE} & 
\textbf{Average} & \textbf{CATE} &\\
\hline
\hline
\begin{tabular}{@{}p{80mm}}
For \colorbox{Orange!30}{\strut{individuals from the Southern region}}, the likelihood of having health insurance decreases when they \colorbox{Cyan!30}{\strut{have no personal earnings}} for \colorbox{Yellow!70}{\strut{manual labor occupations}} whereas it increases for \colorbox{Lavender!70}{\strut{all occupations}}.
\end{tabular} &
34.10\% &
\multicolumn{2}{l|}{
\begin{tabular}{p{25mm}r}
\drawbar{65.85}{5.25}{\%}{Yellow}{\downarrowbold}\\
\drawbar{75.78}{2.43}{\%}{Lavender}{\uparrowbold}\\
\end{tabular}
}
&
\multicolumn{2}{l|}{
\begin{tabular}{p{25mm}r}
\drawbar{78.61}{2.65}{\%}{Yellow}{\downarrowbold}\\
\drawbar{91.58}{1.98}{\%}{Lavender}{\uparrowbold}\\
\end{tabular}
} & 0.076 \\
\hline
\begin{tabular}{@{}p{80mm}}
Among \colorbox{Orange!30}{\strut{individuals who born in USA}}, the likelihood of having health insurance decreases when they \colorbox{Cyan!30}{\strut{have no personal earnings}} for \colorbox{Yellow!70}{\strut{manual labor occupations}} whereas it increases for \colorbox{Lavender!70}{\strut{all occupations}}.
\end{tabular} &
75.67\% &
\multicolumn{2}{l|}{
\begin{tabular}{p{25mm}r}
\drawbar{84.11}{3.46}{\%}{Yellow}{\downarrowbold}\\
\drawbar{88.97}{1.36}{\%}{Lavender}{\uparrowbold}\\
\end{tabular}
}
&
\multicolumn{2}{l|}{
\begin{tabular}{p{25mm}r}
\drawbar{78.61}{2.65}{\%}{Yellow}{\downarrowbold}\\
\drawbar{91.58}{1.98}{\%}{Lavender}{\uparrowbold}\\
\end{tabular}
} & 0.048 \\
\hline
\end{tabular}}
\caption{FairDebugger results for the ACS dataset.}
\label{tab:example_acs_rf}
\end{table*}

\begin{table*}[t]
\centering
\resizebox{1\textwidth}{!}{
\renewcommand{\arraystretch}{1.1}
\begin{tabular}[t]{|p{80mm}|c|>{\centering}m{30mm}c|@{}>{\centering}m{30mm}c|c|}
\hline
\multirow{3}{*}{\textbf{Disparity Explanation}} & 
\multirow{3}{*}{\textbf{Support}} & 
\multicolumn{4}{c|}{\textbf{Total Compensation (TC)}} &
\multirow{3}{*}{\textbf{$\Delta$}} \\
\cline{3-6}
& &
\multicolumn{2}{c|}{\textbf{Subpopulation}} &
\multicolumn{2}{c|}{\textbf{Global (XInsight)}}& \\
\cline{3-6}
&& 
\textbf{Average} & \textbf{CATE} & 
\textbf{Average} & \textbf{CATE} &\\
\hline
\hline
\begin{tabular}{@{}p{80mm}}
For \colorbox{Orange!30}{\strut{White males aged between 18-24}}, TC growth is more influenced by \colorbox{Cyan!30}{\strut{having coding as a hobby and learned in undergrad major}} \colorbox{Cyan!30}{\strut{computer science}} for \colorbox{Yellow!70}{\strut{analysts}} compared to \colorbox{Lavender!70}{\strut{back-end developers}}.
\end{tabular}
&
10.64\%&
\multicolumn{2}{l|}{
\begin{tabular}{p{30mm}r}
\phantom{3}\$90,909 \hspace{2mm}
\resizebox{.08\textwidth}{!}{
\raisebox{-0.3\height}{
\begin{tikzpicture}
\draw[fill=Yellow!70] (0,-25) rectangle (50, -45);
\draw[fill=Yellow!10] (50, -25) rectangle (100, -45);
\end{tikzpicture}}}
& \$219,469 \uparrowbold
\\
\phantom{3}\$67,754 \hspace{2mm}
\resizebox{.08\textwidth}{!}{
\raisebox{-0.3\height}{
\begin{tikzpicture}
\draw[fill=Lavender!70] (0,-25) rectangle (37, -45);
\draw[fill=Lavender!10] (37, -25) rectangle (100, -45);
\end{tikzpicture}}}
& \$55,233 \uparrowbold\\
\end{tabular}
}
&
\multicolumn{2}{c|}{not statistically significant}
& 0.082\\
\hline
\begin{tabular}{@{}p{80mm}}
For \colorbox{Orange!30}{\strut{heterosexual individuals whose parents attended}} \colorbox{Orange!30}{\strut{secondary school}}, TC growth is more influenced by \colorbox{Cyan!30}{\strut{the desire become manager and not being a student}} for \colorbox{Yellow!70}{\strut{analysts}} compared to \colorbox{Lavender!70}{\strut{back-end developers}}.
\end{tabular}
&
14.41\%&
\multicolumn{2}{l|}{
\begin{tabular}{p{30mm}r}
\$112,896 \hspace{2mm}
\resizebox{.08\textwidth}{!}{
\raisebox{-0.3\height}{
\begin{tikzpicture}
\draw[fill=Yellow!70] (0,-25) rectangle (64, -45);
\draw[fill=Yellow!10] (64, -25) rectangle (100, -45);
\end{tikzpicture}}}
& \$176,698 \uparrowbold
\\
\phantom{3}\$99,126 \hspace{2mm}
\resizebox{.08\textwidth}{!}{
\raisebox{-0.3\height}{
\begin{tikzpicture}
\draw[fill=Lavender!70] (0,-25) rectangle (58, -45);
\draw[fill=Lavender!10] (58, -25) rectangle (100, -45);
\end{tikzpicture}}}
& \$53,814 \uparrowbold\\
\end{tabular}
}
&
\multicolumn{2}{c|}{not statistically significant}
& 0.061\\
\hline
\begin{tabular}{@{}p{80mm}}
For \colorbox{Orange!30}{\strut{White heterosexual individuals whose parents attended}} \colorbox{Orange!30}{\strut{secondary school}}, TC growth is more influenced by \colorbox{Cyan!30}{\strut{the desire become manager}} for \colorbox{Yellow!70}{\strut{analysts}} compared to \colorbox{Lavender!70}{\strut{back-end developers}}.
\end{tabular}
&
11.14\%&
\multicolumn{2}{l|}{
\begin{tabular}{p{30mm}r}
\$129,749 \hspace{2mm}
\resizebox{.08\textwidth}{!}{
\raisebox{-0.3\height}{
\begin{tikzpicture}
\draw[fill=Yellow!70] (0,-25) rectangle (71, -45);
\draw[fill=Yellow!10] (71, -25) rectangle (100, -45);
\end{tikzpicture}}}
& \$154,024 \uparrowbold
\\
\$110,026 \hspace{2mm}
\resizebox{.08\textwidth}{!}{
\raisebox{-0.3\height}{
\begin{tikzpicture}
\draw[fill=Lavender!70] (0,-25) rectangle (61, -45);
\draw[fill=Lavender!10] (61, -25) rectangle (100, -45);
\end{tikzpicture}}}
& \$31,354 \uparrowbold\\
\end{tabular}
}
&
\multicolumn{2}{c|}{not statistically significant}
& 0.061\\
\hline
\begin{tabular}{@{}p{80mm}}
For \colorbox{Orange!30}{\strut{White individuals aged between 25-34}}, TC  increases by \colorbox{Cyan!30}{\strut{having 3-5 years of professional coding experience and}} \colorbox{Cyan!30}{\strut{spending over 12 hours on computer daily}} for \colorbox{Yellow!70}{\strut{analysts}} whereas it decreases for \colorbox{Lavender!70}{\strut{back-end developers}}.
\end{tabular}
&
34.69\%&
\multicolumn{2}{l|}{
\begin{tabular}{p{30mm}r}
\$115,777 \hspace{2mm}
\resizebox{.08\textwidth}{!}{
\raisebox{-0.3\height}{
\begin{tikzpicture}
\draw[fill=Yellow!70] (0,-25) rectangle (68, -45);
\draw[fill=Yellow!10] (68, -25) rectangle (100, -45);
\end{tikzpicture}}}
& \phantom{3}\$93,826 \uparrowbold
\\
\$105,988 \hspace{2mm}
\resizebox{.08\textwidth}{!}{
\raisebox{-0.3\height}{
\begin{tikzpicture}
\draw[fill=Lavender!70] (0,-25) rectangle (60, -45);
\draw[fill=Lavender!10] (60, -25) rectangle (100, -45);
\end{tikzpicture}}}
& \phantom{3}\$25,951 \downarrowbold\\
\end{tabular}
}
&
\multicolumn{2}{l|}{
\begin{tabular}{p{30mm}r}
\$106,542 \hspace{2mm}
\resizebox{.08\textwidth}{!}{
\raisebox{-0.3\height}{
\begin{tikzpicture}
\draw[fill=Yellow!70] (0,-25) rectangle (58, -45);
\draw[fill=Yellow!10] (58, -25) rectangle (100, -45);
\end{tikzpicture}}}
& \$55,485 \uparrowbold
\\
\phantom{3}\$96,609 \hspace{2mm}
\resizebox{.08\textwidth}{!}{
\raisebox{-0.3\height}{
\begin{tikzpicture}
\draw[fill=Lavender!70] (0,-25) rectangle (53, -45);
\draw[fill=Lavender!10] (53, -25) rectangle (100, -45);
\end{tikzpicture}}}
& \$14,854 \downarrowbold\\
\end{tabular}
}
& 0.059\\
\hline
\begin{tabular}{@{}p{80mm}}
For \colorbox{Orange!30}{\strut{White heterosexual males whose parents attended}} \colorbox{Orange!30}{\strut{secondary school}}, TC growth is more influenced by \colorbox{Cyan!30}{\strut{the desire become manager}} for \colorbox{Yellow!70}{\strut{analysts}} compared to \colorbox{Lavender!70}{\strut{back-end developers}}.
\end{tabular}
&
10.77\%&
\multicolumn{2}{l|}{
\begin{tabular}{p{30mm}r}
\$125,581 \hspace{2mm}
\resizebox{.08\textwidth}{!}{
\raisebox{-0.3\height}{
\begin{tikzpicture}
\draw[fill=Yellow!70] (0,-25) rectangle (69, -45);
\draw[fill=Yellow!10] (69, -25) rectangle (100, -45);
\end{tikzpicture}}}
& \$147,514 \uparrowbold
\\
\$109,837 \hspace{2mm}
\resizebox{.08\textwidth}{!}{
\raisebox{-0.3\height}{
\begin{tikzpicture}
\draw[fill=Lavender!70] (0,-25) rectangle (60, -45);
\draw[fill=Lavender!10] (60, -25) rectangle (100, -45);
\end{tikzpicture}}}
& \$32,445 \uparrowbold\\
\end{tabular}
}
&
\multicolumn{2}{c|}{not statistically significant}
& 0.057\\
\hline
\end{tabular}}
\caption{Top-k results for the SO dataset.}
\label{tab:example_so_topk}
\end{table*}

\begin{table*}[t]
\centering
\resizebox{1\textwidth}{!}{
\renewcommand{\arraystretch}{1.1}
\begin{tabular}[t]{|p{80mm}|c|>{\centering}m{25mm}c|@{}>{\centering}m{25mm}c|c|}
\hline
\multirow{3}{*}{\textbf{Disparity Explanation}} & 
\multirow{3}{*}{\textbf{Support}} & 
\multicolumn{4}{c|}{\textbf{Likelihood of having a health insurance}} &
\multirow{3}{*}{\textbf{$\Delta$}} \\
\cline{3-6}
& &
\multicolumn{2}{c|}{\textbf{Subpopulation}} &
\multicolumn{2}{c|}{\textbf{Global (XInsight)}}& \\
\cline{3-6}
&& 
\textbf{Average} & \textbf{CATE} & 
\textbf{Average} & \textbf{CATE} &\\
\hline
\hline
\begin{tabular}{@{}p{80mm}}
For \colorbox{Orange!30}{\strut{natives from the Southern region who were born in USA}}, the likelihood of having health insurance increases when they \colorbox{Cyan!30}{\strut{didn't report absence from work and didn't attend school in}} \colorbox{Cyan!30}{\strut{the last 3 months}} for \colorbox{Yellow!70}{\strut{manual labor occupations}}, whereas it decreases for \colorbox{Lavender!70}{\strut{all occupations}}.
\end{tabular} &
25.09\% &
\multicolumn{2}{l|}{
\begin{tabular}{p{25mm}r}
\drawbar{72.77}{9.58}{\%}{Yellow}{\uparrowbold}\\
\drawbar{81.24}{4.22}{\%}{Lavender}{\downarrowbold}\\
\end{tabular}
}
&
\multicolumn{2}{l|}{
\begin{tabular}{p{25mm}r}
\drawbar{78.61}{5.47}{\%}{Yellow}{\uparrowbold}\\
\drawbar{91.58}{0.50}{\%}{Lavender}{\downarrowbold}\\
\end{tabular}
} & 0.138 \\
\hline
\begin{tabular}{@{}p{80mm}}
Among \colorbox{Orange!30}{\strut{individuals from the Southern region who were born}} \colorbox{Orange!30}{\strut{in USA}}, the likelihood of having health insurance increases when they \colorbox{Cyan!30}{\strut{didn't report absence from work and didn't attend school in}} \colorbox{Cyan!30}{\strut{the last 3 months}} for \colorbox{Yellow!70}{\strut{manual labor occupations}}, whereas it decreases for \colorbox{Lavender!70}{\strut{all occupations}}.
\end{tabular} &
25.09\% &
\multicolumn{2}{l|}{
\begin{tabular}{p{25mm}r}
\drawbar{72.77}{9.58}{\%}{Yellow}{\uparrowbold}\\
\drawbar{81.24}{4.22}{\%}{Lavender}{\downarrowbold}\\
\end{tabular}
}
&
\multicolumn{2}{l|}{
\begin{tabular}{p{25mm}r}
\drawbar{78.61}{5.47}{\%}{Yellow}{\uparrowbold}\\
\drawbar{91.58}{0.50}{\%}{Lavender}{\downarrowbold}\\
\end{tabular}
}
 & 0.138 \\
\hline
\begin{tabular}{@{}p{80mm}}
For \colorbox{Orange!30}{\strut{White individuals from the Southern region who speak}} \colorbox{Orange!30}{\strut{Spanish}}, the likelihood of having health insurance decreases when they \colorbox{Cyan!30}{\strut{have no personal earnings}} for \colorbox{Yellow!70}{\strut{manual labor occupations}}, whereas it increases for \colorbox{Lavender!70}{\strut{all occupations}}.
\end{tabular}
&
8.18\%&
\multicolumn{2}{l|}{
\begin{tabular}{p{25mm}r}
\drawbar{52.12}{7.74}{\%}{Yellow}{\downarrowbold}\\
\drawbar{61.03}{4.87}{\%}{Lavender}{\uparrowbold}\\
\end{tabular}
}
&
\multicolumn{2}{l|}{
\begin{tabular}{p{25mm}r}
\drawbar{78.61}{2.65}{\%}{Yellow}{\downarrowbold}\\
\drawbar{91.58}{1.98}{\%}{Lavender}{\uparrowbold}\\
\end{tabular}
}
& 0.126\\
\hline
\begin{tabular}{@{}p{80mm}}
For \colorbox{Orange!30}{\strut{individuals from the Southern region who speak Spanish}}, the likelihood of having health insurance decreases when they \colorbox{Cyan!30}{\strut{have no personal earnings}} for \colorbox{Yellow!70}{\strut{manual labor occupations}}, whereas it increases for \colorbox{Lavender!70}{\strut{all occupations}}.
\end{tabular} &
10.30\% &
\multicolumn{2}{l|}{
\begin{tabular}{p{25mm}r}
\drawbar{51.48}{6.63}{\%}{Yellow}{\downarrowbold}\\
\drawbar{59.88}{3.87}{\%}{Lavender}{\uparrowbold}\\
\end{tabular}
}
&
\multicolumn{2}{l|}{
\begin{tabular}{p{25mm}r}
\drawbar{78.61}{2.65}{\%}{Yellow}{\downarrowbold}\\
\drawbar{91.58}{1.98}{\%}{Lavender}{\uparrowbold}\\
\end{tabular}
}
 & 0.105 \\
\hline
\begin{tabular}{@{}p{80mm}}
For \colorbox{Orange!30}{\strut{White individuals from Texas, Southern region who were}} \colorbox{Orange!30}{\strut{born in USA}}, the likelihood of having health insurance decreases when they \colorbox{Cyan!30}{\strut{have no personal earnings}} for \colorbox{Yellow!70}{\strut{manual labor occupations}}, whereas it increases for \colorbox{Lavender!70}{\strut{all occupations}}.
\end{tabular} &
12.21\% &
\multicolumn{2}{l|}{
\begin{tabular}{p{25mm}r}
\drawbar{71.86}{5.20}{\%}{Yellow}{\downarrowbold}\\
\drawbar{80.52}{2.83}{\%}{Lavender}{\uparrowbold}\\
\end{tabular}
}
&
\multicolumn{2}{l|}{
\begin{tabular}{p{25mm}r}
\drawbar{78.61}{2.65}{\%}{Yellow}{\downarrowbold}\\
\drawbar{91.58}{1.98}{\%}{Lavender}{\uparrowbold}\\
\end{tabular}
} & 0.080 \\
\hline
\end{tabular}}
\caption{Top-k  results for the ACS dataset.}
\label{tab:example_acs_topk}
\end{table*}

\begin{table*}[t]
\centering
\resizebox{1\textwidth}{!}{
\renewcommand{\arraystretch}{1.1}
\begin{tabular}[t]{|p{80mm}|c|>{\centering}m{25mm}c|@{}>{\centering}m{25mm}c|c|}
\hline
\multirow{3}{*}{\textbf{Disparity Explanation}} & 
\multirow{3}{*}{\textbf{Support}} & 
\multicolumn{4}{c|}{\textbf{Likelihood of feeling nervous frequently}} &
\multirow{3}{*}{\textbf{$\Delta$}} \\
\cline{3-6}
& &
\multicolumn{2}{c|}{\textbf{Subpopulation}} &
\multicolumn{2}{c|}{\textbf{Global (XInsight)}}& \\
\cline{3-6}
&& 
\textbf{Average} & \textbf{CATE} & 
\textbf{Average} & \textbf{CATE} &\\
\hline
\hline
\begin{tabular}{@{}p{80mm}}
For individuals \colorbox{Orange!30}{\strut{who never married, are from South region, }} \colorbox{Orange!30}{\strut{don't have doctor's recommendation to exercise, and don't}} \colorbox{Orange!30}{\strut{have Diabetes}}, the likelihood of feeling nervous frequently decreases less for \colorbox{Yellow!70}{\strut{males}} if they \colorbox{Cyan!30}{\strut{don't smoke currently}} compared to \colorbox{Lavender!70}{\strut{non-males}}.
\end{tabular} &
5.78\% &
\multicolumn{2}{l|}{
\begin{tabular}{p{25mm}r}
\drawbar{46.08}{13.06}{\%}{Yellow}{\downarrowbold}\\
\drawbar{41.71}{20.73}{\%}{Lavender}{\downarrowbold}\\
\end{tabular}
}
&
\multicolumn{2}{l|}{
\begin{tabular}{p{25mm}r}
\drawbar{37.58}{3.39}{\%}{Yellow}{\downarrowbold}\\
\drawbar{45.10}{3.94}{\%}{Lavender}{\downarrowbold}\\
\end{tabular}
} & 0.076 \\
\hline
\begin{tabular}{@{}p{80mm}}
Among \colorbox{Orange!30}{\strut{individuals who never married, are from South region,}} \colorbox{Orange!30}{\strut{don't have doctor's recommendation to exercise, and don't}} \colorbox{Orange!30}{\strut{ have Diabetes or Asthma}}, the likelihood of feeling nervous frequently decreases less for \colorbox{Yellow!70}{\strut{males}} than \colorbox{Lavender!70}{\strut{non-males}} when they \colorbox{Cyan!30}{\strut{don't smoke currently}}.
\end{tabular} &
5.3\% &
\multicolumn{2}{l|}{
\begin{tabular}{p{25mm}r}
\drawbar{46.52}{13.12}{\%}{Yellow}{\downarrowbold}\\
\drawbar{41.33}{19.36}{\%}{Lavender}{\downarrowbold}\\
\end{tabular}
}
&
\multicolumn{2}{l|}{
\begin{tabular}{p{25mm}r}
\drawbar{37.58}{3.39}{\%}{Yellow}{\downarrowbold}\\
\drawbar{45.10}{3.94}{\%}{Lavender}{\downarrowbold}\\
\end{tabular}
}
& 0.062 \\
\hline
\begin{tabular}{@{}p{80mm}}
For individuals \colorbox{Orange!30}{\strut{who never married, are from South region,}} \colorbox{Orange!30}{\strut{don't have doctor's recommendation to exercise}}, the likelihood of feeling nervous frequently decreases less for \colorbox{Yellow!70}{\strut{males}} if they \colorbox{Cyan!30}{\strut{don't smoke currently}} compared to \colorbox{Lavender!70}{\strut{non-males}}.
\end{tabular}
&
5.89\%&
\multicolumn{2}{l|}{
\begin{tabular}{p{25mm}r}
\drawbar{46.00}{12.56}{\%}{Yellow}{\downarrowbold}\\
\drawbar{41.24}{18.40}{\%}{Lavender}{\downarrowbold}\\
\end{tabular}
}
&
\multicolumn{2}{l|}{
\begin{tabular}{p{25mm}r}
\drawbar{37.58}{3.39}{\%}{Yellow}{\downarrowbold}\\
\drawbar{45.10}{3.94}{\%}{Lavender}{\downarrowbold}\\
\end{tabular}
}
& 0.058\\
\hline
\begin{tabular}{@{}p{80mm}}
For individuals \colorbox{Orange!30}{\strut{who never married, are from South region,}} \colorbox{Orange!30}{\strut{don't have doctor's recommendation to exercise, and don't}} \colorbox{Orange!30}{\strut{ have Asthma}}, the likelihood of feeling nervous frequently decreases less for \colorbox{Yellow!70}{\strut{males}} who \colorbox{Cyan!30}{\strut{do not currently smoke}} compared to \colorbox{Lavender!70}{\strut{non-males}}.
\end{tabular} &
5.38\% &
\multicolumn{2}{l|}{
\begin{tabular}{p{25mm}r}
\drawbar{46.44}{12.66}{\%}{Yellow}{\downarrowbold}\\
\drawbar{40.89}{18.38}{\%}{Lavender}{\downarrowbold}\\
\end{tabular}
}
&
\multicolumn{2}{l|}{
\begin{tabular}{p{25mm}r}
\drawbar{37.58}{3.39}{\%}{Yellow}{\downarrowbold}\\
\drawbar{45.10}{3.94}{\%}{Lavender}{\downarrowbold}\\
\end{tabular}
}
& 0.057 \\
\hline
\begin{tabular}{@{}p{80mm}}
For \colorbox{Orange!30}{\strut{White individuals who never married, aged below 29 years,}} \colorbox{Orange!30}{\strut{and don't have Asthma or Diabetes}}, the likelihood of feeling nervous frequently decreases less for \colorbox{Yellow!70}{\strut{males}} who \colorbox{Cyan!30}{\strut{haven't health insurance}} compared to \colorbox{Lavender!70}{\strut{non-males}}.
\end{tabular} &
8.41\% &
\multicolumn{2}{l|}{
\begin{tabular}{p{25mm}r}
\drawbar{49.21}{14.16}{\%}{Yellow}{\downarrowbold}\\
\drawbar{48.95}{18.48}{\%}{Lavender}{\downarrowbold}\\
\end{tabular}
}
&
\multicolumn{2}{l|}{
\begin{tabular}{p{25mm}r}
\drawbar{37.58}{6.77}{\%}{Yellow}{\downarrowbold}\\
\drawbar{45.10}{4.93}{\%}{Lavender}{\downarrowbold}\\
\end{tabular}}  
& 0.043 \\
\hline
\end{tabular}}
\caption{Top-k  results for the MEPS dataset.}
\label{tab:example_meps_topk}
\end{table*}

\begin{table*}[t]
\centering
\resizebox{1\textwidth}{!}{
\renewcommand{\arraystretch}{1.1}
\begin{tabular}[t]{|p{80mm}|c|>{\centering}m{30mm}c|@{}>{\centering}m{30mm}c|c|}
\hline
\multirow{3}{*}{\textbf{Disparity Explanation}} & 
\multirow{3}{*}{\textbf{Support}} & 
\multicolumn{4}{c|}{\textbf{Total Compensation (TC)}} &
\multirow{3}{*}{\textbf{$\Delta$}} \\
\cline{3-6}
& &
\multicolumn{2}{c|}{\textbf{Subpopulation}} &
\multicolumn{2}{c|}{\textbf{Global (XInsight)}}& \\
\cline{3-6}
&& 
\textbf{Average} & \textbf{CATE} & 
\textbf{Average} & \textbf{CATE} &\\
\hline
\hline
\begin{tabular}{@{}p{80mm}}
For \colorbox{Orange!30}{\strut{White males aged between 18-24}}, TC growth is more influenced by \colorbox{Cyan!30}{\strut{having coding as a hobby and learned in undergrad major}} \colorbox{Cyan!30}{\strut{ computer science}} for \colorbox{Yellow!70}{\strut{analysts}} compared to \colorbox{Lavender!70}{\strut{back-end developers}}.
\end{tabular}
&
10.64\%&
\multicolumn{2}{l|}{
\begin{tabular}{p{30mm}r}
\phantom{3}\$90,909 \hspace{2mm}
\resizebox{.08\textwidth}{!}{
\raisebox{-0.3\height}{
\begin{tikzpicture}
\draw[fill=Yellow!70] (0,-25) rectangle (50, -45);
\draw[fill=Yellow!10] (50, -25) rectangle (100, -45);
\end{tikzpicture}}}
& \$219,469 \uparrowbold
\\
\phantom{3}\$67,754 \hspace{2mm}
\resizebox{.08\textwidth}{!}{
\raisebox{-0.3\height}{
\begin{tikzpicture}
\draw[fill=Lavender!70] (0,-25) rectangle (37, -45);
\draw[fill=Lavender!10] (37, -25) rectangle (100, -45);
\end{tikzpicture}}}
& \$55,233 \uparrowbold\\
\end{tabular}
}
&
\multicolumn{2}{c|}{not statistically significant}
& 0.082\\
\hline
\begin{tabular}{@{}p{80mm}}
For \colorbox{Orange!30}{\strut{heterosexual individuals whose parents attended to}} \colorbox{Orange!30}{\strut{ secondary school}}, TC growth is more influenced by \colorbox{Cyan!30}{\strut{the desire become manager and not being a student}} for \colorbox{Yellow!70}{\strut{analysts}} compared to \colorbox{Lavender!70}{\strut{back-end developers}}.
\end{tabular}
&
14.41\%&
\multicolumn{2}{l|}{
\begin{tabular}{p{30mm}r}
\$112,896 \hspace{2mm}
\resizebox{.08\textwidth}{!}{
\raisebox{-0.3\height}{
\begin{tikzpicture}
\draw[fill=Yellow!70] (0,-25) rectangle (64, -45);
\draw[fill=Yellow!10] (64, -25) rectangle (100, -45);
\end{tikzpicture}}}
& \$176,698 \uparrowbold
\\
\phantom{3}\$99,126 \hspace{2mm}
\resizebox{.08\textwidth}{!}{
\raisebox{-0.3\height}{
\begin{tikzpicture}
\draw[fill=Lavender!70] (0,-25) rectangle (58, -45);
\draw[fill=Lavender!10] (58, -25) rectangle (100, -45);
\end{tikzpicture}}}
& \$53,814 \uparrowbold\\
\end{tabular}
}
&
\multicolumn{2}{c|}{not statistically significant}
& 0.061\\
\hline
\begin{tabular}{@{}p{80mm}}
For \colorbox{Orange!30}{\strut{White individuals aged between 25-34}}, TC growth increases by \colorbox{Cyan!30}{\strut{having 3-5 years of professional coding experience}} \colorbox{Cyan!30}{\strut{and spending over 12 hours on computer daily}} for \colorbox{Yellow!70}{\strut{analysts}} whereas it decreases for \colorbox{Lavender!70}{\strut{back-end developers}}.
\end{tabular}
&
34.69\%&
\multicolumn{2}{l|}{
\begin{tabular}{p{30mm}r}
\$115,777 \hspace{2mm}
\resizebox{.08\textwidth}{!}{
\raisebox{-0.3\height}{
\begin{tikzpicture}
\draw[fill=Yellow!70] (0,-25) rectangle (68, -45);
\draw[fill=Yellow!10] (68, -25) rectangle (100, -45);
\end{tikzpicture}}}
& \phantom{3}\$93,826 \uparrowbold
\\
\$105,988 \hspace{2mm}
\resizebox{.08\textwidth}{!}{
\raisebox{-0.3\height}{
\begin{tikzpicture}
\draw[fill=Lavender!70] (0,-25) rectangle (60, -45);
\draw[fill=Lavender!10] (60, -25) rectangle (100, -45);
\end{tikzpicture}}}
& \phantom{3}\$25,951 \downarrowbold\\
\end{tabular}
}
&
\multicolumn{2}{l|}{
\begin{tabular}{p{30mm}r}
\$106,542 \hspace{2mm}
\resizebox{.08\textwidth}{!}{
\raisebox{-0.3\height}{
\begin{tikzpicture}
\draw[fill=Yellow!70] (0,-25) rectangle (58, -45);
\draw[fill=Yellow!10] (58, -25) rectangle (100, -45);
\end{tikzpicture}}}
& \$55,485 \uparrowbold
\\
\phantom{3}\$96,609 \hspace{2mm}
\resizebox{.08\textwidth}{!}{
\raisebox{-0.3\height}{
\begin{tikzpicture}
\draw[fill=Lavender!70] (0,-25) rectangle (53, -45);
\draw[fill=Lavender!10] (53, -25) rectangle (100, -45);
\end{tikzpicture}}}
& \$14,854 \downarrowbold\\
\end{tabular}
}
& 0.059\\
\hline
\begin{tabular}{@{}p{80mm}}
For \colorbox{Orange!30}{\strut{White individuals}}, TC increases by \colorbox{Cyan!30}{\strut{having 3-5 professional coding years and spending over 12}} \colorbox{Cyan!30}{\strut{hours on computer daily}} for \colorbox{Yellow!70}{\strut{analysts}} whereas it decreases for \colorbox{Lavender!70}{\strut{back-end developers}}.
\end{tabular}
&
67.49\%&
\multicolumn{2}{l|}{
\begin{tabular}{p{30mm}r}
\$122,765 \hspace{2mm}
\resizebox{.08\textwidth}{!}{
\raisebox{-0.3\height}{
\begin{tikzpicture}
\draw[fill=Yellow!70] (0,-25) rectangle (67, -45);
\draw[fill=Yellow!10] (67, -25) rectangle (100, -45);
\end{tikzpicture}}}
& \phantom{3}\$63,311 \uparrowbold
\\
\$108,953 \hspace{2mm}
\resizebox{.08\textwidth}{!}{
\raisebox{-0.3\height}{
\begin{tikzpicture}
\draw[fill=Lavender!70] (0,-25) rectangle (60, -45);
\draw[fill=Lavender!10] (60, -25) rectangle (100, -45);
\end{tikzpicture}}}
& \phantom{3}\$23,068 \downarrowbold\\
\end{tabular}
}
&
\multicolumn{2}{l|}{
\begin{tabular}{p{30mm}r}
\$106,542 \hspace{2mm}
\resizebox{.08\textwidth}{!}{
\raisebox{-0.3\height}{
\begin{tikzpicture}
\draw[fill=Yellow!70] (0,-25) rectangle (58, -45);
\draw[fill=Yellow!10] (58, -25) rectangle (100, -45);
\end{tikzpicture}}}
& \$55,485 \uparrowbold
\\
\phantom{3}\$96,609 \hspace{2mm}
\resizebox{.08\textwidth}{!}{
\raisebox{-0.3\height}{
\begin{tikzpicture}
\draw[fill=Lavender!70] (0,-25) rectangle (53, -45);
\draw[fill=Lavender!10] (53, -25) rectangle (100, -45);
\end{tikzpicture}}}
& \$14,854 \downarrowbold\\
\end{tabular}
}
& 0.043\\
\hline
\begin{tabular}{@{}p{80mm}}
	For \colorbox{Orange!30}{\strut{males between the age 25-34 whose parents hold a}} \colorbox{Orange!30}{\strut{bachelor's degree}}, TC growth is more influenced by \colorbox{Cyan!30}{\strut{working in a company size between 100 - 499 workers}} for \colorbox{Yellow!70}{\strut{analysts}} compared to \colorbox{Lavender!70}{\strut{back-end developers}}.
\end{tabular}
&
13.21\%&
\multicolumn{2}{l|}{
\begin{tabular}{p{30mm}r}
\$105,694 \hspace{2mm}
\resizebox{.08\textwidth}{!}{
\raisebox{-0.3\height}{
\begin{tikzpicture}
\draw[fill=Yellow!70] (0,-25) rectangle (58, -45);
\draw[fill=Yellow!10] (58, -25) rectangle (100, -45);
\end{tikzpicture}}}
& \phantom{3}\$70,069 \uparrowbold
\\
\phantom{3}\$96,085 \hspace{2mm}
\resizebox{.08\textwidth}{!}{
\raisebox{-0.3\height}{
\begin{tikzpicture}
\draw[fill=Lavender!70] (0,-25) rectangle (53, -45);
\draw[fill=Lavender!10] (53, -25) rectangle (100, -45);
\end{tikzpicture}}}
& \phantom{3}\$19,807 \uparrowbold\\
\end{tabular}
}
&
\multicolumn{2}{c|}{not statistically significant}
& 0.025\\
\hline
\end{tabular}}
\caption{Brute-Force results for the SO dataset.}
\label{tab:example_so_bf}
\end{table*}

\begin{table*}[t]
\centering
\resizebox{1\textwidth}{!}{
\renewcommand{\arraystretch}{1.1}
\begin{tabular}[t]{|p{80mm}|c|>{\centering}m{25mm}c|@{}>{\centering}m{25mm}c|c|}
\hline
\multirow{3}{*}{\textbf{Disparity Explanation}} & 
\multirow{3}{*}{\textbf{Support}} & 
\multicolumn{4}{c|}{\textbf{Likelihood of having a health insurance}} &
\multirow{3}{*}{\textbf{$\Delta$}} \\
\cline{3-6}
& &
\multicolumn{2}{c|}{\textbf{Subpopulation}} &
\multicolumn{2}{c|}{\textbf{Global (XInsight)}}& \\
\cline{3-6}
&& 
\textbf{Average} & \textbf{CATE} & 
\textbf{Average} & \textbf{CATE} &\\
\hline
\hline
\begin{tabular}{@{}p{80mm}}
For \colorbox{Orange!30}{\strut{natives from the Southern region who were born in USA}}, the likelihood of having health insurance increases when they \colorbox{Cyan!30}{\strut{didn't report absence from work and didn't attend school in}} \colorbox{Cyan!30}{\strut{the last 3 months}} for \colorbox{Yellow!70}{\strut{manual labor occupations}}, whereas it decreases for \colorbox{Lavender!70}{\strut{all occupations}}.
\end{tabular} &
25.09\% &
\multicolumn{2}{l|}{
\begin{tabular}{p{25mm}r}
\drawbar{72.77}{9.58}{\%}{Yellow}{\uparrowbold}\\
\drawbar{81.24}{4.22}{\%}{Lavender}{\downarrowbold}\\
\end{tabular}
}
&
\multicolumn{2}{l|}{
\begin{tabular}{p{25mm}r}
\drawbar{78.61}{5.47}{\%}{Yellow}{\uparrowbold}\\
\drawbar{91.58}{0.50}{\%}{Lavender}{\downarrowbold}\\
\end{tabular}
} & 0.138 \\
\hline
\begin{tabular}{@{}p{80mm}}
For \colorbox{Orange!30}{\strut{White individuals from the Southern region who speak}} \colorbox{Orange!30}{\strut{Spanish}}, the likelihood of having health insurance decreases when they \colorbox{Cyan!30}{\strut{have no personal earnings}} for \colorbox{Yellow!70}{\strut{manual labor occupations}}, whereas it increases for \colorbox{Lavender!70}{\strut{all occupations}}.
\end{tabular}
&
8.18\%&
\multicolumn{2}{l|}{
\begin{tabular}{p{25mm}r}
\drawbar{52.12}{7.74}{\%}{Yellow}{\downarrowbold}\\
\drawbar{61.03}{4.87}{\%}{Lavender}{\uparrowbold}\\
\end{tabular}
}
&
\multicolumn{2}{l|}{
\begin{tabular}{p{25mm}r}
\drawbar{78.61}{2.65}{\%}{Yellow}{\downarrowbold}\\
\drawbar{91.58}{1.98}{\%}{Lavender}{\uparrowbold}\\
\end{tabular}
}
& 0.126\%\\
\hline
\begin{tabular}{@{}p{80mm}}
For \colorbox{Orange!30}{\strut{White individuals from Texas or Southern region, who}} \colorbox{Orange!30}{\strut{were born in USA}}, the likelihood of having health insurance decreases when they \colorbox{Cyan!30}{\strut{have no personal earnings}} for \colorbox{Yellow!70}{\strut{manual labor occupations}}, whereas it increases for \colorbox{Lavender!70}{\strut{all occupations}}.
\end{tabular} &
12.21\% &
\multicolumn{2}{l|}{
\begin{tabular}{p{25mm}r}
\drawbar{71.86}{5.20}{\%}{Yellow}{\downarrowbold}\\
\drawbar{80.52}{2.8}{\%}{Lavender}{\uparrowbold}\\
\end{tabular}
}
&
\multicolumn{2}{l|}{
\begin{tabular}{p{25mm}r}
\drawbar{78.61}{2.65}{\%}{Yellow}{\downarrowbold}\\
\drawbar{91.58}{1.98}{\%}{Lavender}{\uparrowbold}\\
\end{tabular}
} & 0.080 \\
\hline
\begin{tabular}{@{}p{80mm}}
Among \colorbox{Orange!30}{\strut{White males from the Southern region}}, the likelihood of having health insurance decreases when they \colorbox{Cyan!30}{\strut{have no personal earnings}} for \colorbox{Yellow!70}{\strut{manual labor occupations}}, whereas it increases for \colorbox{Lavender!70}{\strut{all occupations}}.
\end{tabular} &
18.00\% &
\multicolumn{2}{l|}{
\begin{tabular}{p{25mm}r}
\drawbar{67.56}{4.19}{\%}{Yellow}{\downarrowbold}\\
\drawbar{74.20}{2.68}{\%}{Lavender}{\uparrowbold}\\
\end{tabular}
}
&
\multicolumn{2}{l|}{
\begin{tabular}{p{25mm}r}
\drawbar{78.61}{2.65}{\%}{Yellow}{\downarrowbold}\\
\drawbar{91.58}{1.98}{\%}{Lavender}{\uparrowbold}\\
\end{tabular}
}
& 0.068 \\
\hline
\begin{tabular}{@{}p{80mm}}
Among \colorbox{Orange!30}{\strut{natives who were born in USA}}, the likelihood of having health insurance increases when they \colorbox{Cyan!30}{\strut{didn't report absence from work and didn't attend school in}} \colorbox{Cyan!30}{\strut{the last 3 months}} for \colorbox{Yellow!70}{\strut{manual labor occupations}}, whereas it decreases for \colorbox{Lavender!70}{\strut{all occupations}}.
\end{tabular} &
75.67\% &
\multicolumn{2}{l|}{
\begin{tabular}{p{25mm}r}
\drawbar{84.11}{4.95}{\%}{Yellow}{\uparrowbold}\\
\drawbar{88.97}{1.91}{\%}{Lavender}{\downarrowbold}\\
\end{tabular}
}
&
\multicolumn{2}{l|}{
\begin{tabular}{p{25mm}r}
\drawbar{78.61}{5.47}{\%}{Yellow}{\uparrowbold}\\
\drawbar{91.58}{0.50}{\%}{Lavender}{\downarrowbold}\\
\end{tabular}
} & 0.068 \\
\hline
\end{tabular}}
\caption{Brute-Force results for the ACS dataset.}
\label{tab:example_acs_bf}
\end{table*}

\begin{table*}[t]
\centering
\resizebox{1\textwidth}{!}{
\renewcommand{\arraystretch}{1.1}
\begin{tabular}[t]{|p{80mm}|c|>{\centering}m{25mm}c|@{}>{\centering}m{25mm}c|c|}
\hline
\multirow{3}{*}{\textbf{Disparity Explanation}} & 
\multirow{3}{*}{\textbf{Support}} & 
\multicolumn{4}{c|}{\textbf{Likelihood of feeling nervous frequently}} &
\multirow{3}{*}{\textbf{$\Delta$}} \\
\cline{3-6}
& &
\multicolumn{2}{c|}{\textbf{Subpopulation}} &
\multicolumn{2}{c|}{\textbf{Global (XInsight)}}& \\
\cline{3-6}
&& 
\textbf{Average} & \textbf{CATE} & 
\textbf{Average} & \textbf{CATE} &\\
\hline
\hline
\begin{tabular}{@{}p{80mm}}
For individuals \colorbox{Orange!30}{\strut{who never married, are from the Southern}} \colorbox{Orange!30}{\strut{region, don't have a doctor's recommendation to exercise, and}} \colorbox{Orange!30}{\strut{ aren't diagnosed with Diabetes}}, the likelihood of feeling nervous frequently decreases less for \colorbox{Yellow!70}{\strut{males}} compared to \colorbox{Lavender!70}{\strut{non-males}} if they \colorbox{Cyan!30}{\strut{do not smoke currently}}.
\end{tabular} &
5.78\% &
\multicolumn{2}{l|}{
\begin{tabular}{p{25mm}r}
\drawbar{46.08}{13.06}{\%}{Yellow}{\downarrowbold}\\
\drawbar{41.71}{20.73}{\%}{Lavender}{\downarrowbold}\\
\end{tabular}
}
&
\multicolumn{2}{l|}{
\begin{tabular}{p{25mm}r}
\drawbar{37.58}{3.39}{\%}{Yellow}{\downarrowbold}\\
\drawbar{45.10}{3.94}{\%}{Lavender}{\downarrowbold}\\
\end{tabular}
} & 0.076 \\
\hline
\begin{tabular}{@{}p{80mm}}
Among \colorbox{Orange!30}{\strut{White individuals who never married, are under 29,}} \colorbox{Orange!30}{\strut{and don't have Asthma or Diabetes}}, the likelihood of feeling nervous frequently decreases less for \colorbox{Yellow!70}{\strut{males}} than \colorbox{Lavender!70}{\strut{non-males}} if they are \colorbox{Cyan!30}{\strut{uninsured for the whole year}}.
\end{tabular} &
8.41\% &
\multicolumn{2}{l|}{
\begin{tabular}{p{25mm}r}
\drawbar{49.20}{14.16}{\%}{Yellow}{\downarrowbold}\\
\drawbar{48.90}{18.48}{\%}{Lavender}{\downarrowbold}\\
\end{tabular}
}
&
\multicolumn{2}{l|}{
\begin{tabular}{p{25mm}r}
\drawbar{37.58}{7.02}{\%}{Yellow}{\downarrowbold}\\
\drawbar{45.10}{4.86}{\%}{Lavender}{\downarrowbold}\\
\end{tabular}
} & 0.043 \\
\hline
\begin{tabular}{@{}p{80mm}}
For individuals \colorbox{Orange!30}{\strut{who never married, don't have a}} \colorbox{Orange!30}{\strut{recommendation from the doctor to exercise, and were born in}} \colorbox{Orange!30}{\strut{ USA}}, the likelihood of feeling nervous frequently increases more for \colorbox{Yellow!70}{\strut{males}} compared to \colorbox{Lavender!70}{\strut{non-males}} if they \colorbox{Cyan!30}{\strut{have private insurance}}.
\end{tabular}
&
14.25\%&
\multicolumn{2}{l|}{
\begin{tabular}{p{25mm}r}
\drawbar{45.90}{10.84}{\%}{Yellow}{\uparrowbold}\\
\drawbar{45.70}{7.51}{\%}{Lavender}{\uparrowbold}\\
\end{tabular}
}
&
\multicolumn{2}{l|}{
\begin{tabular}{p{25mm}r}
\drawbar{37.58}{4.63}{\%}{Yellow}{\uparrowbold}\\
\drawbar{45.10}{5.76}{\%}{Lavender}{\uparrowbold}\\
\end{tabular}
}
& 0.033\\
\hline
\begin{tabular}{@{}p{80mm}}
For \colorbox{Orange!30}{\strut{white individuals who never married, don't have doctor's}} \colorbox{Orange!30}{\strut{recommendation to exercise, and aren't diagnosed with}} \colorbox{Orange!30}{\strut{ Asthma}}, the likelihood of feeling nervous frequently decreases less for \colorbox{Yellow!70}{\strut{males}} compared to \colorbox{Lavender!70}{\strut{non-males}} if they are \colorbox{Cyan!30}{\strut{uninsured the whole year}}.
\end{tabular} &
9.74\% &
\multicolumn{2}{l|}{
\begin{tabular}{p{25mm}r}
\drawbar{47.25}{12.52}{\%}{Yellow}{\downarrowbold}\\
\drawbar{46.64}{14.09}{\%}{Lavender}{\downarrowbold}\\
\end{tabular}
}
&
\multicolumn{2}{l|}{
\begin{tabular}{p{25mm}r}
\drawbar{37.58}{7.02}{\%}{Yellow}{\downarrowbold}\\
\drawbar{45.10}{4.86}{\%}{Lavender}{\downarrowbold}\\
\end{tabular}
} & 0.015 \\
\hline
\begin{tabular}{@{}p{80mm}}
For \colorbox{Orange!30}{\strut{individuals aged between 30--42, and aren't diagnosed with}} \colorbox{Orange!30}{\strut{Diabetes}}, the likelihood of feeling nervous frequently increases more for \colorbox{Yellow!70}{\strut{males}} compared to \colorbox{Lavender!70}{\strut{non-males}} if they \colorbox{Cyan!30}{\strut{have health insurance}}.
\end{tabular} &
19.64\% &
\multicolumn{2}{l|}{
\begin{tabular}{p{25mm}r}
\drawbar{43.93}{\phantom{3}7.47}{\%}{Yellow}{\uparrowbold}\\
\drawbar{43.57}{\phantom{3}6.47}{\%}{Lavender}{\uparrowbold}\\
\end{tabular}
}
&
\multicolumn{2}{l|}{
\begin{tabular}{p{25mm}r}
\drawbar{37.58}{2.34}{\%}{Yellow}{\uparrowbold}\\
\drawbar{45.10}{4.54}{\%}{Lavender}{\uparrowbold}\\
\end{tabular}
} & 0.010 \\
\hline
\end{tabular}
}
\caption{Brute-Force results for the MEPS dataset.}
\label{tab:example_meps_bf}
\end{table*}

\end{appendices}

\end{document}